% ----------------------------------------------------------------
% AMS-LaTeX Paper ************************************************
% **** -----------------------------------------------------------
\documentclass[a4paper,reqno,12pt,english]{amsart} %reqno permite la numeraciï¿½n de las fï¿½rmulas a la derecha
\usepackage[T1]{fontenc}
%% cambia spanish por cualquier otro idioma (english, french, german,...)
%\usepackage[spanish]{babel}
%% cambia cp1252 (windows) por latin1 (linux), applemac (mac), cp850 (msdos)
%\usepackage[cp1252]{inputenc}
\usepackage{graphicx}
\usepackage[all]{xy}
\usepackage{enumerate}
\usepackage{array}
\usepackage{amssymb}
\usepackage[cp1252]{inputenc}
\usepackage{amscd}
\usepackage{hyperref}
% ----------------------------------------------------------------
\vfuzz2pt % Don't report over-full v-boxes if over-edge is small
\hfuzz2pt % Don't report over-full h-boxes if over-edge is small
% THEOREMS -------------------------------------------------------
\theoremstyle{plain}
\newtheorem{prop}{Proposition}[section]
\newtheorem{teo}[prop]{Theorem}
\newtheorem{cor}[prop]{Corollary}
\newtheorem{lema}[prop]{Lemma}
\newtheorem{defn}[prop]{Definition}
\newtheorem{obs}[prop]{Remark}
\theoremstyle{definition}
\newtheorem{ejem}{Example}

\theoremstyle{remark}

\numberwithin{equation}{section}

% MATH -----------------------------------------------------------
\newcommand{\St}{St(x,\Gamma_{n})}
\newcommand{\ef}{\mathbf{\Gamma}} %estructura fractal

% ----------------------------------------------------------------
%\deactivatequoting %para evitar problemas con \usepackage[spanish]{babel}
\begin{document}

\title[Fractal Dimension for Fractal Structures]{Fractal Dimension for Fractal Structures}
\author{M.A Sánchez-Granero and M. Fernández-Martínez}
\curraddr{Area of Geometry and Topology \\ Faculty of
Science \\
Universidad de Almer\'{\i}a \\ 04071 Almer\'{\i}a \\ Spain}
\email{\url{misanche@ual.es} \and \url{fmm124@ual.es}}
\thanks{The first author acknowledges the support of the Spanish Ministry
of Science and Innovation, grant MTM2009-12872-C02-01.}

\subjclass[2010]{Primary 28A80; Secondary 68Q55, 54E35}

\keywords{Fractal, fractal structure, generalized-fractal space, fractal dimension, self-
similar set, box-counting dimension, open set condition, domain of words, regular expression}

%\date{\today}

%\dedicatory{}

\begin{abstract}
The main goal of this paper has a double purpose. On the one hand,
we propose a new definition in order to compute the fractal dimension of a subset respect to any fractal structure, which completes the theory of classical box-counting dimension. Indeed, if we select the so called \emph{natural fractal structure} on each euclidean space, then we will get the box-counting dimension as a particular case. Recall that box-counting dimension could be calculated over any euclidean space, although it can be defined over any metrizable one. Nevertheless, the new definition we present can be computed on an easy way over any space admitting a fractal structure. Thus, since a space is metrizable if and only if it supports a starbase fractal structure, our model allows to classify and distinguish a much larger number of topological spaces than the classical definition.\newline

On the other hand, our aim consists also of studying some applications of effective calculation of the fractal dimension over a kind of contexts where the box-counting dimension has no sense, like the domain of words, which appears when modeling the streams of information in Kahn's parallel computation model. In this way, we show how to calculate and understand the fractal dimension value obtained for a language generated by means of a regular expression, and also we pay attention to an empirical and novel application of fractal dimension to natural languages.
%This paper provides two approaches for a definition of fractal dimension for a fractal structure. We prove that both approaches generalize the box-counting dimension, which enables the use of a larger number of fractal structures in order to compute the fractal dimension than the classical one allows. Furthermore, we study the fractal dimension with respect to a natural fractal structure defined in any self-similar set, and we also show the easiness of its effective calculation.\newline
%
%We also study some applications of this new tool to the effective calculation of the fractal dimension. In this way, we consider a suitable fractal structure in the context of a domain of words, and we show how the fractal dimension of any language can be calculated and understood, and furthermore, we also present the effective calculation of the fractal dimension of a language generated by means of a regular expression.

\end{abstract}

\maketitle
\parskip 2ex

% ----------------------------------------------------------------
\section{Introduction}
Since the concept of fractal was introduced by Benoît Mandelbrot at early seventies, the study and analysis of this kind of non-linear objects has become more and more important. In this way, fractals have been applied to a diverse spectrum of fields in science, such as the diagnosis of diseases (like osteoporosis (\cite{RUT92}) or cancer (\cite{BAI00}), dynamical systems (\cite{COL10}),
ecology (\cite{BAR03}), earthquakes (\cite{HIR89}), detection of
eyes in human face images (\cite{LIN01}), and
the analysis of the human retina (\cite{LAN03}), just to name a few.
Furthermore, topology allows the study of fractals from both theoretical and applied points of view, by means of fractal structures, which were first sketched in \cite{BA92}.
In this way, the introduction of fractal structures has allowed to formalize some topics on fractal theory, and its use leads to connect diverse and interesting subjects on general topology like transitive quasi-uniformities, non-archimedean quasimetrization, metrization, topological and fractal dimensions, self-similar sets and even space filling curves (see \cite{SG10}).
Indeed, one of the main tools applied to the study of fractals is the fractal dimension, understood as the classical \emph{box-counting dimension}, since it is a single quantity which offers some information about the complexity of a given set.

One of the purposes of this paper consists of providing a new definition in order to calculate the fractal dimension of a set respect to any fractal structure. With this in mind, we examine some properties of that definition and relate it with the box-counting dimension where the latter can be defined, by providing some conditions over the elements of the fractal structure we select. In this way, we also study the new fractal dimension over self-similar sets, which constitute a kind of fractals which always present a fractal structure in a natural way.

Recall that a fractal structure is a countable collection of
coverings which constitutes an approximation of the whole space by a discrete sequence of
levels. Thus, it is the perfect place to provide a definition of fractal dimension.
The first notion of dimension we propose depends only on the fractal structure we select but not on any metric. Notice that it may seems counterintuitive at a first glance, since spaces as the Cantor set have dimension one with respect to some fractal structure, but this fact is a consequence of depending only on the fractal structure and not on any metric we can consider on the space. We explain this situation in remark \ref{obs:5}.
The second notion of dimension we propose depends not only on the fractal structure but also on a metric or a distance function.
In this way, the second version of fractal dimension agrees with the first one when working with the semimetric associated with the fractal structure. Thus, we can use the second notion if we need to consider the size of the elements of each level of the fractal structure.

On the other hand, one of the main advantages of box-counting dimension consists of the easiness of its effective calculation and empirical estimation over the euclidean spaces, though it can also be defined over the metrizable ones. However, this classical definition does not have so good analytical properties as other definitions present, like the Hausdorff dimension. Moreover, our new definitions in order to compute the fractal dimension of a subset respect to any fractal structure can be easily calculated over any space admitting a fractal structure. In this way, recall that a topological space is metrizable if and only if it admits a starbase fractal structure (\cite[Theorem 4.2]{SG02}), so that the new definitions allow to classify and distinguish a larger volume of topological spaces than the box-counting model.
Note that the box-counting dimension is just a particular case of the new fractal dimension definitions, since it suffices with taking the \emph{natural fractal structure} on any euclidean space which we formally introduce in section \ref{sec:3}.

The other main goal that this paper provides consists of computing the fractal dimension in interesting contexts where the box-counting dimension has no sense, as the domain of words which appears when modeling the streams of information in Kahn's parallel computation model (see \cite{KAH74} and \cite{MAT94}). Thus, we calculate and explain the fractal dimension of a language generated by means of a regular expression where infinite length words could exist. Moreover, we present another empirical application of fractal dimension to the study of the fractal complexity of any natural language, which allows to compare natural languages, and even to analyze the variety of words used in any text or book written in any language. It is also possible to quantify the complexity of a translation of any text respect to its original version. Finally, we show how fractal dimension is a useful tool in order to study the efficiency of an encoding system as the standard BCD.

%On the other hand, both fractal dimensions are easy to calculate and can be used in practical applications, even in those were the box-counting dimension has no sense. We give an example of this by showing how to calculate the fractal dimension of a language as a subspace of the domain of words.

\section{Preliminaries}
Let's start with some preliminary topics.

\subsection{Fractal structures, self-similar sets and quasipseudometrics}\label{sec:1}
The main purpose of this section consists of recalling some notations and basic notions that will be useful in this paper.

In this way, the key concept we are going to use is about \emph{fractal structures}. Nevertheless, although a more natural use of them is in the study of fractals, and in particular self-similar sets (see \cite{SGnd}), its introduction was first motivated in order to characterize non-archimedeanly quasimetrization (see \cite{SG99}).
%Nevertheless, a more natural use of such structures is in the study of fractal sets, and in particular, self-similar sets (see \cite{SGnd}).
The use of fractal structures provides a powerful tool in order to study new models for a fractal dimension definition, since they will allow to distinguish and classify a larger volume of spaces than by using the classical definitions of fractal dimension (which will be obtained as a particular case), that only work over metrizable spaces.
%Indeed, the use of fractal structures when defining a fractal dimension will allow to distinguish and classify a larger volume of spaces,
So that, these kind of topological spaces constitutes a perfect place in order to develop a theory on fractal dimension.

%In this paper, all topological spaces are assumed to be $T_{0}$.\newline
%Next, we introduce some notations and recall basic notions.
%In this way, we advance the key concept we are going to use is about the generalized fractal-spaces or GF-spaces for short, whose utilization while defining fractal dimension will allow us to calculate this significant quantity relative to any subset of a topological space. Taking it into account, we can get a more general model in order to determine the fractal dimension of a set that working only over the euclidean or metrizable spaces where box-counting and Hausdorff dimensions have been defined.

Let $\Gamma$ be a covering of $X$. Thus, we will denote $St(x,\Gamma)=\cup\{A\in
\Gamma:x\in A\}$ and $U_{x \Gamma}=X \setminus \cup \{A\in
\Gamma:x\notin A\}$.
Furthermore, if $\ef=\{\Gamma_{n}:n\in \mathbb{N}\}$ is a countable family of coverings of
$X$, then we will denote $U_{xn}=U_{x
\Gamma_n}$, $\mathcal{U}_{x}^{\ef}=\{U_{xn}:n\in \mathbb{N}\}$
and $St(x,\ef)=\{\St:n\in \mathbb{N}\}$ .

The next definition was introduced in \cite{SG99}.
\begin{defn}
Let $X$ be a topological space. A pre-fractal structure on $X$ is a countable
family of coverings (called levels) $\ef=\{\Gamma_{n}:n\in \mathbb{N}\}$ such
that $\mathcal{U}_{x}^{\ef}$  is an open neighborhood
base of every point $x\in X$.\newline
Moreover, if $\Gamma_{n+1}$ is a refinement of $\Gamma_{n}$ (which can be denoted by $\Gamma_{n+1}\preceq \Gamma_{n}$), such that for all $x\in A$ with $A\in \Gamma_{n}$, there exists $B\in \Gamma_{n+1}$ such that $x\in B\subseteq A$, we will say that $\ef$ is a fractal structure on $X$.\newline
If $\ef$ is a (pre-) fractal structure on $X$, then we will say that $(X,\ef)$
is a generalized (pre-) fractal space, or simply a (pre-) GF-space. If there is
no doubt about $\ef$, then we will say that $X$ is a (pre-) GF-space.
\end{defn}

\begin{obs}
Note that the levels we use in order to define a fractal structure $\ef$ are not coverings in the usual sense, since we are going to enable the possibility that could exist elements on any level of the fractal structure which can appear two times or more when determining the whole family $\ef$. For instance, $\Gamma_{1}=\{[0,\frac{1}{2}],[\frac{1}{2},1],[0,\frac{1}{2}]\}$ could be the first level of a given fractal structure $\ef$ defined over the closed unit interval.
\end{obs}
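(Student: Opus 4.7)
Strictly speaking the remark is a notational convention rather than a proposition demanding proof: the definition simply quantifies over a countable family of coverings, and admitting multisets is purely a matter of interpretation. The only piece of content that invites verification is the implicit promise that $\Gamma_{1}=\{[0,\frac{1}{2}],[\frac{1}{2},1],[0,\frac{1}{2}]\}$ genuinely extends to some fractal structure on $[0,1]$.

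The plan is to construct $\ef=\{\Gamma_{n}:n\in\mathbb{N}\}$ inductively by imitating the standard dyadic fractal structure while carrying along a ``duplicate strand''. For $n\geq 2$ I would take $\Gamma_{n}$ to consist of the $2^{n}$ closed dyadic intervals of length $2^{-n}$, together with a second tagged copy of each of the $2^{n-1}$ dyadic subintervals of $[0,\frac{1}{2}]$ at that scale. The tag records which element of $\Gamma_{1}$ an element of $\Gamma_{n}$ descends from, so that the refinement relation $\Gamma_{n+1}\preceq\Gamma_{n}$ is unambiguously specified: tagged elements at level $n$ are refined by tagged elements at level $n+1$, and untagged ones by untagged ones.

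With that bookkeeping in place I would check the two definitional requirements. The strong refinement condition reduces to the elementary fact that every closed dyadic interval at level $n$ is the union of two closed dyadic intervals at level $n+1$, which holds verbatim in both strands. For the neighborhood-base condition I would observe that $U_{xn}$ contains the interior of the union of the at most two dyadic intervals at level $n$ meeting $x$, so $\mathcal{U}_{x}^{\ef}=\{U_{xn}\}$ is an open neighborhood base at $x$ because $2^{-n}\to 0$. The one delicate point, and what I expect to be the main obstacle, is maintaining consistency of the tagging across all levels so that no element $A\in\Gamma_{n}$ (duplicate or otherwise) is orphaned in $\Gamma_{n+1}$; once the parallel-strand prescription is fixed, the duplicate copy behaves like an independent clone of the standard dyadic fractal structure and the remaining checks are routine.
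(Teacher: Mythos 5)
Your reading is right: the remark is a convention about allowing repeated elements (i.e.\ treating levels as indexed families or multisets rather than sets), and the paper offers no proof whatsoever --- the displayed $\Gamma_{1}$ is only an illustrative example, left unverified. What you add beyond the paper is the explicit check that this $\Gamma_{1}$ really does extend to a fractal structure on $[0,1]$, via the ``duplicate strand'' of tagged dyadic subintervals of $[0,\frac{1}{2}]$. That construction is sound: the strong refinement condition holds in each strand because every closed dyadic interval splits into two at the next level, and the neighborhood-base condition is unaffected by duplication since $U_{xn}=X\setminus\bigcup\{A\in\Gamma_{n}:x\notin A\}$ depends only on the underlying set of elements of $\Gamma_{n}$, not on their multiplicities --- a point worth stating explicitly, as it disposes of your ``orphaning'' worry more cleanly than the tagging bookkeeping does. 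In short, your proposal is correct and strictly more careful than the paper, which simply asserts the convention; the only mild overstatement is treating the consistency of the tags as the delicate point, when in fact duplicates are topologically invisible and only the refinement relation needs the parallel-strand prescription.
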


Note also that if $\ef$ is a pre-fractal structure, then any of its levels is a closure preserving closed covering for each(see [\cite{SG02B},Prop. 2.4]).

%CREO QUE NO ES ESTRICTAMENTE NECESARIO
%*****************************************
%On the other hand, let $\Gamma$ be a covering on a topological space $X$ and let $x\in X$. The
%order of $x$ in $\Gamma$ is defined as the number of elements of the set $\{A\in
%\Gamma: x\in A\}$ minus $1$, and then, the order of the covering $\Gamma$ is
%defined as $Ord(\Gamma)=\sup\{Ord(x,\Gamma):x\in X\}$.

%Thus, if $(X,\ef)$ is a GF-space, we define
%$Ord(\ef)=\sup\{Ord(\Gamma_{n}):n\in \mathbb{N}\}$.
%*****************************************

If $\ef$ is a fractal structure on $X$ and $St(x,\ef)$ is a neighborhood base of $x$ for all $x\in X$, we will call $\ef$ a \emph{starbase} fractal structure.
%A tiling over a topological space $X$ is a covering of $X$ by sets which are the closures of their pairwise-disjoint interiors. A fractal structure $\ef$ is said to be a tiling over $X$, if $\Gamma_{n}$ is a tiling over $X$ for all $n\in \mathbb{N}$, namely, if $A_{n}$ is the closure of the interior of $A_{n}$, for all $A_{n}\in \Gamma_{n}$, and furthermore, the sets $A_{n}\in \Gamma_{n}$ are pairwise-disjoint.
%A fractal structure $\ef$ is said to be finite
%if all levels $\Gamma_{n}$ are finite coverings.
%A fractal structure $\ef$ is
%said to be locally finite if for each level $\Gamma_{n}$ of the fractal
%structure $\ef$, we have that every point $x\in X$ belongs to a finite number
%of elements $A\in \Gamma_{n}$.
In general, if $\Gamma_{n}$ has the property $P$ for all $n\in \mathbb{N}$, and
$\ef=\{\Gamma_{n}:n\in \mathbb{N}\}$ is a fractal structure on $X$, we will say that $\ef$ is a fractal
structure with the property $P$, and that $(X,\ef)$ is a GF-space with
the property $P$.
For instance, if $\Gamma_{n}$ is a finite covering for all natural number $n$ and $\ef$ is a fractal structure on $X$, then we will say that $\ef$ is a finite fractal structure on $X$, and that $(X,\ef)$ is a finite GF-space.

%The GF-space product was also introduced in \cite{SG99}. In fact, let $\{(X_{i},\ef^{i}):i\in \mathbb{N}\}$ be a countable family of GF-spaces. The fractal structure which is induced by the countable product of the previous GF-spaces is defined as $\prod_{i\in \mathbb{N}}\ef^{i}=\{\Gamma_{n}:n \in \mathbb{N}\}$, where $\Gamma_{n}=\big\{\bigcap_{i\leq n}p_{i}^{-1}(A_{n}^{i}): A_{n}^{i}\in \Gamma_{n}^{i}\big\}$ for all $n\in \mathbb{N}$. Here, $p_{i}:\prod_{k\in \mathbb{N}}(X_{k},\ef^{k})\rightarrow (X_{i},\ef^{i})$ is the projection mapping over the GF-space on the $i$-position. Then the product GF-space will be denoted as $(\prod_{i\in \mathbb{N}}X_{i},\prod_{i\in \mathbb{N}}\ef^{i})$. In particular, if $\ef^{i}=\ef^{j}$, for all $i,j\in \{1,\ldots, k\}$, then we define $\ef^{k}=\ef\times \overset{k)}{\ldots} \times \ef$ in a natural way.
%\newline

On the other hand, we also recall the definition of self-similar set provided by Hutchinson (see \cite{HUT81}).

\begin{defn}
Let $I=\{1,\ldots, m\}$ be a finite index set and let $\{f_{i}:i\in I\}$ be a family of contractive mappings defined from a complete metric space $X$ into itself.
Then there exists a unique non-empty compact subset $K$ of $X$ such that $K=\cup_{i\in I}f_{i}(K)$, which is called a self-similar set.
\end{defn}
In classical non-linear theory, $(X,\{f_{i}:i\in I\})$ is called an iterated function scheme (which we will denote by IFS for short), and the self-similar set $K$, as the \emph{atractor} of that IFS.
Next, we provide an interesting example which describes analytically the so called Sierpinski's gasket, which is a typical example of a strict self-similar set.

\begin{ejem}\label{ejem:1}
Let $I=\{1,2,3\}$ be a finite index set and let $\{f_{i}:i\in I\}$ be a finite set of similarities over the euclidean plane which are defined by
\begin{equation}\label{eq:48}
f_{i}(x,y)=\left\{
\begin{array}{ll}
\hbox{$(\frac{x}{2},\frac{y}{2})$}\ \hfill \text{if} & \hbox{$i=1$} \\
\hbox{$f_{1}(x,y)+(\frac{1}{2},0)$}\ \hfill\text{if} & \hbox{$i=2$} \\
\hbox{$f_{1}(x,y)+(\frac{1}{4},\frac{1}{2})$}\ \hfill\text{if} & \hbox{$i=3$}
\end{array}
\right.
\end{equation}
for all $(x,y)\in \mathbb{R}^{2}$. Thus, the Sierpinski's gasket is determined on an unique way as the non-empty compact subset verifying the next Hutchinson's equation:
%\begin{equation}
$K=\cup_{i\in I}f_{i}(K)$. In this way,
%\end{equation}
note that each component $f_{i}(K)$ is a self-similar copy of the atractor of the IFS $(\mathbb{R}^{2},\{f_{i}:i\in I\})$.
\end{ejem}
To consult a procedure in order to approach self-similar sets, see \cite{EA03} and \cite{CHI10}.
Self-similar sets constitute an interesting kind of fractals that are characterized by having a
fractal structure in a natural way, which was first sketched in \cite{BA92}. Indeed, that paper becomes the origin of the term \emph{fractal structure}.
Next, we present the description of such fractal structure
(see \cite{SGnd}).

\begin{defn}
Let $I=\{1,\ldots, m\}$ be a finite
index set, and let $(X,\{f_{i}:i\in I\})$ be an IFS whose associated
self-similar set is $K$. The natural fractal structure on
$K$ can be defined as the countable family of coverings $\ef=\{\Gamma_{n}:n\in
\mathbb{N}\}$, where $\Gamma_{n}=\{f_{\omega}^{n}(K):\omega\in I^{n}\}$ for every
natural number $n$. Here for all $n\in \mathbb{N}$ and all
$\omega=\omega_{1} \ \omega_{2}\ \ldots \ \omega_{n}\in I^{n}$, we denote
$f_{\omega}^{n}=f_{\omega_{1}}\ \circ \ \ldots \ \circ \ f_{\omega_{n}}$.
\end{defn}

\begin{obs}
Another available description for this fractal structure is as follows:
$\Gamma_{1}=\{f_{i}(K):i\in I\}$ and $\Gamma_{n+1}=\{f_{i}(A): A\in \Gamma_{n},
i\in I\}$ for all $n\in \mathbb{N}$.
\end{obs}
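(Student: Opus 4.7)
The plan is to prove the remark by induction on $n \in \mathbb{N}$, showing that the recursive description given here coincides with the direct formula $\Gamma_n = \{f_\omega^n(K) : \omega \in I^n\}$ from the preceding definition.

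For the base case $n = 1$, words of length one are just elements of $I$, so the direct formula specializes to $\Gamma_1 = \{f_\omega^1(K) : \omega \in I^1\} = \{f_i(K) : i \in I\}$, which matches the recursive initialization exactly. For the inductive step, I would assume $\Gamma_n = \{f_\omega^n(K) : \omega \in I^n\}$ and apply the recursive rule to write $\Gamma_{n+1} = \{f_i(A) : A \in \Gamma_n, i \in I\} = \{f_i(f_\omega^n(K)) : i \in I, \omega \in I^n\}$. The key computation is then $f_i \circ f_\omega^n = f_i \circ f_{\omega_1} \circ \cdots \circ f_{\omega_n} = f_\tau^{n+1}$, where $\tau = i\,\omega_1\,\omega_2 \cdots \omega_n \in I^{n+1}$. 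As $i$ ranges over $I$ and $\omega$ ranges over $I^n$, the concatenated word $\tau$ exhausts $I^{n+1}$ bijectively, yielding $\Gamma_{n+1} = \{f_\tau^{n+1}(K) : \tau \in I^{n+1}\}$ as required.

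No substantive obstacle arises; the argument is essentially an unravelling of the composition notation. The only subtlety worth flagging is the ordering convention in the concatenation: one must confirm that prepending the new letter $i$ on the left of $\omega$ matches the composition order in the paper's definition of $f_\omega^n$, which places $f_{\omega_1}$ outermost. Once this convention is fixed consistently throughout, the induction closes immediately and both descriptions of the natural fractal structure agree as coverings on every level. (If one regards $\Gamma_n$ as a family rather than a set, so that repetitions are preserved, the same indexing by $I^{n+1}$ still produces identical families, so the family-versus-set distinction noted earlier in the paper causes no additional complication.)
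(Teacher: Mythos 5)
Your induction is correct and is exactly the (implicit) argument behind the remark, which the paper states without proof: the base case is immediate, and the inductive step reduces to the identity $f_i\circ f_{\omega}^{n}=f_{\tau}^{n+1}$ with $\tau=i\,\omega_{1}\cdots\omega_{n}$, with the concatenation order matching the paper's convention that $f_{\omega_{1}}$ is outermost. Nothing further is needed.
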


On example \ref{ejem:1} we described analytically the IFS whose associated
self-similar set is the Sierpinski's triangle. Next, we are going to present
the natural fractal structure associated with this strict self-similar set.
\begin{ejem}
The natural fractal structure associated with the Sierpinski's
triangle can be described as the countable family of coverings
$\ef=\{\Gamma_{n}: n\in \mathbb{N}\}$, where $\Gamma_{1}$ is the union of three
equilateral ``triangles'' with sides equal to $\frac{1}{2}$, $\Gamma_{2}$
consists of the union of $3^{2}$ equilateral ``triangles'' with sides equal to
$\frac{1}{2^{2}}$, and in general, $\Gamma_{n}$ is the union of $3^{n}$
equilateral ``triangles'' whose sides are equal to $\frac{1}{2^{n}}$.
Furthermore, this is a finite starbase %tiling
fractal structure with
$Ord(\ef)=1$.
\end{ejem}

%\begin{center}
%\begin{figure}[h]\label{fig:2}
% \centering
%   \includegraphics[width=50mm,height=40mm]{sierpins.eps}
%   \caption{Natural fractal structure associated with the Sierpinski's gasket.}
% \label{fig:ejemplo}
%\end{figure}
%\end{center}

Moreover, recall that a \emph{quasipseudometric} on a set $X$ is a non-negative
real-valued function $d$ on $X\times X$ such that for all $x,y,z\in X$, verifies
the following two conditions: (i) $d(x,x)=0$ and (ii) $d(x,y)\leq d(x,z)+d(z,y)$. If in
addition $d$ satisfies also the next one: (iii) $d(x,y)=d(y,x)=0$ iff $x=y$,
then $d$ is called a \emph{quasi-metric}. In particular, a \emph{non-archimedean
quasipseudometric} is a quasipseudometric which also verifies that $d(x,y)\leq
\max\{d(x,z),d(z,y)\}$ for all $x,y,z\in X$. Further, we have that each quasipseudometric $d$ on $X$ generates a quasiuniformity $\mathcal{U}_{d}$
on $X$ which has as a base the family of sets of the form $\{(x,y)\in X\times X:
d(x,y)<2^{-n}\}$, $n\in \mathbb{N}$. Then the topology $\tau(\mathcal{U}_{d})$
induced by $\mathcal{U}_{d}$ will be denoted simply by $\tau(d)$.
Therefore, a space $(X,\tau)$ is said to be (non-archimedeanly) quasipseudometrizable if
there exists a (non-archimedean) quasipseudometric $d$ on $X$ such that
$\tau=\tau(d)$. Recall that the theory of quasiuniform spaces is covered in \cite{FLE82}.

%\begin{center}
%\begin{figure}[h]\label{fig:2}
% \centering
%   \includegraphics[width=40mm,height=40mm]{sierpins.eps}
%   \caption{The fractal structure associated with Sierpinski's gasket in a natural way is a very good one.}
% \label{fig:ejemplo}
%\end{figure}
%\end{center}

\subsection{The classical definition of fractal dimension over euclidean spaces}\label{sec:2}

Fractal dimension is one of the main tools used in order to study fractals, since it is a single value which provides information about its complexity and the irregularities it presents when being examined with enough level of detail. In this way, fractal dimension is usually understood as the classical box-counting dimension.
The latter has been also known as \emph{information dimension, Kolmogorov entropy, capacity dimension, entropy dimension, metric dimension , \ldots} etc.
Note that the box-counting definition is \emph{better} from an applied point of view than other theoretical ones, like Hausdorff dimension, since the easiness of its effective calculation an empirical estimation. Nevertheless, box-counting dimension have not so good theoretical properties as Hausdorff dimension, for instance. The basic theory on box-counting dimension can be found in \cite{FAL90}.
Thus, the (lower/upper) box-counting dimensions of a subset $F\subset \mathbb{R}^{d}$ are given by the following (lower/upper) limit:
\begin{equation}\label{eq:81}
\dim_{B}(F)=\lim_{\delta\rightarrow 0}\frac{\log N_{\delta}(F)}{-\log \delta}
\end{equation}
where $\delta$ is the scale used in the study of $F$ and $N_{\delta}(F)$ can be calculated on an equivalent way as one of the following quantities (see \cite[Equivalent definitions 3.1]{FAL90}):
\begin{enumerate}
%NO UTILIZAMOS LAS DOS PRIMERAS
%\item the smallest number of closed balls of radius $\delta$ that cover $F$.
%\item the smallest number of cubes with sides equal to $\delta$ that cover $F$.
\item the number of $\delta-$cubes that meet $F$. Recall that a $\delta-$cube in $\mathbb{R}^{d}$ is a set of the form $[k_{1}\ \delta,(k_{1}+1)\ \delta]\times \ldots \times [k_{d}\ \delta,(k_{d}+1)\ \delta]$ where $k_{i}$ are integers for all $i\in \{1,\ldots, d\}$. \label{d:3}
\item the number of $\delta=\frac{1}{2^{n}}-$cubes that intersect $F$, with $n\in \mathbb{B}$. \label{d:4}
\item the smallest number of sets of diameter at most $\delta$ that cover $F$. \label{d:5}
\item the largest number of disjoint balls of radius $\delta$ centered on $F$. \label{d:6}
\end{enumerate}
Note also that the limit described at \ref{eq:81} can be discretized by means of $\delta=\frac{1}{2^{n}}$.
In this way, $N_{\delta}(F)$ is just the number of elements of each level $\Gamma_{n}$ of the fractal structure which meet $F$. Moreover, the box-counting dimension can be estimated as the slope of a log-log graph plotted over a suitable discrete collection of scales $\delta$.\newline
Hence a natural idea arises: we can propose a new definition of fractal dimension for any fractal structure which generalizes the classical box-counting dimension and allows to classify and distinguish a larger volume of spaces than by means of the classical box-counting dimension definition. Thus, it also results useful in order to calculate the fractal dimension over another kind of spaces, such as the non-euclidean ones, where the box-counting dimension can have no sense. Thus, we show some interesting applications of this fact on section \ref{sec:6}, where the box-counting dimension cannot been applied in the context of domains of words. Nevertheless, our new definition in order to compute the fractal dimension of a subset allows to calculate and illustrate it.\newline
On the other hand, the next remark results useful in this paper.
\begin{obs}\label{rem:F}
In order to calculate the (lower/upper) box-counting dimensions of any subset
$F$ of a space $X$, it suffices with taking limits as $\delta\rightarrow 0$ by
means of a decreasing sequence $\{\delta_{n}\}_{n\in \mathbb{N}}$ verifying that
$\delta_{n+1}\geq c\ \delta_{n}$ for all $n\in \mathbb{N}$, where $c\in (0,1)$ is a
suitable constant.
\end{obs}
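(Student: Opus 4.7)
The plan is to use a standard sandwich argument that reduces the continuous limit in the definition of $\dim_{B}(F)$ to the sequential one along $\{\delta_{n}\}$, exploiting the monotonicity of $N_{\delta}(F)$ in $\delta$. First, I would fix a sufficiently small $\delta>0$ and, using that $\{\delta_{n}\}$ is decreasing with $\delta_{n}\to 0$, pick the unique index $n$ such that $\delta_{n+1}\leq\delta\leq\delta_{n}$. Since $N_{\delta}(F)$ is non-increasing in $\delta$ (immediate for definitions \ref{d:5} and \ref{d:6}, and a short comparison for \ref{d:3}--\ref{d:4}), this yields
\[
N_{\delta_{n}}(F)\leq N_{\delta}(F)\leq N_{\delta_{n+1}}(F).
\]
Pairing with $-\log\delta_{n}\leq -\log\delta\leq -\log\delta_{n+1}$ and taking logarithms of the $N$-counts gives the two-sided bound
\[
\frac{\log N_{\delta_{n}}(F)}{-\log\delta_{n+1}}\;\leq\;\frac{\log N_{\delta}(F)}{-\log\delta}\;\leq\;\frac{\log N_{\delta_{n+1}}(F)}{-\log\delta_{n}},
\]
valid for all $\delta$ small enough to make the numerators positive.

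The second step is to show that the denominator mismatch is asymptotically negligible. From the hypothesis $\delta_{n+1}\geq c\,\delta_{n}$ I would deduce $-\log\delta_{n+1}\leq -\log\delta_{n}-\log c$, and since $-\log\delta_{n}\to\infty$, the ratio $\frac{-\log\delta_{n+1}}{-\log\delta_{n}}$ tends to $1$. Multiplying and dividing appropriately inside the sandwich, the constant $-\log c$ contributes only a $1+o(1)$ factor in each denominator. Passing to $\liminf$ and $\limsup$ as $\delta\to 0^{+}$ on the left-hand side, and as $n\to\infty$ on the right-hand side, shows that the lower/upper box-counting dimensions computed continuously coincide with those computed along the subsequence $\{\delta_{n}\}$.

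I expect the main obstacle to lie not in the sequence step itself but in ensuring that the sandwich is really uniform across the four equivalent formulations \ref{d:3}--\ref{d:6}. For set-based and ball-based counts the monotonicity is clean, but for the cube counts the value of $N_{\delta}(F)$ can behave erratically between the grid scales $\delta_{n}$ and $\delta_{n+1}$; one must replace strict monotonicity by a two-sided comparison $\alpha\, N_{\delta_{n}}(F)\leq N_{\delta}(F)\leq \beta\, N_{\delta_{n+1}}(F)$ with $\alpha,\beta>0$ depending only on $c$ and the ambient dimension, via a standard volume-packing argument. After taking logarithms these multiplicative constants contribute only $O(1)$ to the numerator, which is absorbed in the limit since $-\log\delta_{n}\to\infty$. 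A minor secondary point is that for very small $F$ one may have $\log N_{\delta}(F)\leq 0$ at coarse scales, but this concerns only finitely many initial terms and is irrelevant to the limit.
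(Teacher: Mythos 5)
Your proof is correct and is precisely the standard sandwich argument; the paper itself states this remark without proof (it is the classical fact from Falconer's book on which its box-counting preliminaries rest), so there is nothing for your argument to diverge from. The only streamlining worth noting is that, since the four formulations of $N_{\delta}(F)$ are already known to yield the same box dimension, it suffices to run your argument for the genuinely monotone count (the smallest number of sets of diameter at most $\delta$ covering $F$), which makes the volume-packing comparison for mesh cubes in your final paragraph unnecessary.
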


\section{Generalized definition of fractal dimension for a subset respect to any fractal structure}\label{sec:3}

The starting point of the fractal dimension theory for fractal structures begins by
taking into account the euclidean metrizable space $\mathbb{R}^{d}$ as well as
the equivalent definition \ref{d:4} for the quantity $N_{\delta}(F)$ that
appears at the box-counting introduction seen in subsection \ref{sec:2}.
Indeed, we can construct a fractal structure $\ef$ on any euclidean space
$\mathbb{R}^{d}$ in a natural way verifying some desirable properties: it is a locally finite
tiling starbase fractal structure which has a finite order. Indeed, it is going to be the so called natural fractal srtucture over any euclidean space, whose definition is as follows.
\begin{defn}
The natural fractal structure on the euclidean space $\mathbb{R}^{d}$ is
defined as the countable family of coverings $\ef=\{\Gamma_{n}:n\in
\mathbb{N}\}$, whose levels are given by
$\Gamma_{n}=\{[\frac{k_{1}}{2^{n}},\frac{k_{1}+1}{2^{n}}]\times
[\frac{k_{2}}{2^{n}},\frac{k_{2}+1}{2^{n}}]\times \ldots \times
[\frac{k_{d}}{2^{n}},\frac{k_{d}+1}{2^{n}}]:k_{i}\in \mathbb{Z},i\in \{1,\ldots,d\}\}$ for all $n\in \mathbb{N}$.
\end{defn}
Note that the natural fractal structure $\ef$ on the euclidean space
$\mathbb{R}^{d}$ is just the tiling consisting of
$\frac{1}{2^{n}}-$cubes for all natural number $n$ on
$\mathbb{R}^{d}$ which allows to calculate the box-counting dimension of any
subset of such space. In this way, we are going to propose a new definition of
fractal dimension on a generic GF-space $(X,\ef)$ which generalizes the
box-counting one if we use the natural fractal
structure on the euclidean space $\mathbb{R}^{d}$. However, as
happens with box-counting dimensions, our model for a generic fractal dimension
is going to be defined using upper and lower limits. Nevertheless, one of the
main advantages it presents, consists of the easiness of its effective
calculation and theoretical interpretation. Taking it into account, we can consider a larger volume of fractal structures than box-counting dimension, in order to calculate the fractal dimension of a given set.
%In order to avoid possible confusion, in this paper we are going to use the notation $\dim_{B}$ and $\dim_{\ef}^{1}$ to denote box-counting dimension and fractal dimension I respectively.
%Also, for all natural number $n$ and all subset $F\subset X$, consider the next family of elements on each level of a given fractal structure $\ef$:
The key concept of this section is defined below.
\begin{defn}\label{def:1}
Let $(X,\ef)$ be a GF-space  and let $N_{n}(F)$ be
the number of elements of $\mathcal{A}_{n}(F)$ for all $n\in \mathbb{N}$. Thus, the (lower/
upper) fractal dimensions I of a non-empty bounded subset $F$ of $X$ respectively, are defined as the (lower/upper) limit:
%\begin{equation}
%\underline{\dim}_{\ef}^{1}(F)=\varliminf_{n\rightarrow \infty}\frac{\log N_{n}(F)}{n\log 2}
%\end{equation}
%\begin{equation}
%\overline{\dim}_{\ef}^{1}(F)=\varlimsup_{n\rightarrow \infty}\frac{\log N_{n}(F)}{n\log 2}
%\end{equation}
%and if those expressions agree, we will say that their common value is the fractal dimension I of $F$, namely
\begin{equation}
\dim_{\ef}^{1}(F)=\lim_{n\rightarrow \infty}\frac{\log N_{n}(F)}{n\log 2}
\end{equation}
if this limit exists, where $\mathcal{A}_{n}(F)$ is the next family of elements on each level of the fractal structure $\ef$, for all natural number $n$:
\begin{equation}\label{eq:83}
\mathcal{A}_{n}(F)=\{A\in \Gamma_{n}:A\cap F\neq \emptyset\}
\end{equation}
\end{defn}
%Accordingly, to calculate the fractal dimension I of a subset $F\subset X$, it
%suffices with counting the number of elements on the level $\Gamma_{n}$ of the fractal structure $\ef$ which meet
%$F$ for each natural number $n$. Another advantage it presents
%with respect to the box-counting definition, consists of the fact that we have already restricted the family $\ef$ to a countable collection of coverings of the whole space
%%only need a countable family of coverings,
%while box-counting definition uses a continuum
%range of scales. Moreover,
In order to estimate the fractal dimension from a computational point of view,
we can calculate the slope of the regression line on a log-log graph, just like with the box-counting dimension estimation.\newline
%Thus, our fractal dimension I can be calculated
%on an easier way than using box-counting dimension.
%One of the first properties we can show for fractal dimension I consists of generalizing
%the box-counting dimension of every subset of $\mathbb{R}^{d}$. In fact, it suffices with taking into account the natural fractal structure on $\mathbb{R}^{d}$.\newline
%Furthermore, it is possible to describe the natural fractal structure on every euclidean space $\mathbb{R}^{n}$, using the countable product of the GF-space $(\mathbb{R},\ef)$ with itself, while considering $\ef$ as the natural fractal structure on $\mathbb{R}$. Indeed, we can consider the following
%\begin{obs}
%Let $\mathbb{R}^{d}$ be an euclidean space, and let $F\subset \mathbb{R}^{d}$. The natural fractal structure on $\mathbb{R}^{d}$ can be described as $\ef^{d}$, where $\ef=\{\Gamma_{n}:n\in \mathbb{N}\}$ is the countable family of coverings of $F$ such that $\Gamma_{n}=\{[\frac{k}{2^{n}},\frac{k+1}{2^{n}}]:k\in \mathbb{Z}\}$ for all $n\in \mathbb{N}$. Here, we obtain the corresponding associated product GF-space $(\mathbb{R}^{d},\ef^{d})$. \hfill $\cqd$
%\end{obs}
In this way, the next result allows to calculate the box-counting dimension of any
subset $F\subset \mathbb{R}^{d}$ by means of the fractal dimension I formula: it
suffices with counting the number of $\frac{1}{2^{n}}$-cubes for all $n\in \mathbb{N}$, just like the
box counting dimension with equivalent definition (\ref{d:4}) for $N_{\delta}(F)$. Thus, the proof of the following result becomes straightforward.
\begin{teo}\label{prop:1}
Let $\ef$ be the natural fractal structure on the euclidean space
$\mathbb{R}^{d}$ and let $F$ be a subset of $\mathbb{R}^{d}$. Then the (lower/upper) box-counting dimension and the (lower/upper) fractal dimension I of $F$ are equal.
\end{teo}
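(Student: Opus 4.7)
The plan is to recognize that the natural fractal structure on $\mathbb{R}^{d}$ is, level by level, exactly the collection of dyadic cubes used in the equivalent formulation of box-counting dimension. Concretely, the level $\Gamma_{n}$ of $\ef$ is precisely the family of $\delta$-cubes with $\delta = \frac{1}{2^{n}}$ in the sense of item (\ref{d:3}) from Section \ref{sec:2}. Consequently $N_{n}(F) = |\mathcal{A}_{n}(F)|$, which counts the elements of $\Gamma_{n}$ meeting $F$, agrees with the quantity $N_{\delta}(F)$ provided by the equivalent definition (\ref{d:4}) for the specific scale $\delta = 2^{-n}$.

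With this identification, I would carry out a one-line computation: for $\delta_{n} = 2^{-n}$,
\[
\frac{\log N_{\delta_{n}}(F)}{-\log \delta_{n}} \;=\; \frac{\log N_{n}(F)}{n \log 2},
\]
so the sequences whose (upper/lower) limits define the box-counting dimension of $F$ along $\{\delta_{n}\}$ and those defining $\dim_{\ef}^{1}(F)$ coincide term by term.

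To close the argument for both the lower and upper variants, I would appeal to Remark \ref{rem:F}: since $\delta_{n+1} = \frac{1}{2}\delta_{n}$ satisfies $\delta_{n+1} \geq c\,\delta_{n}$ with $c = \frac{1}{2} \in (0,1)$, the (lower/upper) box-counting dimensions of $F$ may be computed along the dyadic sequence $\delta_{n} = 2^{-n}$ rather than through the continuous limit $\delta \to 0$. Combining this reduction with the equality of ratios above yields the equality of (lower/upper) box-counting dimension and fractal dimension I of $F$, as claimed. The only mildly delicate step is this passage from the continuous limit to the discrete sequence of scales; everything else is a direct matching of definitions, which is why the authors describe the proof as straightforward.
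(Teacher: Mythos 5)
Your proposal is correct and follows exactly the route the paper intends: identify each level $\Gamma_{n}$ of the natural fractal structure with the family of $\frac{1}{2^{n}}$-cubes, so that $N_{n}(F)$ coincides with $N_{\delta}(F)$ in the equivalent definition (\ref{d:4}), and the defining quotients agree term by term along $\delta_{n}=2^{-n}$. The paper declares this proof straightforward and omits the details; your explicit appeal to Remark \ref{rem:F} to justify the passage from the continuous limit to the dyadic sequence of scales is precisely the one point worth making explicit, and you have made it.
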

Hausdorff dimension constitutes the reference of any definition of fractal dimension. In this way, we can check some of its analytical properties for the fractal dimension I definition (see \cite[Chapter 3]{FAL90}). Indeed, we have the following proposition.

\begin{prop}\label{prop:6}
Let $(X,\ef)$ be a GF-space. Then,
\begin{enumerate}
\item Both $\underline{\dim}^{1}_{\ef}$ and $\overline{\dim}^{1}_{\ef}$ are monotonic. \label{it:1}
\item $\overline{\dim}^{1}_{\ef}$ is finitely stable. \label{it:2}
\item Neither $\underline{\dim}^{1}_{\ef}$ nor $\overline{\dim}^{1}_{\ef}$ is countably stable. \label{it:3}
\item There exists countable sets $F\subset X$ such that $\dim_{\ef}^{1}(F)\neq 0$. \label{it:4}
\item There exists a locally finite tiling starbase fractal structure $\ef$ with \label{it:5}
finite order on a suitable space $X$ such that
$\dim_{\ef}^{1}(F)\neq \dim_{\ef}^{1}(\overline{F})$ for a given subset
$F\subset X$.
\end{enumerate}
\end{prop}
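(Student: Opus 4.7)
The plan is to handle the five items in order of increasing difficulty, with the real work concentrated in (5).

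For (1), I would observe that $F\subseteq G$ implies $\mathcal{A}_{n}(F)\subseteq \mathcal{A}_{n}(G)$ and hence $N_{n}(F)\leq N_{n}(G)$; the inequality passes to $\log N_{n}/(n\log 2)$ and to both $\liminf$ and $\limsup$. For (2), the set identity $\mathcal{A}_{n}(F\cup G)=\mathcal{A}_{n}(F)\cup \mathcal{A}_{n}(G)$ yields $N_{n}(F\cup G)\leq 2\max\{N_{n}(F),N_{n}(G)\}$; dividing by $n\log 2$ and taking $\limsup$ delivers the $\leq$ half of finite stability, while the $\geq$ half follows from (1).

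Items (3) and (4) can both be handled with a single example built from the natural fractal structure on $\mathbb{R}$. Any singleton $\{x\}$ meets at most two level-$n$ dyadic intervals, so $\dim^{1}_{\ef}(\{x\})=0$; while $F=\mathbb{Q}\cap[0,1]$ is dense, so $N_{n}(F)=2^{n}+1$ and $\dim^{1}_{\ef}(F)=1$ by Theorem \ref{prop:1}. This settles (4) directly, and (3) as well, since $F$ is a countable union of such singletons.

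Item (5) is the main obstacle. Theorem \ref{prop:1} rules out any Euclidean example with its natural fractal structure, because classical box-counting dimension is closure-invariant on bounded sets. My plan is to enlarge the natural fractal structure on $X=[0,1]$ by adding an exponentially growing family of singleton elements located at the terms of the convergent sequence $\{1/k\}$: at level $n$, set
\[
\Gamma_{n}=\{[k/2^{n},(k+1)/2^{n}]:0\leq k<2^{n}\}\cup\{\{1/k\}:2\leq k\leq 2^{2n}\},
\]
and take $F=(0,1)\setminus\{1/k:k\geq 2\}$, so that $\overline{F}=[0,1]$. None of the singletons meet $F$ while every one meets $\overline{F}$, which gives $N_{n}(F)=2^{n}$ and $N_{n}(\overline{F})\sim 2^{2n}$, so $\dim^{1}_{\ef}(F)=1$ whereas $\dim^{1}_{\ef}(\overline{F})=2$. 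The real technical content of the proposition is the verification that this enlarged $\ef$ satisfies every listed hypothesis: tiling (the added singletons have empty interior, so interiors remain pairwise disjoint from the dyadic cubes), locally finite (each $\Gamma_{n}$ is itself finite), finite order (at most $2$, attained when a dyadic rational coincides with some $1/k$), starbase (separately inspecting $St(x,\Gamma_{n})$ at $0$, at each $1/k$, and at generic $x$), and the refinement $\Gamma_{n+1}\preceq\Gamma_{n}$ (the singleton sets nest and the dyadic partition refines itself).
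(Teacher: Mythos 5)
Your handling of items (1)--(4) matches the paper's own proof: the authors leave (1) and (2) to the reader, and for (4) they use precisely your example $F=\mathbb{Q}\cap[0,1]$ with the induced natural fractal structure (there $N_{n}(F)=2^{n}$; your $2^{n}+1$ would come from also counting $[-1/2^{n},0]$ on $\mathbb{R}$, which changes nothing), deducing (3) from (4) exactly as you do. For item (5) you take a genuinely different route. The paper works on the comb space $X=([0,1]\times\{0\})\cup\bigcup_{m\in\mathbb{N}}(\{1/2^{m}\}\times[0,1])$, whose $n$-th level consists of the dyadic subsegments of the bar and of every tooth, and takes $F=\bigcup_{k}(1/2^{k+1},1/2^{k})\times\{0\}$, obtaining $N_{n}(F)=2^{n}$ but $N_{n}(\overline{F})=\infty$, hence dimensions $1$ versus $\infty$. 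Your example stays on $[0,1]$, keeps every level finite (so local finiteness is immediate; in fact you are cleaner on this point, since the paper's levels fail local finiteness at $(0,0)$, every neighbourhood of which meets infinitely many teeth), and your counts $N_{n}(F)=2^{n}$ and $N_{n}(\overline{F})=2^{2n}+2^{n}-1$, giving dimensions $1$ versus $2$, are correct; the refinement, starbase and finite-order verifications you list all go through. The one genuine concern is the word \emph{tiling}. The paper never defines it, but in the authors' related work a tiling fractal structure is one whose levels consist of \emph{regular closed} sets (each element equals the closure of its interior) with pairwise disjoint interiors. Your added singletons $\{1/k\}$ have empty interior in $[0,1]$, so they are not regular closed, and under that reading your $\ef$ is not a tiling, whereas the paper's nondegenerate segments in the comb space are. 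If only disjointness of interiors is meant, your construction is complete; if regular closedness is required, you would need to replace the singletons by something like the paper's comb, where the exponential proliferation of elements meeting $\overline{F}\setminus F$ is achieved with nondegenerate tiles.
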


\begin{proof}
The more straightforward items are left to the reader.
\begin{enumerate}
\item[]{(4)} Recall that the countable stability property for a dimension function $\dim$ means that
%\begin{equation}\label{eq:72}
$\dim(\cup_{i\in I}F_{i})=\sup_{i\in I}\dim(F_{i})$,
%\end{equation}
where $\{F_{i}\}_{i\in I}$ is a countable family of subsets of $X$.
%We affirm that the property which appears on equation \ref{eq:72} is going to be false for fractal dimension I.
%In fact, taking into account the monotony of $\underline{\dim}^{1}_{\ef}$, it is clear that $\underline{\dim}^{1}_{\ef}(F_{j})\leq \underline{\dim}^{1}_{\ef}(\cup_{i\in I}F_{i})$ for all $j\in I$, so that, $\sup_{i\in I}\underline{\dim}^{1}_{\ef}(F_{i})\leq \underline{\dim}^{1}_{\ef}(\cup_{i\in I}F_{i})$. The case for upper limits may be done on a similar way.
%Now, we present the following counterexample in order to show that the other inequality is false.
Therefore, consider $X=[0,1]$ with $F=\mathbb{Q}\cap [0,1]$, and let $\ef$ be the natural fractal structure on $\mathbb{R}$ induced in $[0,1]$, which can be described by $\Gamma_{n}=\{[\frac{k}{2^{n}},\frac{k+1}{2^{n}}]: k\in \{0, 1, \ldots, 2^{n}-1\}\}$ for all $n\in \mathbb{N}$. Then, it is clear that $N_{n}(F)=2^{n}$, so that $\dim^{1}_{\ef}(F)=1$.
Further, note that (4) implies (3). \newline
%Furthermore, $\dim^{1}_{\ef}(\{q_{i}\})=0$, for all $q_{i}\in F$.
%Accordingly, it is not in general true that $\underline{\dim}^{1}_{\ef}(\cup_{i\in I}F_{i})\leq \sup_{i\in I}\underline{\dim}^{1}_{\ef}(F_{i})$. The same counterexample is valid for the case of upper fractal dimension I.
%\item If $F=\mathbb{Q}\cap [0,1]$ and $\ef$ is the previous fractal structure on $[0,1]$, then $\dim_{\ef}^{1}(F)=1>0$.
\item[]{(5)} Indeed, consider the fractal structure $\ef=\{\Gamma_{n}:n\in \mathbb{N}\}$ with $\Gamma_{n}=\{[\frac{k}{2^{n}},\frac{k+1}{2^{n}}]\times \{0\}:k\in \{0,1,\ldots, 2^{n}-1\}\}\cup \{\{\frac{1}{2^{m}}\}\times [\frac{k}{2^{n}},\frac{k+1}{2^{n}}]: k\in \{0,1,\ldots, 2^{n}-1\}, m\in \mathbb{N}\}$ for all natural number $n$, on the space $X=([0,1]\times \{0\})\cup \{\{\frac{1}{2^{n}}\}\times [0,1]: n\in \mathbb{N}\}$. Take also $F=\bigcup_{k\in \mathbb{N}}(\frac{1}{2^{k+1}},\frac{1}{2^{k}})\times \{0\}$ as a subset of $X$. Note that $Ord(\ef)=2$. On the other hand, it is clear that $\overline{F}=[0,1]\times \{0\}$, so that $N_{n}(F)=2^{n}$ and $N_{n}(\overline{F})=\infty$ for all natural number $n$, which implies that $\dim_{\ef}^{1}(F)=1$ and $\dim_{\ef}^{1}(\overline{F})=\infty$.
%GF-space $(X,\ef)$, where $X=([0,1]\times \{0\})\cup \{\{\frac{1}{2^{n}}\}\times [0,1]: n\in \mathbb{N}\}$ and the fractal structure $\ef$ is given as the countable family of coverings $\ef=\{\Gamma_{n}:n\in \mathbb{N}\}$ with $\Gamma_{n}=\{[\frac{k}{2^{n}},\frac{k+1}{2^{n}}]\times \{0\}:k\in \{0,1,\ldots, 2^{n}-1\}\}\cup \{\{\frac{1}{2^{m}}\}\times [\frac{k}{2^{n}},\frac{k+1}{2^{n}}]: k\in \{0,1,\ldots, 2^{n}-1\}, m\in \mathbb{N}\}$. Note that $Ord(\Gamma_{n})=2$ for all $n\in \mathbb{N}$. Consider $F=\bigcup_{k\in \mathbb{N}}(\frac{1}{2^{k+1}},\frac{1}{2^{k}})\times \{0\}$. Thus, it is clear that $\overline{F}=[0,1]\times \{0\}$, so that we have $N_{n}(F)=2^{n}$ and $N_{n}(\overline{F})=\infty$, which implies that $\dim_{\ef}^{1}(F)=1$ and $\dim_{\ef}^{1}(\overline{F})=\infty$.
\end{enumerate}
\end{proof}
%As seen on theorem \ref{prop:1}, we have that fractal dimension I agrees with box-counting dimension on the euclidean space $\mathbb{R}^{d}$ equipped with its natural fractal structure $\ef$. Further, note that box-counting dimension can be defined also on a metrizable space.
%%Furthermore, our fractal dimension I method looks easier to calculate than box-counting one.
%On the other hand, it is possible that sometimes we only need a suitable
%estimation for the box-counting dimension of a given subset. In order to obtain
%it,
The next question we are going to investigate consists of the possibility of
determining an approximation of the box-counting dimension of a subset
on a metric space, in terms of its fractal dimension I.
In this way, we define the diameter of each level of a fractal structure $\ef$ as well as the diameter of a subset in a given level of $\ef$ as follows.
\begin{defn}\label{def:2}
Let $\ef$ be a fractal structure on a metric space $(X,\rho)$, and let $F$ be a
subset of $X$. Then, the diameter of each level $\Gamma_{n}$ of the fractal
structure $\ef$ is defined as
\begin{equation}
\delta(\Gamma_{n})=\sup\{\delta(A): A\in \Gamma_{n}\}
\end{equation}
and the diameter of $F$ on each level $\Gamma_{n}$ of the fractal structure $\ef$ is given as the quantity
\begin{equation}
\delta(F,\Gamma_{n})=\sup \{\delta(A):A\in \mathcal{A}_{n}(F)\}
\end{equation}
\end{defn}
Note that starbase fractal structures lead to GF-spaces with some desirable properties. Taking it as well as definition \ref{def:2} into account, we have found a condition over the diameters of each level of the fractal structure
%under consideration,
in order to get this kind of topological spaces.
%Indeed, it suffices with verifying that $\delta(\Gamma_{n})\rightarrow 0$ as $n\rightarrow \infty$. Note that this condition is natural over fractals as self-similar sets.
The proof of the next theorem results straightforward.
%Starbase fractal structures are going to be very useful in our fractal dimension theory. In fact, there is a natural sufficient condition we can require to a fractal structure $\ef$ in order to get an associated starbase GF-space $(X,\ef)$. Indeed, we only need that the sequence of diameters of each level $\Gamma_{n}$ of the fractal structure $\ef$ converges to $0$ when we explore the fractal set with sufficient detail. The proof is straightforward.
\begin{prop}\label{prop:7}
Let $\ef$ be a fractal structure on a compatible metric (or quasimetric) space $(X,\rho)$, and
suppose that $\delta(\Gamma_{n})\rightarrow 0$. Then
$\ef$ is starbase.
\end{prop}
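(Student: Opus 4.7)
The plan is to unwind the definition of starbase directly: I must show that for every $x\in X$, the family $St(x,\ef)=\{St(x,\Gamma_n):n\in\mathbb{N}\}$ is a neighborhood base at $x$. This splits into two checks. First, each $St(x,\Gamma_n)$ is a neighborhood of $x$; second, for every neighborhood $U$ of $x$ there exists $n$ with $St(x,\Gamma_n)\subseteq U$.

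The first check is essentially free from the definition of a (pre-)fractal structure. Indeed, since $\mathcal{U}_x^{\ef}$ is an open neighborhood base at $x$, each $U_{xn}$ is an open neighborhood of $x$. A one-line verification shows $U_{xn}\subseteq St(x,\Gamma_n)$: if $y\in U_{xn}$, then every element of $\Gamma_n$ containing $y$ must also contain $x$ (otherwise $y$ would lie in the discarded union); since $\Gamma_n$ covers $X$, pick such an element $A$ and conclude $y\in A\subseteq St(x,\Gamma_n)$. Thus $St(x,\Gamma_n)$ contains the open neighborhood $U_{xn}$ of $x$.

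The second check is where the hypothesis $\delta(\Gamma_n)\to 0$ together with the compatibility of $\rho$ is used. Given a neighborhood $U$ of $x$, pick $\varepsilon>0$ with $B_{\rho}(x,\varepsilon)\subseteq U$, which is possible because $\tau=\tau(\rho)$. Choose $n\in\mathbb{N}$ so that $\delta(\Gamma_n)<\varepsilon$. Then for every $A\in\Gamma_n$ with $x\in A$ and every $y\in A$ we have $\rho(x,y)\leq \delta(A)\leq \delta(\Gamma_n)<\varepsilon$, so $A\subseteq B_{\rho}(x,\varepsilon)\subseteq U$. Taking the union over all such $A$ yields $St(x,\Gamma_n)\subseteq U$, which is the desired inclusion.

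I do not anticipate a real obstacle; the author's own remark that the proof is straightforward is accurate. The only subtlety worth noting is the quasimetric case, where one should use the diameter $\delta(A)=\sup\{\rho(a,b):a,b\in A\}$ induced by $\rho$ itself; then the estimate $\rho(x,y)\leq\delta(A)$ used above remains valid with the asymmetric $\rho$, so the argument carries over without change. If desired, one could also formulate the same proof directly in terms of the entourages $\{(x,y):\rho(x,y)<2^{-n}\}$ of the quasiuniformity $\mathcal{U}_{\rho}$, but this adds no content.
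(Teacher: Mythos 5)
Your proposal is correct and follows essentially the same route as the paper's own proof: both verify that each $St(x,\Gamma_{n})$ contains the open neighborhood $U_{xn}$, and both use $\delta(\Gamma_{n})\rightarrow 0$ together with compatibility of $\rho$ to place some $St(x,\Gamma_{m})$ inside a given ball $B_{\rho}(x,\varepsilon)$. Your spelled-out verification of $U_{xn}\subseteq St(x,\Gamma_{n})$ and the remark on the quasimetric case merely fill in details the paper leaves implicit.
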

 \begin{proof}
 Indeed, it suffices with checking that $St(x,\ef)$ is a neighborhood base of $x$ for all $x\in X$. First of all, it is clear that $x\in U_{xn}\subset St(x,\Gamma_{n})$ for all $x\in X$ and all natural number $n\in \mathbb{N}$, with $U_{xn}$ being a neighborhood of $x$ about the topology associated with the fractal structure $\ef$. On the other hand, let $x\in X$ be a fixed point on $X$, and let $\varepsilon>0$. Since $\delta(\Gamma_{n})\rightarrow 0$, then there exists a natural number $n_{0}\in \mathbb{N}$ such that $\delta(\Gamma_{n})<\varepsilon$ for all $n\geq n_{0}$. Hence, let $m\in \mathbb{N}$ be a natural number such that $m\geq n_{0}$, and consider $St(x,\Gamma_{m})$. Then, for all $y\in St(x,\Gamma_{m})$ there exists $A\in \Gamma_{m}$ with $x\in A$, such that $y\in A$. Moreover, as $\delta(\Gamma_{m})<\varepsilon$, we have that $\rho(x,y)<\varepsilon$, namely, $y\in B_{\rho}(x,\varepsilon)$. Accordingly, there exists $m\in \mathbb{N}$ such that $St(x,\Gamma_{m})\subset B_{\rho}(x,\varepsilon)$.
 \end{proof}
%The key condition we can require to a fractal structure $\ef$ in order to obtain a suitable estimation for the box-counting dimension of a given subset $F$ in terms of its fractal dimension I, consists of the fact that the sequence of the diameters on each level $\Gamma_{n}$ of the fractal structure $\ef$ decreases on a geometric way.
%Observe that the latter is a natural condition that a fractal structure can verify, as it happens while taking self-similar sets. Indeed, we have the following theorem.
A very natural condition which a fractal structure could verify, consists of the fact that the sequence of diameters of each level of the fractal structure $\ef$ decreases on a geometric way, which constitutes the main idea in the following theorem.
\begin{teo}\label{teo:1}
Let $\ef$ be a fractal structure on a metric space $(X,\rho)$ with $F$ being a subset of $X$, and suppose that there exists a constant $c\in (0,1)$ such that the next condition is verified:
\begin{equation}\label{eq:50}
\delta(F,\Gamma_{n+1})\leq c\ \delta(F,\Gamma_{n})
\end{equation}
for all natural number $n$. Take the constant $\gamma_{c}=\frac{-\log 2}{\log c}$. Then,
\begin{enumerate}
\item
%\begin{equation}\label{des:1}
$\overline{\dim}_{B}(F)\leq \gamma_{c} \cdot \overline{\dim}^{1}_{\ef}(F)$.
%\end{equation}
\item
%\begin{equation}\label{des:2}
$\underline{\dim}_{B}(F)\leq \gamma_{c} \cdot \underline{\dim}^{1}_{\ef}(F)$.
%\end{equation}
\item Moreover, if there exist both box-counting dimension and fractal dimension I of $F$, then
%\begin{equation}\label{ec:5}
$\dim_{B}(F)\leq \gamma_{c} \cdot \dim_{\ef}^{1}(F)$.
%\end{equation}
\end{enumerate}
\end{teo}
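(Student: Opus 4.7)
The plan is to use the geometric decay of $\delta(F,\Gamma_n)$ to compare the fractal-structure count $N_n(F)$ with the box-counting count $N_\delta(F)$ along a carefully chosen geometric sequence of scales, and then invoke Remark \ref{rem:F} to lift this comparison to a dimension inequality. The argument works identically for $\limsup$ and $\liminf$, with (3) an immediate consequence of (1) and (2).

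Set $\delta_0 := \delta(F,\Gamma_1)$. If $\delta_0 = 0$, then $F$ must be a single point (since then every element of every level meeting $F$ is degenerate), and the conclusion is trivial; otherwise $0 < \delta_0 < \infty$ because $F$ is bounded. Iterating the hypothesis (\ref{eq:50}) yields $\delta(F,\Gamma_n) \leq c^{n-1}\delta_0$ for every $n \geq 1$. Define $\delta_n := c^{n-1}\delta_0$. Since each $A \in \mathcal{A}_n(F)$ has diameter at most $\delta_n$ and $\bigcup \mathcal{A}_n(F) \supseteq F$ (because $\Gamma_n$ covers $X$), the equivalent characterization (\ref{d:5}) of the box-counting quantity gives
\begin{equation}
N_{\delta_n}(F) \leq N_n(F).
\end{equation}

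Taking logarithms and dividing by $-\log\delta_n = (n-1)(-\log c) - \log\delta_0$, I would obtain
\begin{equation}
\frac{\log N_{\delta_n}(F)}{-\log \delta_n} \leq \frac{\log N_n(F)}{(n-1)(-\log c) - \log\delta_0}.
\end{equation}
The denominator on the right is asymptotic to $n(-\log c)$ as $n \to \infty$, so passing to $\limsup$ gives
\begin{equation}
\limsup_{n\to\infty} \frac{\log N_{\delta_n}(F)}{-\log \delta_n} \leq \frac{1}{-\log c}\,\limsup_{n\to\infty}\frac{\log N_n(F)}{n} = \frac{\log 2}{-\log c}\,\overline{\dim}^{1}_{\ef}(F) = \gamma_c\cdot \overline{\dim}^{1}_{\ef}(F).
\end{equation}
Because $\delta_{n+1} = c\,\delta_n$ with $c \in (0,1)$, Remark \ref{rem:F} guarantees that $\overline{\dim}_B(F)$ can be computed along the sequence $\{\delta_n\}$, which yields (1). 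The identical calculation with $\liminf$ in place of $\limsup$ proves (2), and (3) follows at once.

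The argument is essentially routine once the right sequence $\{\delta_n\}$ is identified, so there is no real obstacle; the only places requiring mild care are the asymptotic absorption of the additive constant $-\log\delta_0$ into the denominator and the verification that Remark \ref{rem:F} does apply here (which it does precisely because the ratio $\delta_{n+1}/\delta_n = c$ is a fixed constant in $(0,1)$). The conceptual heart of the result is the elementary observation that a covering of $F$ by sets of diameter $\leq \delta_n$ is exactly the object counted by $N_{\delta_n}(F)$.
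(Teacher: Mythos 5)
Your proposal is correct and follows essentially the same route as the paper: iterate the hypothesis to get $\delta(F,\Gamma_n)\leq c^{n-1}\delta(F,\Gamma_1)=:\delta_n$, observe that $\mathcal{A}_n(F)$ is a cover of $F$ by sets of diameter at most $\delta_n$ so that $N_{\delta_n}(F)\leq N_n(F)$ in the sense of the equivalent definition (3) of $N_\delta$, and apply Remark \ref{rem:F} to compute the box dimension along the geometric sequence $\{\delta_n\}$. Your treatment is in fact slightly more careful than the paper's (explicit absorption of the constant $-\log\delta_0$ and the degenerate case $\delta_0=0$), but the underlying argument is identical.
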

\begin{proof}
In order to calculate the box-counting dimension of $F$, let $N_{\delta}(F)$ be as (\ref{d:5}) on equivalent box-counting definition seen at preliminary subsection \ref{sec:2}.
%be the smallest number of sets of diameter at most $\delta$ that cover $F$.
\begin{enumerate}
\item Taking the geometric decreasing of the diameters of each level of the fractal structure $\ef$ into account, we affirm that there exists a constant $c\in (0,1)$ such that $\delta(F,\Gamma_{n})\leq c^{n-1}\cdot \delta(F,\Gamma_{1})$ for all $n\in \mathbb{N}$. Denote $\delta_{n}=c^{n-1}\cdot \delta(F,\Gamma_{1})$ as the general term of a decreasing sequence which converges to $0$. Thus,
%Using the expression \ref{eq:50}, we affirm that there exists a constant $c\in (0,1)$ such that $\delta(F,\Gamma_{n})\leq c^{n-1}\ \delta(F,\Gamma_{1})$. Note that $\delta_{n}=c^{n-1}\ \delta(F, \Gamma_{1})$ is the general term of a decreasing sequence which converges to $0$. Hence, we can establish what follows:
%\begin{equation}\label{eq:64}
%\begin{split}
$\overline{\dim}_{B}(F)
=\varlimsup_{n\rightarrow \infty} \frac{\log N_{\delta_{n}}(F)}{-\log \delta_{n}}
%&\leq \varlimsup_{n\rightarrow \infty}\frac{\log N_{n}(F)}{(1-n)\ \log c}\\
\leq \varlimsup_{n\rightarrow \infty}\frac{\log N_{n}(F)}{-n\ \log c}
=\gamma_{c}\cdot\ \varlimsup_{n\rightarrow \infty}\frac{\log N_{n}(F)}{n\ \log
2}$, where remark \ref{rem:F} is used in the first equality.
%\end{split}
%\end{equation}
%Note that we have taken into account the fact that $\delta_{n+1}\geq c\ \delta_{n}$ with $c\in (0,1)$, for all natural number $n$ in order to satisfy the first equality. Furthermore, it is clear that $\delta(A)\leq \delta_{n}$ for all $A\in \mathcal{A}_{n}(F)$.
%which implies the inequality that appears on \ref{eq:64}.
%$\delta(A)\leq \delta(F,\Gamma_{k})\leq c^{k-1}\ \delta(F,\Gamma_{1})=\delta_{k}$ for all $A\in \Gamma_{k}$ which meets $F$. Hence, we get the first inequality by the definition of $N_{\delta_{k}}(F)$.
\item Consider the previous sequence $\{\delta_{n}\}_{n\in \mathbb{N}}$.
Since
%\begin{equation}
%\nonumber
%\varliminf_{\delta\rightarrow 0}\frac{\log N_{\delta}(F)}{-\log \delta}=\inf\Bigg\{\varliminf_{n\rightarrow \infty}\frac{\log N_{\delta_{n}}(F)}{-\log \delta_{n}}: \delta_{n}\rightarrow 0 \ \text{as} \ n\rightarrow \infty\Bigg\}
%\end{equation}
%It is clear that
%\begin{equation}
$\varliminf_{\delta\rightarrow 0}\frac{\log N_{\delta}(F)}{-\log \delta}\leq \varliminf_{n\rightarrow \infty}\frac{\log N_{\delta_{n}}(F)}{-\log \delta_{n}}$, then
%\end{equation}
it suffices with applying a similar argument to the former.
%and now, we can continue using a similar argument to the \ref{eq:64} one.
%\item Finally, it suffices with applying the two previous results. Note that the existence of box-counting dimension (resp. fractal dimension I) implies that $\dim_{B}(F)=\overline{\dim}_{B}(F)=\underline{\dim}_{B}(F)$ (and the same argument is valid for fractal dimension I).
\end{enumerate}
\end{proof}
Since the sequence of diameters $\delta(\Gamma_{n})$ always decreases on a geometric way when working with self-similar sets (by its construction), then we can estimate the box-counting dimension of a self-similar set by means of its fractal dimension I, which is easier to calculate, since we are going to select its natural fractal structure. In this way, we present the next result.
%The condition we have described at expression \ref{eq:50} allows us to estimate the box-counting dimension of every self-similar set in terms of its fractal dimension I. Indeed, in every self-similar set we have that the sequence of diameters of the levels $\Gamma_{n}$ of the fractal structure $\ef$ decreases on a geometric way, by construction. So we obtain the following corollary.
\begin{cor}
Let $\ef$ be the natural fractal structure on a self-similar set $K$ provided with the euclidean distance. Then, the inequalities contained on the theorem \ref{teo:1} are verified, with $c$ being the maximum of the contraction factors $c_{i}$ associated with the contractions $f_{i}$ on the IFS $(X,\{f_{i}:i\in I\})$ whose associated self-similar set is $K$.
%For a self-similar set equipped with its natural fractal structure and the euclidean metric, the inequalities contained on the theorem \ref{teo:1} are verified, taking $c$ as the maximum of the contraction factors $c_{i}$ associated with the contractive mappings $f_{i}$ of the corresponding IFS $(X,\{f_{i}:i\in I\})$.
\end{cor}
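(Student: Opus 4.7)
My plan is to reduce the corollary to Theorem~\ref{teo:1} by extracting a geometric bound on the level diameters directly from the IFS data. The key observation is that every element of $\Gamma_n$ is of the form $f_\omega^n(K)=f_{\omega_1}\circ\cdots\circ f_{\omega_n}(K)$ for some word $\omega\in I^n$, and each $f_{\omega_k}$ is a contraction with ratio $c_{\omega_k}\le c$. Iterating, one obtains
\begin{equation*}
\delta\bigl(f_\omega^n(K)\bigr)\le c_{\omega_1}\cdots c_{\omega_n}\,\delta(K)\le c^n\,\delta(K),
\end{equation*}
and taking the supremum over $\omega\in I^n$ yields $\delta(\Gamma_n)\le c^n\,\delta(K)$. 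In particular $\delta(\Gamma_{n+1})\le c\,\delta(\Gamma_n)$ and $\delta(F,\Gamma_n)\le\delta(\Gamma_n)\le c^n\,\delta(K)$ for every non-empty bounded $F\subseteq K$.

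With this uniform geometric bound in hand, I would plug the sequence $\delta_n=c^n\,\delta(K)$ into the box-counting formula via Remark~\ref{rem:F}. Since $\mathcal{A}_n(F)$ covers $F$ by sets of diameter at most $\delta_n$, we have $N_{\delta_n}(F)\le N_n(F)$; substituting this together with $-\log\delta_n=-n\log c-\log\delta(K)$ (whose leading behaviour is $-n\log c$) into the upper and lower box-counting limits reproduces, via essentially the same computation as in the proof of Theorem~\ref{teo:1}, the bounds
\begin{equation*}
\overline{\dim}_B(F)\le\gamma_c\cdot\overline{\dim}^{1}_{\ef}(F),\qquad \underline{\dim}_B(F)\le\gamma_c\cdot\underline{\dim}^{1}_{\ef}(F),
\end{equation*}
with $\gamma_c=-\log 2/\log c$; the two-sided version of the inequality is immediate when both dimensions exist.

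The main obstacle is that the literal recursive hypothesis $\delta(F,\Gamma_{n+1})\le c\,\delta(F,\Gamma_n)$ required by Theorem~\ref{teo:1} need not hold for an arbitrary $F\subseteq K$: if $A'=f_i(B)\in\mathcal{A}_{n+1}(F)$ with $B\in\Gamma_n$, one only gets $\delta(A')\le c\,\delta(B)$, but $B$ itself need not meet $F$, so $\delta(B)$ cannot in general be controlled by $\delta(F,\Gamma_n)$. The way around this is not to invoke Theorem~\ref{teo:1} as a black box but to rerun its proof using the uniform bound $\delta(F,\Gamma_n)\le c^n\,\delta(K)$ in place of the iterated form of the hypothesis; since the proof of Theorem~\ref{teo:1} genuinely only requires a geometrically decreasing majorant of $\delta(F,\Gamma_n)$, this substitution yields precisely the three inequalities with the stated value of $c$.
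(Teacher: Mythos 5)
Your argument is correct, and it follows the route the paper intends: the paper states this corollary without any proof, justifying it only by the preceding remark that the level diameters of the natural fractal structure on a self-similar set ``always decrease on a geometric way by construction,'' which is precisely the bound $\delta(\Gamma_n)\le c^n\,\delta(K)$ you extract from $\delta(f_\omega^n(K))\le c_{\omega_1}\cdots c_{\omega_n}\,\delta(K)$. Where you go beyond the paper is in noticing that the literal hypothesis of Theorem~\ref{teo:1}, namely $\delta(F,\Gamma_{n+1})\le c\,\delta(F,\Gamma_n)$, is not automatic for an arbitrary subset $F$ of $K$ when the $f_i$ are general contractions: writing $A'=f_{\omega_1}(B)$ with $B\in\Gamma_n$ gives $\delta(A')\le c\,\delta(B)$, but $B$ need not meet $F$, while the element of $\Gamma_n$ that actually contains $A'$ is $f_{\omega_1\cdots\omega_n}(K)$, and comparing $\delta(A')$ with its diameter would require a lower Lipschitz bound on $f_{\omega_1\cdots\omega_n}$. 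Your fix --- observing that the proof of Theorem~\ref{teo:1} only ever uses the iterated majorant $\delta(F,\Gamma_n)\le c^{n-1}\,\delta(F,\Gamma_1)$ together with Remark~\ref{rem:F} and the covering bound $N_{\delta_n}(F)\le N_n(F)$, and feeding in $\delta_n=c^n\,\delta(K)$ instead --- is exactly right and repairs a step the paper glosses over. (For $F=K$ itself no patch is needed: every element of every level meets $K$, so $\delta(K,\Gamma_{n+1})=\delta(\Gamma_{n+1})\le c\,\delta(\Gamma_n)=c\,\delta(K,\Gamma_n)$ and Theorem~\ref{teo:1} applies verbatim.)
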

%\begin{proof}
%Let $I$ be a finite index set, with $(K,\{f_{i}: i\in I\})$ being the IFS whose associated self-similar set is $K$.
%Note that using the natural fractal structure associated with this self-similar set $K$, the diameters of the elements on each level $\Gamma_{n}$ of the fractal structure $\ef$ decrease in a geometric way, so that there exists $c\in (0,1)$ such that $\delta(K,\Gamma_{n+1})\leq c\ \delta(K,\Gamma_{n})$ for all $n\in \mathbb{N}$, with $c$ being the maximum of $\{c_{i}:i\in I\}$, where $c_{i}$ are the contraction factors associated with the contractive mappings $f_{i}$ for all $i\in I$. Hence, it suffices with applying the theorem \ref{teo:1} in order to get the result.
%\end{proof}
%Accordingly, the fractal dimension I model we have studied in this section generalizes the box-counting one in the context of the euclidean spaces $\mathbb{R}^{d}$, and on the other hand, under certain conditions over the size of the elements on each level of the fractal structure $\ef$, we have found an upper bound of its box-counting dimension in terms of its fractal dimension I.
Fractal dimension I depends on the fractal structure $\ef$ we utilize on the space. Indeed, we show it in the following remark.

%Finally, we present the key fact that fractal dimension I depends on the fractal structure $\ef$ we are using on each case.
\begin{obs}\label{obs:2}
Let $X$ be a subspace of an euclidean space. Then, it is possible to obtain different values for its fractal dimension I, depending on the fractal structure $\ef$ we select in order to calculate it.
\end{obs}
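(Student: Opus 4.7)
The statement is purely existential, so the plan is simply to exhibit a single subspace $X$ of a Euclidean space equipped with two different fractal structures whose fractal dimensions I disagree. The cleanest choice I would make is $X=[0,1]\subset\mathbb{R}$, because on this space both candidate structures admit completely transparent level-counts.

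For the first fractal structure $\ef_1$, I would take the restriction to $[0,1]$ of the natural fractal structure on $\mathbb{R}$, whose $n$th level is $\Gamma_n=\{[k/2^n,(k+1)/2^n]:k=0,1,\ldots,2^n-1\}$, exactly the one used in the proof of item (4) of proposition \ref{prop:6}. Every element of $\Gamma_n$ meets $[0,1]$, so $N_n([0,1])=2^n$ and definition \ref{def:1} immediately yields $\dim^1_{\ef_1}([0,1])=1$, which also matches theorem \ref{prop:1} since $\ef_1$ recovers the box-counting setting on $\mathbb{R}$.

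For the second structure $\ef_2$, I would regard $[0,1]$ as the self-similar attractor of the IFS $\{f_1,f_2,f_3\}$ with $f_i(x)=(x+i-1)/3$, and then take the natural fractal structure on this self-similar set introduced in subsection \ref{sec:1}. Its $n$th level is $\Gamma_n=\{f^n_\omega([0,1]):\omega\in\{1,2,3\}^n\}$, consisting of $3^n$ closed subintervals of length $3^{-n}$ that cover $[0,1]$; hence $N_n([0,1])=3^n$ and $\dim^1_{\ef_2}([0,1])=\log 3/\log 2$. Comparing the two computations gives $1\neq\log 3/\log 2$, which proves the remark.

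I do not anticipate any real obstacle: the only non-trivial check is that the three contractions in the second step actually have $[0,1]$ as their unique non-empty compact attractor, and this is immediate from Hutchinson's theorem together with the identity $[0,1]=f_1([0,1])\cup f_2([0,1])\cup f_3([0,1])$. If instead one preferred an example where \emph{the subspace itself} is a proper fractal rather than an interval, the same argument works verbatim for the middle-thirds Cantor set $C$: its natural fractal structure (from the two-map IFS) gives $N_n(C)=2^n$ and hence $\dim^1=1$, whereas the restriction to $C$ of the natural fractal structure on $\mathbb{R}$ yields the classical value $\log 2/\log 3$ by theorem \ref{prop:1}, again two different answers on the same subspace of $\mathbb{R}$.
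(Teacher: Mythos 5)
Your proposal is correct and follows essentially the same strategy as the paper: compare the dyadic natural fractal structure induced from $\mathbb{R}$ with the natural fractal structure coming from a self-similar decomposition whose scaling ratio is not $\tfrac{1}{2}$, so that the level counts $N_n$ grow at different rates. Your closing alternative (the middle-thirds Cantor set, giving $\log 2/\log 3$ versus $1$) is in fact exactly the example the paper uses; your primary example with $X=[0,1]$ and the ternary IFS is an equally valid instance of the same idea.
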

\begin{proof}
First of all, let $\ef_{1}=\{\Gamma_{1,n}:n\in \mathbb{N}\}$ be a fractal structure whose levels are given by $\Gamma_{1,n}=\{[\frac{k}{2^{n}},\frac{k+1}{2^{n}}]:k\in \mathbb{Z}\}$ for all $n\in \mathbb{N}$, as the natural fractal structure on the real line induced on the middle third Cantor set $C$. Thus, by means of theorem \ref{prop:1}, we have that $\dim_{\ef_{1}}^{1}(C)=\dim_{B}(C)=\frac{\log 2}{\log 3}$ (see \cite[Example 3.3]{FAL90}).
%consider the GF-space $(C,\ef_{1})$ where $C$ is the middle third Cantor set, and $\ef_{1}$ is the fractal structure given as the countable family of coverings $\ef_{1}=\{\Gamma_{1,n}:n\in \mathbb{N}\}$, with $\Gamma_{1,n}=\{[\frac{k}{2^{n}},\frac{k+1}{2^{n}}]:k\in \mathbb{Z}\}$ for all natural number $n\in \mathbb{N}$. In other words, $\ef_{1}$ is the natural fractal structure on the real line induced on $C$. Thus, by theorem \ref{prop:1}, $\dim_{\ef_{1}}^{1}(C)=\dim_{B}(C)=\frac{\log 2}{\log 3}$ (see \cite[Example 3.3]{FAL90}).
%On the other hand, an application of the theorem \ref{teo:1} gives $\dim_{\ef_{1}}^{1}(C)=\dim_{B}(C)\leq \frac{\log 2}{\log 3}$, since $c_{i}=\frac{1}{3}$ for all $i\in \{1,2\}$.\newline
On the other hand, let $\Gamma_{2}$ be the natural fractal structure on $C$ as a self-similar set. Then, an easy calculation leads to $\dim_{\ef_{2}}^{1}(C)=1$, since on each level of the fractal structure $\ef_{2}$, there are $2^{n}$ subintervals of length equal to $\frac{1}{3^{n}}$.
\end{proof}
The theoretical justification of this result is based on the idea that the first version of fractal dimension regards all the elements on each level $\Gamma_{n}$ of the fractal structure $\ef$ as having the same ``size'', namely, $\frac{1}{2^{n}}$.

\section{Fractal Dimension on GF-spaces: a second version}

We have just presented a first method in order to calculate the fractal dimension of a subset. In this way, the use of GF-spaces leads to generalize the box-counting dimension, which only works over metrizable spaces (and in particular, over the euclidean ones), and it enables the use of a larger collection of fractal structures in order to calculate the fractal dimension than the box-counting method.
As seen on the previous section, the definition \ref{def:1} of fractal dimension I regards each element on the family $\mathcal{A}_{n}(F)$ as having the same ``size'', equal to $\delta=\frac{1}{2^{n}}$ for all $n\in \mathbb{N}$. So that, the natural fractal structure of $\delta-$cubes which we use on the euclidean spaces in order to determine the box-counting dimension of a given subset, can be extended to other kinds of tilings such as triangulations on the plane.
%Note that it results interesting because, for instance, all surface have a triangulation.

%Indeed, the definition \ref{def:1} of fractal dimension I, regards each element $A\in \Gamma_{n}$ of the fractal structure $\ef$ as having the same diameter, equal to $\frac{1}{2^{n}}$. Of course, this idea is inspired in the natural fractal structure we can define on each euclidean space $\mathbb{R}^{d}$.

Recall also that in remark \ref{obs:2} we have shown the fact that fractal dimension I depends on the fractal structure we use on each case. Taking it into account, we are going to propose another model in order to calculate the fractal dimension of any subset on a GF-space, but this time, we pretend that our definition can consider the possibility that the elements on each family $\mathcal{A}_{n}(F)$ could present diameters not necessarily equal to $\frac{1}{2^{n}}$.
%By the way, recall that the diameter of a subset $F\subset X$ on the level $\Gamma_{n}$ of the fractal structure $\ef$ has been defined as
%%\begin{equation}
%$\delta(F,\Gamma_{n})=\sup\{\delta(A): A\in \Gamma_{n}, A\cap F\neq \emptyset\}$.
%%\end{equation}

To this end, we will use the more general concept of distance function (see
\cite{ROM96}):
\begin{defn}
A mapping $d:X\times X\rightarrow \mathbb{R}^{+}$ defined on a set $X$ is said
to be a distance function, or merely a distance on $X$, if it verifies that
$d(x,x)=0$ for all $x\in X$.
\end{defn}

Diameters of subsets, coverings, etc. with respect to a distance function are
defined in the same way that for a metric.

The definition that follows is going to be called as the second version of the fractal dimension definition.
\begin{defn}\label{dimf:2}
Let $\ef$ be a fractal structure on a distance space $(X,\rho)$ and let $N_{n}(F)$ be the number of elements of $\mathcal{A}_{n}(F)$ for all $n\in \mathbb{N}$. Thus, the (lower/upper) fractal dimensions II of a non-empty bounded subset $F$ of $X$ respectively, are defined as the (lower/upper) limit:
%\begin{equation}
%\underline{\dim}^{2}_{\ef}(F)=\varliminf_{n\rightarrow \infty} \frac{\log N_{n}(F)}{-\log \delta(F,\Gamma_{n})}
%\end{equation}
%\begin{equation}
%\overline{\dim}^{2}_{\ef}(F)=\varlimsup_{n\rightarrow \infty} \frac{\log N_{n}(F)}{-\log \delta(F,\Gamma_{n})}
%\end{equation}
%and if those expressions agree, we will say that their common value is the fractal dimension II of $F$, namely
\begin{equation}
\dim_{\ef}^{2}(F)=\lim_{n\rightarrow \infty}\frac{\log N_{n}(F)}{-\log \delta(F,\Gamma_{n})}
\end{equation}
if this limit exists, where $\mathcal{A}_{n}(F)$ is defined at \ref{eq:83} for all $n\in \mathbb{N}$.
\end{defn}

%Recall that a fractal structure $\ef$ is said to be finite if all levels $\Gamma_{n}$ are finite coverings of the whole space $X$.
It is also important that working with finite fractal structures simplifies the necessary calculations in order to determine the fractal dimension of a subset.
In this way, the next remark shows that metrizable and separable topological spaces are useful for our purposes.

%Taking it into account, we affirm that, in particular, metrizable and separable topological spaces are going to be interesting for our purposes, as we can find in the next remark.
\begin{obs}
A metrizable space is second-countable if and only if it is separable (see \cite[Theorem 5.7]{SG02C}). On the other hand, it is known that every topological space which is second-countable has a compatible finite fractal structure (see \cite[Theorem 4.3]{SG02B}).
%Accordingly, we obtain the fact that metrizable and separable spaces have a finite fractal structure, so they are going to be very interesting for our study on fractal dimension.
\end{obs}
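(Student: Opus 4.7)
The statement packages two independent well-known facts into a single remark, so my plan is to justify each half separately and then flag where the genuine work lies. At a high level, one half is a classical equivalence in general topology and the other is a construction coming from the fractal-structure literature, so my job reduces to sketching how each cited reference is used.

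For the metrizable second-countable $\Leftrightarrow$ separable equivalence I would fix a compatible metric $d$ and treat both implications directly. If $D\subset X$ is countable and dense, then $\{B_{d}(x,1/n):x\in D,\ n\in\mathbb{N}\}$ is a countable base for the topology; conversely, given any countable base $\{B_{n}:n\in\mathbb{N}\}$, picking one point from each non-empty $B_{n}$ yields a countable dense subset. This half uses metrizability only in the forward direction and reproduces the equivalence cited from \cite[Theorem 5.7]{SG02C}.

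For the second assertion, starting from a countable base $\{B_{k}:k\in\mathbb{N}\}$ of the topology I would construct a compatible fractal structure $\ef=\{\Gamma_{n}:n\in\mathbb{N}\}$ inductively so that each $\Gamma_{n}$ is finite. At stage $n$ I define $\Gamma_{n}$ as a finite closed covering refining $\{B_{1},\ldots,B_{n}\}$, and in passing from $\Gamma_{n}$ to $\Gamma_{n+1}$ I replace each element $A\in\Gamma_{n}$ by the finite collection of its intersections with a suitable finite covering incorporating $B_{n+1}$, so that the strong refinement axiom $\Gamma_{n+1}\preceq\Gamma_{n}$ is automatic. The neighborhood-base condition for $\mathcal{U}_{x}^{\ef}$ follows because every open neighborhood of any $x$ contains some $B_{k}$, and from stage $n\geq k$ onwards the construction forces $U_{xn}\subseteq B_{k}$.

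The main obstacle, to my mind, is precisely reconciling three simultaneous requirements in the inductive step: finiteness of each $\Gamma_{n}$, the strong refinement property in the form demanded by the definition of a fractal structure (for every $x\in A\in\Gamma_{n}$ some $B\in\Gamma_{n+1}$ with $x\in B\subseteq A$), and the $\mathcal{U}_{x}^{\ef}$-neighborhood-base property at every point. A naive enumeration of the base tends to destroy refinement, whereas forcing refinement by iterated intersections can destroy finiteness; getting both while preserving the base property is the delicate balance. For the polished construction I would defer to \cite[Theorem 4.3]{SG02B} rather than rebuild it here.
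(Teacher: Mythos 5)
This item is a remark rather than a theorem: the paper supplies no proof of it at all, simply citing \cite[Theorem 5.7]{SG02C} for the first assertion and \cite[Theorem 4.3]{SG02B} for the second, so there is no internal argument for your proposal to diverge from. Your proof of the separable/second-countable equivalence for metrizable spaces is the standard correct one, and for the second assertion you appropriately defer to the same citation the paper relies on; the only quibble is that the obstacle you single out is misplaced --- the common refinement of two finite coverings is automatically finite, so iterated intersections cannot destroy finiteness, and the genuine delicacy in the cited construction lies instead in arranging each level to be a finite \emph{closed} covering so that every $U_{xn}$ is open and the family $\mathcal{U}_{x}^{\ef}$ really shrinks to a neighborhood base at every point.
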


The second model we have presented in order to calculate the fractal dimension of a given subset enables the use of fractal structures whose elements could present different diameters on each level. In this way, we have used the quantity $\delta(F,\Gamma_{n})$ in order to do this. Nevertheless, it would be also possible to consider $\delta(\Gamma_{n})$ instead of $\delta(F,\Gamma_{n})$ but it implies certain disadvantages. This idea is included in the next example.
%Note that we have used the quantity $\delta(F,\Gamma_{n})$ in order to define the fractal dimension II of $F\subset X$, so that we have to consider the supremum of the diameters of the elements $A$ on each level $\Gamma_{n}$ of the fractal structure $\ef$ that meets $F$. However, we could have taken $\delta(\Gamma_{n})$ instead of $\delta(F,\Gamma_{n})$, but it implies certain disadvantages

%Indeed, let $\ef$ be a finite fractal structure on a distance space $(X,\rho)$ where $\rho$ is not a bounded distance function. Thus, $\delta(\Gamma_{n})=\infty$ for all $n\in \mathbb{N}$, which does not allow to calculate the fractal dimension II of any subset $F$.

%$(X,\ef)$ be a GF-space equipped with a not bounded distance function, being $\ef$ a finite fractal structure. Then, we get that $\delta(\Gamma_{n})=\infty$ for all $n\in \mathbb{N}$, so that we cannot calculate the fractal dimension II for any subset $F\subset X$. Indeed we have the following example.

\begin{ejem}
Let $\ef$ be a finite fractal structure whose levels are given by $\Gamma_{n}=\{[\frac{k_{1}}{2^{n}},\frac{k_{1}+1}{2^{n}}]\times \ldots \times [\frac{k_{d}}{2^{n}},\frac{k_{d}+1}{2^{n}}]:k_{i}\in \{-n2^{n},\ldots, n2^{n}-1\}, i\in \{1,\ldots, d\} \}\cup \{\mathbb{R}^{d}\setminus (-n,n)^{d}\}$ for all natural number $n$, defined on the euclidean space $\mathbb{R}^{d}$. Thus, for all subset $F$ of $\mathbb{R}^{d}$, we have that $\delta(\Gamma_{n})=\infty$ for all $n\in \mathbb{N}$, but nevertheless, there exists a natural number $n_{0}$ such that $\delta(F,\Gamma_{n})<\infty$ for all $n\geq n_{0}$.
%Hence, it results interesting the use of the quantity $\delta(F,\Gamma_{n})$ while calculating the fractal dimension II of $F$.

%Let $(X,\ef)$ be a GF-space, where $X=\mathbb{R}^{d}$ and the fractal structure $\ef$ is defined as the countable family of coverings $\ef=\{\Gamma_{n}:n\in \mathbb{N}\}$, with $\Gamma_{n}=\{[\frac{k_{1}}{2^{n}},\frac{k_{1}+1}{2^{n}}]\times \ldots \times [\frac{k_{d}}{2^{n}},\frac{k_{d}+1}{2^{n}}]:k_{1},\ldots, k_{d}\in \{-n2^{n},\ldots, n2^{n}-1\} \}\cup \{\mathbb{R}^{d}\setminus (-n,n)^{d}\}$ for all $n\in \mathbb{N}$. It is clear that $\ef$ is a finite fractal structure. One of our goals would consist of calculating its box-counting dimension and its fractal dimension II, and it would be desirable that such values agree. Nevertheless, given any bounded subset $F\subset X$, we have that $\delta(\Gamma_{n})=\infty$ for all $n\in \mathbb{N}$. However, there exists a natural number $n_{0}\in \mathbb{N}$ such that $\delta(F,\Gamma_{n})< \infty$ for all $n\geq n_{0}$. Hence, the calculation of the fractal dimension of any bounded subset of a given GF-space whose fractal structure $\ef$ is finite is a very interesting question when using the quantity $\delta(F,\Gamma_{n})$ on fractal dimension II definition.
\end{ejem}

Taking into account the definition \ref{dimf:2} as well as the \ref{def:1} one, it results clear that fractal dimensions I and II are going to agree if we select any fractal structure $\ef$ such that $\delta(F,\Gamma_n)=\frac{1}{2^{n}}$ for all $n\in \mathbb{N}$. In this way, we are looking for a general condition for the elements of the fractal structure $\ef$, in order to show that the two fractal dimension definitions are equal. First of all, we are going to pay attention to starbase fractal structures defined on a suitable class of distance spaces.
%Recall that a fractal structure $\ef$ is said to be starbase when $St(x,\ef)$ is a neighborhood base for every $x\in X$. Working with starbase fractal structures will allow us to show that fractal dimension II generalizes to fractal dimension I, and recall that we know a useful standard condition in order to get starbase GF-spaces (see proposition \ref{prop:7}). In order to demonstrate this result, we are going to introduce the concept of semimetric associated with a fractal structure $\ef$. First, we recall the definition of a semimetric over a topological space $X$ that can be found on \cite[Def. 9.5]{GRU84}:
In particular, a semimetric on a topological space $X$ (see \cite[Def. 9.5]{GRU84}) is a non-negative real mapping $d$ defined on $X \times X$, and verifying the following conditions: (i) $d(x,y)=0$ if and only if $x=y$, (ii) the mapping $d$ is symmetric, and (iii) The family $\{B_{d}(x,\varepsilon): \varepsilon> 0\}$ is a neighborhood base for all $x\in X$, namely, the topology induced by the semimetric $d$ agrees with the starting topology. \label{con:3}
%\begin{defn}\label{def:3}
%A mapping $d:X\times X\rightarrow \mathbb{R}^{+}$ defined on a topological space $(X,\tau)$ is called a semimetric if it verifies the following conditions:
%%Let $(X,\tau)$ be a topological space. A semimetric on $X$ is a mapping $d:X\times X\rightarrow \mathbb{R}^{+}$ verifying the following conditions:
%\begin{enumerate}
% \item $d(x,y)=0$ if and only if $x=y$.
% \item The mapping $d$ is symmetric.
% %namely $d(x,y)=d(y,x)$ for every $x,y\in X$.
% \item The family $\{B_{d}(x,\varepsilon): \varepsilon> 0\}$ is a neighborhood base for all $x\in X$, namely, the topology induced by the semimetric $d$ agrees with the starting topology: $\tau=\tau_{d}$. \label{con:3}
%\end{enumerate}
%\end{defn}
By the way, it is also possible to define a suitable semimetric on a GF-space. Note that starbase fractal structures enables to verify the condition (iii). The concept of semimetric associated with a fractal structure $\ef$ was introduced on \cite[Theorem 6.5]{SGnd}.
%Now, we recall the concept of semimetric associated with a fractal structure $\ef$ (see \cite[Theorem 6.5]{SGnd}):
%\begin{defn}
Indeed, let $(X,\ef)$ be a starbase GF-space. The semimetric associated with the fractal structure $\ef$ can be described as the mapping $\rho:X\times X\rightarrow \mathbb{R}^{+}$ given by:
\begin{equation}\label{eq:72}
\rho(x,y)=\left\{
\begin{array}{ll}
\hbox{$0$}\ \hfill \ldots & \hbox{$x=y$} \\
\hbox{$\frac{1}{2^{n}}$}\ \ldots & \hbox{$y\in St(x,\Gamma_{n})\setminus St(x,\Gamma_{n+1})$} \\
\hbox{$1$}\ \hfill \ldots & \hbox{$y\notin St(x,\Gamma_{1})$}
\end{array}
\right.
\end{equation}
%\end{defn}

%\begin{obs}
It follows from expression \ref{eq:72} that $B_{\rho}(x,\frac{1}{2^{n}})=St(x,\Gamma_{n+1})$ for all $n\in \mathbb{N}$ and all $x\in X$, and since $\ef$ is a starbase fractal structure, we conclude that the topology induced by the semimetric $\rho$ agrees with the topology induced by the fractal structure.
%Taking into account the previous definition we have that $B_{\rho}(x,\frac{1}{2^{n}})=St(x,\Gamma_{n+1})$ for all $n\in \mathbb{N}$, which is a neighborhood base for every $x\in X$ since $\ef$ is a starbase fractal structure. Hence, the topology induced by the semimetric $\rho$ associated with $\ef$, matches with the topology induced by the fractal structure $\ef$.
%\end{obs}
%Additionally, we need to ensure that all consecutive levels of the fractal structure $\ef$ are not the same thing. In order to do this, we are going to suppose that for all natural number $n$ there exists a point of $F$ such that its stars at consecutive levels $\Gamma_{n}$ and $\Gamma_{n+1}$ are not equal.
%Accordingly, we have the next theorem, where we show that fractal dimension II generalizes fractal dimension I on GF-spaces equipped with the semimetric associated with the fractal structure.
%Thus, we have almost all the necessary hypothesis in order to show that fractal dimension II generalizes to fractal dimension I while considering starbase GF-spaces. Moreover, we need to ensure the fact that all consecutive levels of the fractal structure $\ef$ are not the same thing. So that, we can include a condition consisting on the fact that for each natural number $n\in \mathbb{N}$, there exists some point $x\in F$ such that its stars at consecutive levels $\Gamma_{n}$ and $\Gamma_{n+1}$ are not equal.

\begin{teo}\label{prop:8}
Let $(X,\ef)$ be a starbase GF-space equipped with the semimetric associated with the fractal structure $\ef$, and let $F$ be a subset of $X$. Suppose also that for all $n\in \mathbb{N}$ there exists $x\in F$ such that $St(x,\Gamma_{n})\neq St(x,\Gamma_{n+1})$. Then,
\begin{enumerate}
\item $\overline{\dim}_{\ef}^{1}(F)=\overline{\dim}_{\ef}^{2}(F)$.
\item $\underline{\dim}_{\ef}^{1}(F)=\underline{\dim}_{\ef}^{2}(F)$.
\item If there exists any of the fractal dimensions I or II, then
$\dim_{\ef}^{1}(F)=\dim_{\ef}^{2}(F)$.
\end{enumerate}
\end{teo}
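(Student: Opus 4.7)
The plan is to show that under the stated hypotheses, one has exactly $\delta(F,\Gamma_n)=\frac{1}{2^n}$ for every $n\in \mathbb{N}$; once this is established, the identity of the two fractal dimensions follows by direct substitution into the defining limits, since
\begin{equation*}
\frac{\log N_n(F)}{-\log \delta(F,\Gamma_n)} \;=\; \frac{\log N_n(F)}{-\log (1/2^n)} \;=\; \frac{\log N_n(F)}{n\log 2},
\end{equation*}
so the upper, lower, and (when existing) ordinary limits on the right-hand sides of definitions \ref{def:1} and \ref{dimf:2} coincide. Thus the entire proof reduces to a careful computation of $\delta(F,\Gamma_n)$.

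For the upper bound $\delta(F,\Gamma_n)\leq \frac{1}{2^n}$, I would fix any $A\in \Gamma_n$ and any two points $u,v\in A$. Since $A\in \Gamma_n$ and $u\in A$, by definition $v\in St(u,\Gamma_n)$. Reading off the semimetric $\rho$ from formula (\ref{eq:72}), one sees that $y\in St(x,\Gamma_n)$ forces $\rho(x,y)\leq \frac{1}{2^n}$ (either $y=x$, or $y\in St(x,\Gamma_m)\setminus St(x,\Gamma_{m+1})$ for some $m\geq n$, giving $\rho(x,y)=\frac{1}{2^m}\leq \frac{1}{2^n}$). Hence $\delta(A)\leq \frac{1}{2^n}$, and taking the supremum over $A\in \mathcal{A}_n(F)$ yields the desired inequality.

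For the lower bound $\delta(F,\Gamma_n)\geq \frac{1}{2^n}$, I would invoke the non-triviality hypothesis: choose $x\in F$ with $St(x,\Gamma_n)\neq St(x,\Gamma_{n+1})$. Since $\ef$ is a fractal structure, $\Gamma_{n+1}$ refines $\Gamma_n$, so $St(x,\Gamma_{n+1})\subseteq St(x,\Gamma_n)$; hence I can pick $y\in St(x,\Gamma_n)\setminus St(x,\Gamma_{n+1})$. By (\ref{eq:72}) this gives $\rho(x,y)=\frac{1}{2^n}$. Moreover, $y\in St(x,\Gamma_n)$ means there exists $A\in \Gamma_n$ with $x,y\in A$; since $x\in F$, the element $A$ belongs to $\mathcal{A}_n(F)$, and $\delta(A)\geq \rho(x,y)=\frac{1}{2^n}$, which proves the lower bound.

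The main conceptual step—and really the only one where the hypotheses do essential work—is the lower bound: without the condition that $St(x,\Gamma_n)\neq St(x,\Gamma_{n+1})$ for some $x\in F$ at every level, one could have $\delta(F,\Gamma_n)$ strictly smaller than $\frac{1}{2^n}$ and the two dimensions would disagree. Everything else (that $\Gamma_{n+1}\preceq \Gamma_n$ propagates through the stars, and that $\rho$ takes the stated discrete values) is already built into the semimetric construction from \cite{SGnd} and into the fractal-structure axioms. Once $\delta(F,\Gamma_n)=\frac{1}{2^n}$ is in hand, items (1), (2) and (3) are simultaneously established by the single substitution displayed above, with no further limit manipulation required.
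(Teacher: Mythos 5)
Your proof is correct and takes essentially the same approach as the paper's: both derive $\delta(F,\Gamma_{n})=\frac{1}{2^{n}}$ from the hypothesis $St(x,\Gamma_{n})\neq St(x,\Gamma_{n+1})$ together with the semimetric formula (\ref{eq:72}), and then substitute into the defining limits. Your version is merely a little more explicit, separating the upper bound (every $A\in\Gamma_{n}$ has $\delta(A)\leq\frac{1}{2^{n}}$) from the lower bound (the witness $y\in St(x,\Gamma_{n})\setminus St(x,\Gamma_{n+1})$), where the paper compresses both into the single claim $\delta(A)=\frac{1}{2^{n}}$ for an $A\not\subseteq St(x,\Gamma_{n+1})$.
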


\begin{proof}
%\begin{enumerate}
%\item %Using the semimetric associated with the fractal structure $\ef$, we obtain that $\delta(A)\leq \frac{1}{2^{n}}$ for all $A\in \Gamma_{n}$.

Given $n \in \mathbb{N}$, there exists $x\in F$ such that $St(x,\Gamma_{n})\neq St(x,\Gamma_{n+1})$. This implies that there exists $A\in \Gamma_n$ with $x \in A$ and such that $A\not\subseteq St(x,\Gamma_{n+1})$. Thus, $\delta(A)=\frac{1}{2^{n}}$, and then, $\delta(F,\Gamma_{n})=\frac{1}{2^{n}}$. Therefore,
%Since for every natural number $n\in \mathbb{N}$ there exists $x\in F$ such that $St(x,\Gamma_{n})\neq St(x,\Gamma_{n+1})$, then we can affirm that there exists $A\in \Gamma_{n}$ such that $x\in A\not\subseteq St(x,\Gamma_{n+1})$. Hence, $\delta(A)=\frac{1}{2^{n}}$ for all $A\in \Gamma_{n}$, which implies that $\delta(F,\Gamma_{n})=\frac{1}{2^{n}}$.
%\begin{equation}\label{eq:5.1}
$\varlimsup_{n\rightarrow \infty}\frac{\log N_{n}(F)}{-\log \delta(F,\Gamma_{n})}=\varlimsup_{n\rightarrow \infty}\frac{\log N_{n}(F)}{n\log 2}$.
%\end{equation}
%so taking upper limits when $n\rightarrow \infty$ at expression \ref{eq:5.1}, we get the desired result.
%\item %It suffices with taking lower limits as $n\rightarrow \infty$ at expression \ref{eq:5.1}.
The case for lower limits may be dealt with in the same way.

%\item It is an immediate consequence of the two previous results.
%\end{enumerate}
\end{proof}

Fractal dimension II also generalizes the fractal dimension I as well as the box-counting dimension on the euclidean space $\mathbb{R}^{d}$ equipped with its natural fractal structure. In this way, we get a result like theorem \ref{prop:1} on the previous section.
%In this way, we get a similar result to theorem \ref{prop:1} with fractal dimension I.

\begin{teo}\label{prop:2}
Let $\ef$ be the natural fractal structure on the euclidean space $\mathbb{R}^{d}$, and let $F$ be a subset of $\mathbb{R}^{d}$. Then ${\dim}_{\ef}^{1}(F)={\dim}_{\ef}^{2}(F)=\dim_{B}(F)$ (the same is true for upper and lower dimensions).
%\begin{enumerate}
%\item $\overline{\dim}_{\ef}^{1}(F)=\overline{\dim}_{\ef}^{2}(F)=\overline{\dim}_{B}(F)$.
%\item $\underline{\dim}_{\ef}^{1}(F)=\underline{\dim}_{\ef}^{2}(F)=\underline{\dim}_{B}(F)$.
%\item If there exists any of the previous dimensions, then ${\dim}_{\ef}^{1}(F)={\dim}_{\ef}^{2}(F)=\dim_{B}(F)$.
%\end{enumerate}
\end{teo}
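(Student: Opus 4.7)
The plan is to reduce everything to the already-established equality $\dim_{\ef}^{1}(F)=\dim_{B}(F)$ from Theorem \ref{prop:1}, and then simply check that passing from $\dim_{\ef}^{1}$ to $\dim_{\ef}^{2}$ only replaces the normalizing factor $n\log 2$ by $n\log 2 - \log\sqrt{d}$ in the denominator, which does not change the limit.

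First I would compute the diameter $\delta(F,\Gamma_{n})$ explicitly. Every element of the natural fractal structure on $\mathbb{R}^{d}$ is a cube of side length $\tfrac{1}{2^{n}}$, hence has Euclidean diameter exactly $\tfrac{\sqrt{d}}{2^{n}}$. Since $F$ is assumed non-empty and bounded, $\mathcal{A}_{n}(F)$ is non-empty for every $n$, and all of its elements have the same diameter $\tfrac{\sqrt{d}}{2^{n}}$. Therefore
\begin{equation*}
\delta(F,\Gamma_{n})=\frac{\sqrt{d}}{2^{n}}\quad \text{for all } n\in \mathbb{N}.
\end{equation*}

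Next I would substitute this value in the definition of fractal dimension II. Writing $L_{n}=\log N_{n}(F)$, one has
\begin{equation*}
\frac{\log N_{n}(F)}{-\log \delta(F,\Gamma_{n})} = \frac{L_{n}}{n\log 2 - \log \sqrt{d}} = \frac{L_{n}}{n\log 2}\cdot \frac{1}{1-\frac{\log \sqrt{d}}{n\log 2}}.
\end{equation*}
As $n\to\infty$ the second factor tends to $1$, so taking $\varlimsup$ and $\varliminf$ on both sides yields
\begin{equation*}
\overline{\dim}^{2}_{\ef}(F)=\overline{\dim}^{1}_{\ef}(F), \quad \underline{\dim}^{2}_{\ef}(F)=\underline{\dim}^{1}_{\ef}(F).
\end{equation*}
Combined with Theorem \ref{prop:1}, this gives the three chains of equalities in the statement, both for lower and upper versions, and hence for the fractal dimension when it exists.

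There is no real obstacle: once one observes that all cubes of $\Gamma_{n}$ share the same Euclidean diameter, the equality $\dim^{1}_{\ef}=\dim^{2}_{\ef}$ reduces to absorbing the constant $\log\sqrt{d}$ in the denominator, and the reduction to the box-counting dimension is already available. The only point to be slightly careful about is that we cannot invoke Theorem \ref{prop:8} directly, since here $\mathbb{R}^{d}$ is equipped with its Euclidean metric rather than with the semimetric associated with $\ef$; however the explicit computation of $\delta(F,\Gamma_{n})$ makes this unnecessary.
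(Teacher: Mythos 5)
Your proof is correct and follows essentially the same route as the paper: invoke Theorem \ref{prop:1} for $\dim^{1}_{\ef}(F)=\dim_{B}(F)$, observe that every cube of $\Gamma_{n}$ has diameter $\frac{\sqrt{d}}{2^{n}}$ so that $\delta(F,\Gamma_{n})=\frac{\sqrt{d}}{2^{n}}$, and conclude that the constant $\log\sqrt{d}$ in the denominator is asymptotically negligible. You merely make explicit the absorption of that constant, which the paper leaves implicit.
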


\begin{proof}
By theorem \ref{prop:1}, $\overline{\dim}_{\ef}^{1}(F)=\overline{\dim}_{B}(F)$.

%In order to calculate the box-counting dimension of $F$, let $N_{\delta}(F)$ be as (\ref{bc3}) on equivalent box-counting definition seen at preliminary section $2$.
%we are going to consider the quantity $N_{\delta}(F)$ as the number of $\delta-$cubes which meet $F$.
%\begin{enumerate}
%\item
%First of all, it is clear that $\overline{\dim}_{\ef}^{1}(F)=\overline{\dim}_{B}(F)$ by means of theorem \ref{prop:1}.

Note also that $\delta(A)=\frac{\sqrt{d}}{2^{n}}$ for all $A\in \Gamma_{n}$. Accordingly, we have that
%\begin{equation}
%\begin{split}
$\varlimsup_{n\rightarrow \infty} \frac{\log N_{n}(F)}{-\log \delta(F,\Gamma_{n})}
%=\varlimsup_{n\rightarrow \infty} \frac{\log N_{n}(F)}{-\log \sqrt{d}+n \ \log 2}
=\varlimsup_{n\rightarrow \infty} \frac{\log N_{n}(F)}{n \ \log 2}$
%\end{split}
%\end{equation}
which implies that $\overline{\dim}_{\ef}^{1}(F)=\overline{\dim}_{\ef}^{2}(F)$. The case for lower limits may be dealt with in the same way.
%\item
%Using an analogous argument to the previous one, we obtain that $\underline{\dim}_{\ef}^{2}(F)=\underline{\dim}_{\ef}^{1}(F)$.
%\item
%It suffices with applying the two previous results and taking into account the fact that the existence of any of those dimensions implies that its upper and lower dimensions are equal to itself.
%\end{enumerate}
\end{proof}

As seen on proposition \ref{prop:6} for fractal dimension I, we can also study some analytical properties for the fractal dimension
II definition. In this way, let $\ef$ be a fractal structure on a distance space $(X,\rho)$. First of all,
it is clear that both $\underline{\dim}^{2}_{\ef}$ and
$\overline{\dim}^{2}_{\ef}$ are monotonic. Furthermore, since fractal
dimension II generalizes fractal dimension I (by means of theorem \ref{prop:8}),
then any counterexample available for fractal dimension I remains valid for fractal
dimension II. In particular, we can use those given for statements
\ref{it:3}, \ref{it:4} and \ref{it:5} on proposition \ref{prop:6} for fractal
dimension II. Nevertheless, unlike fractal dimension I, we affirm that fractal
dimension II does not verify the finite stability, as shown in the next example.
\begin{ejem}
%Let $C_{1}$ be the middle third Cantor set on $[0,1]$ with $\ef_{1}$ being the natural fractal structure on $C_{1}$ as a self-similar set.
%Let $\ef_{1}$ be the
%natural fractal structure, as a self-similar set, on the middle-third Cantor set
%on $[0,1]$, which we
%denote by $F_{1}$.
Let $\ef_{1}$ be the natural fractal structure on $C_{1}$ as a self-similar set, where $C_{1}$ is the middle third Cantor set on $[0,1]$.
Let also $\ef_{2}$ be a fractal structure on $C_{2}=[2,3]$
given by $\ef_{2}=\{\Gamma_{2,n}:n\in \mathbb{N}\}$, with
$\Gamma_{2,n}=\{[\frac{k}{2^{2n}},\frac{k+1}{2^{2n}}]:k\in
\{2^{2n+1},2^{2n+1}+1,\ldots, 3\cdot 2^{2n}-1\}\}$ for all natural number $n$.
Consider now $\ef=\{\Gamma_{n}:n\in \mathbb{N}\}$ as
a fractal structure on $C=C_{1}\cup C_{2}$, where $\Gamma_{n}=\Gamma_{1,n}\cup
\Gamma_{2,n}$ for all $n\in \mathbb{N}$. A
simple calculation leads to $\dim_{\ef}^{2}(C_{1})=\frac{\log 2}{\log 3}$ and
$\dim_{\ef}^{2}(C_{2})=1$, while $\dim_{\ef}^{2}(C)=\frac{\log 4}{\log 3}> 1$.
\end{ejem}
Recall that on remark \ref{obs:2} we have shown that the fractal dimension I and the box-counting dimension of the middle third Cantor $C$ set did not agree: indeed, we got that $\dim_{B}(C)=\frac{\log 2}{\log 3}$, unlike $\dim_{\ef}^{1}(C)=1$. Note that these dimensions were calculated respect to different fractal structures: we used the natural fractal structure on the real line induced on $C$ in order to calculate the box-counting dimension, and on the other hand, we selected the natural fractal structure on the self-similar set to obtain the fractal dimension I of such space. However, the fractal dimension II agrees with the box-counting dimension of $C$. In this way, we consider again the natural fractal structure on $C$ as a self-similar set. Note that
%\begin{equation}
%\begin{split}
$\dim_{\ef}^{2}(C)
%&=\lim_{n\rightarrow \infty} \frac{\log N_{n}(C)}{-\log \delta(C,\Gamma_{n})}\\
=\lim_{n\rightarrow \infty} \frac{\log 2^{n}}{-\log 3^{-n}}
=\frac{\log 2}{\log 3}=\dim_{B}(C)$
%\end{split}
%\end{equation}
since on each level $\Gamma_{n}$ of the fractal structure there are $2^{n}$ elements with diameters equal to $3^{-n}$.

Nevertheless, though the value obtained from fractal dimension I of $C$ may seems counterintuitive at a first time, it is possible to explains it by means of the fractal dimension II. Indeed, the reason is that fractal dimension I only depends on the fractal structure we select in order to calculate it. We study this fact in the next remark.
%the fact that the value obtained for fractal dimension I calculation depends on the fractal structure $\ef$ we are using in the starting GF-space $(X,\ef)$. Specifically, we considered the GF-spaces $(C,\ef_{i})$ with $C$ being the middle third Cantor set. The fractal structure was not other thing that the natural fractal structure on the real line for the case $i=1$, and the natural fractal structure as a self-similar set, for the case $i=2$. Thus, fractal dimension I and box-counting dimension (considered respect to its corresponding fractal structures) did not agree.\newline
%However, if we consider the natural fractal structure on this strict self-similar set, then we have that on each level $\Gamma_{n}$ of the fractal structure $\ef$ there are $2^{n}$ elements with diameters equal to $\frac{1}{3^{n}}$. Thus, the effective calculation for fractal dimension II of $C$ returns us the following:
%\begin{equation}
%\begin{split}
%\dim_{\ef}^{2}(C)
%&=\lim_{n\rightarrow \infty} \frac{\log N_{n}(C)}{-\log \delta(C,\Gamma_{n})}\\
%&=\lim_{n\rightarrow \infty} \frac{\log 2^{n}}{-\log \frac{1}{3^{n}}}\\
%&=\frac{\log 2}{\log 3}=\dim_{B}(C).
%\end{split}
%\end{equation}
%Note that fractal dimension II and box-counting dimension calculated over the middle third Cantor set agree, although we are using different fractal structures on $C$.
\begin{obs} \label{obs:5}
Fractal dimension I only depends on the fractal structure, while fractal dimension II also depends on the diameter of the elements of each level of the fractal structure. We show this difference by constructing a family of spaces which from the fractal structure point of view are the same.
\end{obs}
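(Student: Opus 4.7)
The plan is to exhibit a one-parameter family of compact metric spaces, each carrying a fractal structure, so that the underlying combinatorial data of the fractal structure (index set of each level, refinement pattern, incidence pattern between pieces) is identical across the family, while the metric diameters of the pieces at level $n$ vary with the parameter. Then, by direct computation, fractal dimension I will give a value independent of the parameter, while fractal dimension II will vary with it, proving the assertion.

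Concretely, for each $r \in (0, 1/2)$, I would take the Cantor-type self-similar set $C_r \subset [0,1]$ associated with the IFS $f_1(x)=rx$ and $f_2(x)=rx+(1-r)$, and equip it with its natural fractal structure $\ef_r = \{\Gamma_n^r: n\in \mathbb{N}\}$ as a self-similar set, so that $\Gamma_n^r = \{f_\omega^n(C_r): \omega \in \{1,2\}^n\}$. For every $r$ the level $\Gamma_n^r$ consists of $2^n$ pieces indexed by the same set $\{1,2\}^n$, refinement between levels is the same word-extension rule, and two pieces at level $n$ intersect in the same combinatorial pattern (they are pairwise disjoint). In this precise sense, the fractal structures $\ef_r$ are canonically isomorphic across the family, so they are ``the same from the fractal structure point of view'' even though the ambient metric and the diameters of the pieces depend on $r$.

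Now I would just compute both dimensions. For fractal dimension I, $N_n(C_r) = 2^n$ is independent of $r$, hence
\begin{equation*}
\dim_{\ef_r}^{1}(C_r) = \lim_{n\rightarrow \infty} \frac{\log 2^n}{n \log 2} = 1
\end{equation*}
for every $r \in (0, 1/2)$. For fractal dimension II, each element of $\Gamma_n^r$ is an interval of length $r^n$, so $\delta(C_r,\Gamma_n^r) = r^n$, and
\begin{equation*}
\dim_{\ef_r}^{2}(C_r) = \lim_{n\rightarrow \infty} \frac{\log 2^n}{-\log r^n} = \frac{\log 2}{-\log r},
\end{equation*}
which depends non-trivially on $r$ and agrees with the usual Hausdorff/box-counting value of $C_r$. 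The two quantities coincide precisely at $r = 1/2$ and diverge as $r \to 0$.

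The main obstacle is the informal notion ``the same from the fractal structure point of view''; the paper does not single out a formal notion of isomorphism of fractal structures. I would handle this by making the identification explicit through the common indexing by words $\omega \in \bigcup_n \{1,2\}^n$, since the only structure used by $\dim^1$ is the cardinality of $\Gamma_n^r$ and the set-theoretic incidence in the levels, both of which are obviously preserved by the bijection $f_\omega^n(C_r) \mapsto f_\omega^n(C_{r'})$. Once this is granted, the computation above makes the claim of the remark self-evident.
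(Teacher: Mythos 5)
Your proposal is correct and follows essentially the same route as the paper: the authors also take a one-parameter family of Cantor-type self-similar sets with contraction ratios $c_i \in [\frac{1}{3},\frac{1}{2})$, equip each with its natural fractal structure, and observe that $\dim^{1}_{\ef}$ is identically $1$ while $\dim^{2}_{\ef}(C_i)=\frac{\log 2}{-\log c_i}$ varies with the ratio. Your explicit word-indexing justification of ``the same from the fractal structure point of view'' is a slight elaboration on what the paper leaves implicit, but the construction and computation are the same.
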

\begin{proof}
Indeed, let us consider slight modifications on the construction of the middle third Cantor set, which we are going to denote by $C_{i}$, such that their associated contraction factors are $c_i\in [\frac{1}{3},\frac{1}{2})$ for the two similarities that give $C_i$. Thus, it is clear that $\delta(C_{i},\Gamma_{n})=c_i^{n}$ for all natural number $n$. Therefore, consider the natural fractal structure $\ef_{i}$ on each space $C_{i}$ as a self-similar set. Then, an easy calculation yields (or apply theorem \ref{bcfr})
%Furthermore, in order to calculate the fractal dimension II of $C_{i}$, we consider the family of GF-spaces $(C_{i},\ef_{i})$, where $\ef_{i}$ is the natural fractal structure as a strict self-similar set. Then,
%\begin{equation}
$\dim_{B}(C_{i})=\dim_{\ef}^{2}(C_{i})=\frac{\log 2}{-\log c_i}\rightarrow 1=\dim_{\ef}^{1}(C)$,
%\end{equation}
when $c_i\rightarrow \frac{1}{2}$.
\end{proof}

%By means of theorem \ref{teo:1}, we have established an upper bound for the box-counting dimension of any subset of a GF-space. In order to do this, we work under the hypothesis consisting of that the sequence of diameters of each level on the fractal structure decreases on a geometric way.
It is also possible to find an upper bound for the box-counting dimension of any subset in terms of its fractal dimension II, under the natural hypothesis $\delta(F,\Gamma_{n})\rightarrow 0$. Furthermore, we find a connection between Hausdorff dimension and fractal dimension II.

\begin{teo}\label{prop:3}
Let $\ef$ be a fractal structure on a metric space $(X,\rho)$ with $F$ being a subset of $X$, and suppose that $\delta(F,\Gamma_{n})\rightarrow 0$. Then,
\begin{enumerate}
\item $\dim_{H}(F)\leq \underline{dim}_{B}(F)\leq \underline{\dim}_{\ef}^{2}(F)$. \label{ap:1}
%Additionally, if $\dim_{\ef}^{2}(F)$ is  finite, then $\mathcal{H}^{s}(F)< \infty$.
\item If there exist both box-counting dimension and fractal dimension II of $F$, then $\dim_{B}(F)\leq \dim_{\ef}^{2}(F)$.
\item Suppose also that there exists a constant $c>0$ such that $\delta(F,\Gamma_{n})\leq c\ \delta(F, \Gamma_{n+1})$ for all natural number $n$. Then, \newline$\overline{\dim}_{B}(F)\leq \overline{\dim}_{\ef}^{2}(F)$.  \label{ap:2}
\end{enumerate}
\end{teo}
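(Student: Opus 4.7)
The plan is to reduce all three items to the single elementary fact that, for each $n$, the family $\mathcal{A}_n(F)$ is a cover of $F$ by sets of diameter at most $\delta(F,\Gamma_n)$; hence, with $N_\delta(F)$ interpreted as in equivalent definition (\ref{d:5}), one has the key comparison $N_{\delta(F,\Gamma_n)}(F)\leq N_n(F)$. The hypothesis $\delta(F,\Gamma_n)\to 0$ also ensures $-\log\delta(F,\Gamma_n)>0$ eventually, so the inequality survives on dividing.

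For item (\ref{ap:1}) the classical bound $\dim_H(F)\leq\underline{\dim}_B(F)$ from \cite[Ch.~3]{FAL90} handles the left inequality. For the right one, I would exploit that restricting a liminf to any subsequence $\delta_n\to 0$ can only increase it; then taking $\delta_n=\delta(F,\Gamma_n)$ and applying the key comparison yields $\underline{\dim}_B(F)\leq\underline{\dim}_\ef^2(F)$ in one line. Item (2) then follows at once, since existence of both limits reduces the claim to the inequality between the lower versions just proved.

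For item (\ref{ap:2}), the strategy is to convert the limsup over $\delta\to 0$ that defines $\overline{\dim}_B(F)$ into a limsup along the discrete scale $\delta_n=\delta(F,\Gamma_n)$. Two ingredients are needed: first, that $\{\delta_n\}$ is itself decreasing, which follows because the refinement property of $\ef$ forces each $B\in\mathcal{A}_{n+1}(F)$ to sit inside some $A\in\Gamma_n$ with $A\cap F\supseteq B\cap F\neq\emptyset$, hence $A\in\mathcal{A}_n(F)$ and $\delta(B)\leq\delta(F,\Gamma_n)$; second, the additional hypothesis $\delta_n\leq c\,\delta_{n+1}$, which supplies the geometric lower bound $\delta_{n+1}\geq (1/c)\delta_n$ required by remark \ref{rem:F}. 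With that remark in hand, the key comparison delivers $\overline{\dim}_B(F)\leq\overline{\dim}_\ef^2(F)$ in one step.

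The real obstacle here is the asymmetry between liminf and limsup: a subsequence liminf automatically dominates the full liminf, but a subsequence limsup need not equal the full limsup unless the chosen scales are geometrically dense. That is exactly why the extra geometric hypothesis appears only in (\ref{ap:2}) and not in (\ref{ap:1}), and why remark \ref{rem:F} must be invoked at precisely that point.
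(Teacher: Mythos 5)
Your proposal is correct and follows essentially the same route as the paper: the covering comparison $N_{\delta(F,\Gamma_n)}(F)\leq N_n(F)$ from definition (\ref{d:5}), the liminf--subsequence monotonicity for item (\ref{ap:1}) together with \cite[Prop.~4.1]{FAL90}, and remark \ref{rem:F} for item (\ref{ap:2}). The only detail the paper spells out that you leave implicit is that the hypotheses force $c>1$ (otherwise $\{\delta(F,\Gamma_n)\}$ would be nondecreasing, contradicting $\delta(F,\Gamma_n)\to 0$), so that $1/c$ really lies in $(0,1)$ as remark \ref{rem:F} requires.
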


\begin{proof}
In order to calculate the box-counting dimension of $F$, let $N_{\delta}(F)$ be as ($5$) on equivalent box-counting definition seen at preliminary section $2$, and take $\delta_{n}=\delta(F,\Gamma_{n})$ for all $n\in \mathbb{N}$.
%Consider the decreasing sequence whose general term is $\delta_{n}=\delta(F,\Gamma_{n})$ which converges to $0$ when taking $n\rightarrow \infty$, and let $N_{\delta}(F)$ be the smallest number of sets of diameters at most $\delta$ that cover $F$, in order to calculate the box-counting dimension of $F$.
\begin{enumerate}
\item First of all, it is clear that $F\subset \cup\{A: A\in \mathcal{A}_{n}(F)\}$. Further, $\delta(A)\leq \delta(F,\Gamma_{n})=\delta_{n}$ for all $A\in \mathcal{A}_{n}(F)$. Accordingly, $F$ can be covered by $N_{n}(F)$ sets with diameters at most $\delta_{n}$, so applying [\cite{FAL90},Prop. 4.1] we get:
%\begin{equation}
%\begin{split}
$\dim_{H}(F)
\leq \underline{\dim}_{B}(F)
%&\leq \varliminf_{k\rightarrow \infty} \frac{\log n_{k}}{-\log \delta_{k}}\\
=\varliminf_{n\rightarrow \infty} \frac{\log N_{n}(F)}{-\log \delta_{n}}
=\underline{\dim}_{\ef}^{2}(F)$.
%\end{split}
%\end{equation}

%****************************LO SIGUIENTE ESTA MAL?
%On the other hand, if $\dim_{\ef}^{2}(F)<\infty$, then $N_{k}(F)<\infty$, so $n_{k}\cdot \delta_{k}^{s}=N_{k}(F)\cdot \delta(F,\Gamma_{k})^{s}\rightarrow 0$ when taking $k\rightarrow \infty$, since it is the product of a bounded value by the general term of a sequence that converges to $0$. Thus, an application of [\cite{FAL90}, Proposition 4.1] can allow us to assert that $\mathcal{H}^{s}(F)<\infty$, as desired.

\item It suffices with taking into account the previous result as well as the existence of both fractal dimensions.

\item Let $c\in (0,1]$ be such that $\delta(F,\Gamma_{n})\leq c\
\delta(F,\Gamma_{n+1})$ for all $n\in \mathbb{N}$. Thus, $\delta(F,\Gamma_{n})$
does not converges to $0$, which is a contradiction, so that $c>1$. Hence, there
exists $d\in (0,1)$ such that $\delta_{n+1}\geq d\ \delta_{n}$ for all $n\in
\mathbb{N}$. Therefore, using remark \ref{rem:F},
%\begin{equation}
%\begin{split}
$\overline{\dim}_{B}(F)
=\varlimsup_{n\rightarrow \infty}\frac{\log N_{\delta_{n}}(F)}{-\log \delta_{n}}
\leq \varlimsup_{n\rightarrow \infty}\frac{\log N_{n}(F)}{-\log \delta(F,\Gamma_{n})}$
%\end{split}
%\end{equation}
since $\delta(A)\leq \delta_{n}$ for all $A\in \mathcal{A}_{n}(F)$.
\end{enumerate}
\end{proof}

In this way, as a consequence of theorem \ref{prop:3}, it is possible to get an approximation of the box-counting dimension of a self-similar set in terms of its fractal dimension II.

\begin{cor} \label{cor:8}
Let $I=\{1,\ldots, m\}$ be a finite index set with $(\mathbb{R}^{d},\{f_{i}:i\in I\})$ being an IFS whose associated self-similar set is $K$. Let also $\ef$ be the natural fractal structure on the self-similar set, and let $F$ be a subset of $K$. Then,
\begin{enumerate}
\item $\underline{\dim}_{B}(F)\leq \underline{\dim}_{\ef}^{2}(F)$. \label{cor:81}
\item If there exist both box-counting dimension and fractal dimension II of $F$, then $\dim_{B}(F)\leq \dim_{\ef}^{2}(F)$. \label{cor:82}
\item Suppose that there exists $i\in I$ such that $f_{i}$ is a bilipschitz contractive mapping. Then $\overline{\dim}_{B}(F)\leq \overline{\dim}_{\ef}^{2}(F)$. In particular, this inequality is verified for strict self-similar sets. \label{cor:83}
\end{enumerate}
\end{cor}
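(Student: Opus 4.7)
The plan is to deduce each item from the corresponding item of Theorem \ref{prop:3} by verifying its hypotheses for the natural fractal structure on the self-similar set $K$. The common hypothesis is $\delta(F,\Gamma_n)\to 0$. I will establish it by a direct contraction estimate: since each $f_i$ is contractive, there exists $c_i\in(0,1)$ with $\mathrm{Lip}(f_i)\leq c_i$; setting $c=\max_{i\in I}c_i<1$, every composition $f_\omega^n$ has Lipschitz constant at most $c^n$, so $\delta(f_\omega^n(K))\leq c^n\delta(K)$ for every $\omega\in I^n$. Thus $\delta(F,\Gamma_n)\leq c^n\delta(K)\to 0$, and items (1) and (2) follow immediately from Theorem \ref{prop:3}(1) and (2).

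For item (3) I must additionally check the ratio hypothesis of Theorem \ref{prop:3}(3): a constant $c'>0$ such that $\delta(F,\Gamma_n)\leq c'\,\delta(F,\Gamma_{n+1})$ for every $n$. The key combinatorial observation is that, given $A=f_\omega^n(K)\in\mathcal{A}_n(F)$ with $x\in A\cap F$, writing $x=f_\omega^n(y)$ and using $K=\bigcup_{j\in I}f_j(K)$ produces some $j\in I$ with $x\in f_\omega^n(f_j(K))=:B\in\Gamma_{n+1}$, so that $B\in\mathcal{A}_{n+1}(F)$. It then suffices to exhibit a uniform $a>0$ for which some such witness child satisfies $\delta(B)\geq a\,\delta(A)$; taking the supremum over $A$ yields $\delta(F,\Gamma_{n+1})\geq a\,\delta(F,\Gamma_n)$, whence $c'=1/a$.

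For the parenthetical claim on strict self-similar sets every $f_j$ is a similarity of ratio $c_j>0$, so $\delta(B)=c_j\,\delta(A)\geq c_{\min}\,\delta(A)$ with $c_{\min}=\min_j c_j$; this yields $c'=1/c_{\min}$ and is the explicit computation I would write out. In the general bilipschitz case only $f_{i_0}$ provides a lower Lipschitz constant $a_{i_0}>0$; the index $j$ forced by the point $x$ need not equal $i_0$, so a direct bound on $\delta(B)/\delta(A)$ is unavailable. The remedy I would pursue is to exploit the bilipschitz inequality $\delta(f_{i_0}(S))\geq a_{i_0}\,\delta(S)$ in order to produce an alternative element of $\mathcal{A}_{n+1}(F)$ associated to $A$ whose diameter is uniformly comparable to $\delta(A)$, at the cost of invoking the self-similar structure $K=\bigcup_{j}f_j(K)$ at the appropriate position of the word $\omega$.

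The main obstacle is precisely this last step in the general bilipschitz setting: extracting from the presence of a single bilipschitz map $f_{i_0}$ a uniform-in-$A$ lower bound on the size of some element of $\mathcal{A}_{n+1}(F)$ built from $A$. For strict self-similar sets the argument is immediate as above, and once the ratio hypothesis is in hand Theorem \ref{prop:3}(3) delivers $\overline{\dim}_B(F)\leq\overline{\dim}_\ef^2(F)$ at once.
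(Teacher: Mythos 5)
Your items (1) and (2) are correct and follow the paper's own route: both reduce to Theorem \ref{prop:3} once $\delta(F,\Gamma_n)\to 0$ is checked, and your uniform contraction estimate $\delta(f^n_\omega(K))\leq c^n\,\delta(K)$ with $c=\max_i c_i<1$ does exactly that (the paper merely asserts this convergence as ``clear''). Your argument for item (3) in the strict self-similar case is also complete: the child $B=f^n_\omega(f_j(K))$ forced by a point $x\in A\cap F$ genuinely lies in $\mathcal{A}_{n+1}(F)$ because it contains $x\in F$, and $\delta(B)=c_j\,\delta(A)\geq c_{\min}\,\delta(A)$ gives the ratio hypothesis of Theorem \ref{prop:3}(3).

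The gap is the one you flag yourself: in the general case, where only a single map $f_{i_0}$ is assumed bilipschitz, you never produce an element of $\mathcal{A}_{n+1}(F)$ of diameter at least $L_{i_0}\,\delta(F,\Gamma_n)$, so the hypothesis of Theorem \ref{prop:3}(3) is not verified and item (3) is not proved as stated. The paper closes this step by a different choice of child: it takes $A\in\mathcal{A}_n(F)$ of maximal diameter and sets $B=f_{i_0}(A)\in\Gamma_{n+1}$ (prepending the bilipschitz index to the word, rather than appending the index dictated by a point of $A\cap F$); the lower Lipschitz bound then gives $\delta(B)\geq L_{i_0}\,\delta(A)$ immediately, hence $\delta(F,\Gamma_{n+1})\geq L_{i_0}\,\delta(F,\Gamma_n)$. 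Be aware, though, that this last inequality tacitly requires $B\cap F\neq\emptyset$, i.e.\ $B\in\mathcal{A}_{n+1}(F)$ --- precisely the membership obstruction that stopped you --- and the paper does not verify it for an arbitrary subset $F$ of $K$ (it is automatic for $F=K$, since $f_{i_0}(A)$ is a nonempty subset of $K$). So your scruple is well placed; to recover the full statement you would have to adopt the paper's choice of $B=f_{i_0}(A)$ and either supply or accept that unproved intersection claim.
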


\begin{proof}
In order to calculate the box-counting dimension of $F$, let $N_{\delta}(F)$ be as ($5$) on equivalent box-counting definition seen at preliminary section $2$.
\begin{enumerate}
\item Since $K$ is a self-similar set, it is clear that $\delta(F, \Gamma_{n})\rightarrow 0$ for all $F\subset K$, so it suffices with applying the theorem \ref{prop:3}.\ref{ap:1}.
%we have that $\underline{\dim}_{B}(F)\leq \underline{\dim}_{\ef}^{2}(F)$.
\item This is by the first item.
%the box-counting dimension and the fractal dimension II of $F\subset K$, then $\dim_{B}(F)=\underline{\dim}_{B}(F)\leq \underline{\dim}_{\ef}^{2}(F)=\dim_{\ef}^{2}(F)$.
\item Let $f_{i}$ be a bilipschitz contraction. Thus, there exist constants $L_{i}$ and $c_{i}$ with $0<L_{i}<c_{i}<1$, such that $L_{i}\ d(x,y)\leq d(f_{i}(x),f_{i}(y))\leq c_{i}\ d(x,y)$ for all $x,y\in K$. Now, consider $A\in \mathcal{A}_{n}(F)$ verifying that $\delta(A)=\delta(F,\Gamma_{n})$. By definition of supremum, we have that for all $\varepsilon>0$, there exist $x,y\in A$ such that $d(x,y)> \delta(A)-\varepsilon$. Let $B=f_{i}(A)\in \Gamma_{n+1}$. Thus, we have that $d(f_{i}(x),f_{i}(y))\geq L_{i}\ d(x,y)> L_{i}\ (\delta(A)-\varepsilon)$,  which implies that $\delta(B)=\delta(f_{i}(A))\geq L_{i}\ \delta(A)$. Accordingly, $\delta(F,\Gamma_{n+1})\geq \delta(B)\geq L_{i}\ \delta(F,\Gamma_{n})$, so that it can be applied the proposition \ref{prop:3}.\ref{ap:2} in order to get the result.
\end{enumerate}
\end{proof}

In the next result, we look for properties on the natural fractal structure of an euclidean space in order to generalize theorem \ref{prop:2}.

% Consider the GF-space $(\mathbb{R}^{d},\ef)$ equipped with the euclidean distance, where $\ef$ is the fractal structure associated in a natural way to $\mathbb{R}^{d}$, composed by $\frac{1}{2^{n}}-$cubes on each level $\Gamma_{n}$. Taking it into account, we recall an interesting property of $\mathbb{R}^{d}$, consisting of that given any size $\delta>0$, it is verified that any subset $F\subset \mathbb{R}^{d}$ with $\delta(F)\leq \delta$ meets at most to $3^{d}$ $\delta-$cubes. Since box-counting dimension can be defined over each metrizable space, a similar property to the previous one on a more general context would allow us to find a suitable relationship between fractal dimension II and box-counting dimension. Indeed, we present the following result:
\begin{teo}\label{teo:9}
Let $\ef$ be a fractal structure on a metric space $(X,\rho)$ with $F$ being a subset of $X$, and suppose that there exists a natural number $k\in \mathbb{N}$ such that for all $n\in \mathbb{N}$, every subset $A$ of $X$ with $\delta(A)\leq \delta(F,\Gamma_{n})$, meets at most at $k$ elements of the level $\Gamma_{n}$ on the fractal structure $\ef$. Suppose also that $\delta(F,\Gamma_{n})\rightarrow 0$. Then,
\begin{enumerate}
\item $\underline{\dim}_{B}(F)\leq \underline{\dim}_{\ef}^{2}(F)\leq \overline{\dim}_{\ef}^{2}(F)\leq \overline{\dim}_{B}(F)$. Moreover, if there exists  $\dim_{B}(F)$, then $\dim_{B}(F)=\dim_{\ef}^{2}(F)$.
\item If there exists $c\in (0,1)$ such that $\delta(F,\Gamma_{n+1})\geq c\ \delta(F,\Gamma_{n})$, then $\overline{\dim}_{B}(F)=\overline{\dim}_{\ef}^{2}(F)$ and $\underline{\dim}_{B}(F)=\underline{\dim}_{\ef}^{2}(F)$.
\end{enumerate}
\end{teo}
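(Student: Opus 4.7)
The plan is to prove (1) by establishing the three inequalities separately, with the nontrivial piece being $\overline{\dim}^{2}_{\ef}(F)\leq \overline{\dim}_{B}(F)$, and then to deduce (2) by combining (1) with Theorem \ref{prop:3}(3) and Remark \ref{rem:F}. Throughout I would write $\delta_{n}:=\delta(F,\Gamma_{n})$, and use the equivalent form \ref{d:5} for $N_{\delta}(F)$, namely the smallest number of sets of diameter at most $\delta$ covering $F$.

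First, since $\delta_{n}\to 0$ by hypothesis, Theorem \ref{prop:3}(1) already yields $\underline{\dim}_{B}(F)\leq \underline{\dim}^{2}_{\ef}(F)$, while the middle inequality $\underline{\dim}^{2}_{\ef}(F)\leq \overline{\dim}^{2}_{\ef}(F)$ is automatic. The decisive step is then $\overline{\dim}^{2}_{\ef}(F)\leq \overline{\dim}_{B}(F)$. For each $n$, pick a minimal cover $\{B_{1},\ldots,B_{m}\}$ of $F$ with $\delta(B_{j})\leq \delta_{n}$ and $m=N_{\delta_{n}}(F)$. By the standing hypothesis, each $B_{j}$ meets at most $k$ elements of $\Gamma_{n}$. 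Now any $A\in \mathcal{A}_{n}(F)$ contains some $x\in F$, and $x$ lies in some $B_{j}$, so $A\cap B_{j}\neq \emptyset$; hence $A$ is counted among those at most $k$ elements of $\Gamma_{n}$ meeting $B_{j}$. Summing over $j$ gives the key estimate
\[
N_{n}(F)\leq k\cdot N_{\delta_{n}}(F).
\]
Taking logarithms, dividing by $-\log \delta_{n}$ and letting $n\to \infty$ (so that $\log k/(-\log \delta_{n})\to 0$), and using the fact that $\limsup_{n}f(\delta_{n})\leq \limsup_{\delta\to 0^{+}}f(\delta)$ for any sequence $\delta_{n}\to 0^{+}$, one obtains $\overline{\dim}^{2}_{\ef}(F)\leq \overline{\dim}_{B}(F)$. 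The ``moreover'' clause is then automatic.

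For (2), from (1) I already have $\overline{\dim}^{2}_{\ef}(F)\leq \overline{\dim}_{B}(F)$ and $\underline{\dim}_{B}(F)\leq \underline{\dim}^{2}_{\ef}(F)$. The converse for the upper dimension, $\overline{\dim}_{B}(F)\leq \overline{\dim}^{2}_{\ef}(F)$, follows from Theorem \ref{prop:3}(3) applied with constant $1/c$, since $\delta(F,\Gamma_{n+1})\geq c\,\delta(F,\Gamma_{n})$ rewrites as $\delta(F,\Gamma_{n})\leq (1/c)\,\delta(F,\Gamma_{n+1})$. For the converse for the lower dimension, the growth hypothesis $\delta_{n+1}\geq c\,\delta_{n}$ allows me to invoke Remark \ref{rem:F} and compute $\underline{\dim}_{B}(F)=\varliminf_{n}\log N_{\delta_{n}}(F)/(-\log \delta_{n})$. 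The inequality $N_{n}(F)\leq k\cdot N_{\delta_{n}}(F)$ established in the proof of (1), combined with $\log k/(-\log \delta_{n})\to 0$, then delivers $\underline{\dim}^{2}_{\ef}(F)\leq \underline{\dim}_{B}(F)$, finishing (2).

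The only substantive step is the covering argument that yields $N_{n}(F)\leq k\cdot N_{\delta_{n}}(F)$; after that the proof is essentially bookkeeping between $\varliminf$ and $\varlimsup$ on the same sequence $\{\delta_{n}\}$ and careful application of Theorem \ref{prop:3} and Remark \ref{rem:F}.
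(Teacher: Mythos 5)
Your proposal is correct and follows essentially the same route as the paper: the lower bound via Theorem \ref{prop:3}, the key covering estimate $N_{n}(F)\leq k\cdot N_{\delta_{n}}(F)$ for the upper bound, and Remark \ref{rem:F} (equivalently Theorem \ref{prop:3}(3)) to reverse the inequalities in part (2). The only difference is that you spell out the counting argument behind $N_{n}(F)\leq k\cdot N_{\delta_{n}}(F)$, which the paper merely asserts as an immediate consequence of the hypothesis.
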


\begin{proof}
In order to calculate the box-counting dimension of $F$, let $N_{\delta}(F)$ be as ($5$) on equivalent box-counting definition seen at preliminary section $2$, and let $\delta_{n}=\delta(F,\Gamma_{n})$ for all $n\in \mathbb{N}$.
%Consider the decreasing sequence whose general term is $\delta_{n}=\delta(F,\Gamma_{n})$, and let $N_{\delta}(F)$ be the smallest number of sets of diameter at most $\delta$ that cover $F$ in order to calculate the box-counting dimension of $F$.
\begin{enumerate}
\item First of all, note that by proposition \ref{prop:3} we have that $\underline{\dim}_{B}(F)\leq \underline{\dim}_{\ef}^{2}(F)$. On the other hand, the main hypothesis
%    consisting of the existence of a natural number $k$ such that for all $n\in \mathbb{N}$ it is verified that the set
%    $\{B\in \Gamma_{n}: A\cap B\neq \emptyset\}$
%    has at most $k$ elements, for all subset $A\subset X$ with $\delta(A)\leq \delta_{n}$,
    implies that $N_{n}(F)\leq k\ N_{\delta_{n}}(F)$ for all $n\in \mathbb{N}$, so it is clear that
    %\begin{equation}\label{eq:62}
%    \frac{\log N_{n}(F)}{-\log \delta_{n}}\leq \frac{\log k\ N_{\delta_{n}}(F)}{-\log \delta_{n}}
%    \end{equation}
%Now, it suffices with taking upper limits as $n\rightarrow \infty$ at expression \ref{eq:62} in order to get that follows:
%\begin{equation}
%\begin{split}\label{eq:63}
$\varlimsup_{n\rightarrow \infty}\frac{\log N_{n}(F)}{-\log \delta_{n}}
%&\leq \varlimsup_{n\rightarrow \infty}\frac{\log k+\log N_{\delta_{n}}(F)}{-\log \delta_{n}}\\
\leq\varlimsup_{n\rightarrow \infty}\frac{\log N_{\delta_{n}}(F)}{-\log \delta_{n}}
\leq \overline{\dim}_{B}(F)$.
%\end{split}
%\end{equation}

%Accordingly, we have that $\overline{\dim}_{\ef}^{2}(F)\leq \overline{\dim}_{B}(F)$.

\item Note that, by remark \ref{rem:F},
%\begin{equation}
$\overline{\dim}_{B}(F)=\varlimsup_{n\rightarrow \infty}\frac{\log
N_{\delta_{n}}(F)}{-\log \delta_{n}}$, which
%\end{equation}
implies $\overline{\dim}_{B}(F)\leq \overline{\dim}_{\ef}^{2}(F)$. Now, an application of the first item leads to the opposite inequality.
%The opposite inequality may be dealt with in the same way as \ref{eq:63}.
The case for lower limits may be dealt with in the same way.
\end{enumerate}
\end{proof}

The condition used on theorem \ref{teo:9} in order to get the equality between fractal dimension II and box-counting dimension is necessary, as the next remark shows:

\begin{obs} \label{obs:6}
There exists a self-similar set $K$ on the euclidean plane provided with its natural fractal structure $\ef$, for which
%it is not verified the condition consisting of the existence of a natural number $k$ such that for all $n\in \mathbb{N}$, any subset $A$ of $\mathbb{R}^{2}$ verifying that $\delta(A)\leq \delta(K,\Gamma_{n})$ meets at most to $3^{2}$ elements of the level $\Gamma_{n}$ of the fractal structure $\ef$, and such that
$\dim_{B}(K)\neq \dim_{\ef}^{2}(K)$.
\end{obs}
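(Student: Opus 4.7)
The plan is to produce an explicit IFS on the euclidean plane whose attractor is the unit square but whose natural fractal structure contains genuine overlaps, inflating the number of level-$n$ pieces beyond what is needed to cover $K$. Fractal dimension II is sensitive only to the cardinality of $\mathcal{A}_n(K)$ and to $\delta(K,\Gamma_n)$, and therefore cannot detect such redundancy, whereas the box-counting dimension can.

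Concretely, I would take $I=\{1,2,3,4,5\}$ and five similarities of ratio $1/2$ on $\mathbb{R}^2$: let $f_1,f_2,f_3,f_4$ be the four standard homotheties sending $[0,1]^2$ onto its four closed quadrants of side $1/2$, and let $f_5(x,y)=(x/2+1/4,\,y/2+1/4)$, whose image is the central sub-square $[1/4,3/4]^2$. Since $\bigcup_{i=1}^{4}f_i([0,1]^2)=[0,1]^2$ and $f_5([0,1]^2)\subseteq[0,1]^2$, the Hutchinson operator fixes $[0,1]^2$; by uniqueness of the attractor, $K=[0,1]^2$, and hence $K$ is a self-similar set on the plane for this IFS.

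Now $\dim_B(K)=2$ is immediate since $K$ is a full square. Using the convention (cf.\ the remark following the definition of a fractal structure) that the levels of the natural fractal structure are indexed by words in $I^n$ and so their elements are counted with multiplicity, we have $|\Gamma_n|=5^n$ and every element meets $K$, hence $N_n(K)=5^n$. Moreover each $f_\omega^n(K)$ is a square of side $2^{-n}$, so $\delta(K,\Gamma_n)=\sqrt{2}/2^n$. Substituting in Definition \ref{dimf:2} yields
\[
\dim_{\ef}^{2}(K)\,=\,\lim_{n\to\infty}\frac{\log 5^n}{-\log(\sqrt{2}/2^n)}\,=\,\frac{\log 5}{\log 2}\,>\,2\,=\,\dim_B(K).
\]

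The only delicate point is verifying that the redundant map $f_5$ does not enlarge the attractor, which is handled by the observation $f_5(K)\subseteq K$; everything else is a direct substitution into the definition. Note also that the hypothesis of Theorem \ref{teo:9} fails here precisely because, around the centre $(1/2,1/2)$, an arbitrarily large number of the pieces $f_\omega^n(K)$ cluster inside any ball of diameter $\delta(K,\Gamma_n)$, so no uniform $k$ can bound the multiplicity. This is consistent with the role the hypothesis plays in Theorem \ref{teo:9}, and confirms that it cannot be dropped.
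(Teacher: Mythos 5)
Your construction is genuinely different from the paper's, and the difference matters. The paper does not use redundant similarities at all: it takes $I=\{1,\dots,8\}$ and the affine (non-similar) contractions $f_i(x,y)=(\tfrac{x}{2},\tfrac{y}{4})+\cdots$, whose attractor is again the unit square but whose level-$n$ pieces are $8^n$ pairwise distinct rectangles of size $\tfrac{1}{2^n}\times\tfrac{1}{2^{2n}}$ with pairwise disjoint interiors. There $N_n(K)=8^n$ is unambiguous, $\delta(K,\Gamma_n)\sim 2^{-n}$, and $\dim_{\ef}^{2}(K)=3\neq 2$. The mechanism is anisotropy (thin tiles), not overlap, and this is deliberate: the paper later uses precisely this example to show that the \emph{similarity} hypothesis in Theorem \ref{bcfr} cannot be dropped. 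Your example, with five similarities of equal ratio violating the open set condition, would instead show that the OSC hypothesis is needed; both mechanisms are legitimate ways to establish Remark \ref{obs:6} as stated.

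The one genuine gap in your argument is the step $N_n(K)=5^n$. With your IFS, distinct words can produce \emph{identical} sets: for instance $f_1\circ f_4(K)=f_5\circ f_1(K)=[\tfrac14,\tfrac12]^2$, so $\Gamma_2$ contains at most $24$ distinct subsets, not $25$. Everything therefore hinges on reading $\Gamma_n$ as a family indexed by $I^n$ and $N_n$ as counting with multiplicity. The paper's remark allowing repeated elements in a level gives you some cover for this convention, but if $\mathcal{A}_n(K)$ is taken as a genuine set of subsets then the distinct squares at level $n$ sit on a grid of mesh $2^{-n-2}$ inside $[0,1]^2$, so $N_n(K)=O(4^n)$ and your limit collapses to $2=\dim_B(K)$, destroying the counterexample. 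You need either to state and justify the multiplicity convention explicitly, or (better) to perturb the construction so that all $5^n$ pieces are genuinely distinct; the paper's choice of non-similar contractions with disjoint-interior tiles avoids this issue entirely. Your closing observation that the uniform-$k$ hypothesis of Theorem \ref{teo:9} must fail is correct in spirit, but note that under the multiplicity reading the failure comes from unbounded multiplicities spread over the whole square, not from geometric clustering at the centre.
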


\begin{proof}
%We affirm that such a hypothesis is necessary, so that the result we have shown is not improvable in this sense.
Indeed, let $I=\{1,\ldots, 8\}$ be a finite index set and let $(K,\{f_{i}:i\in I\})$ be an IFS whose associated strict self-similar set is the unit square on the euclidean plane. Consider the contraction mappings $f_{i}:\mathbb{R}^{2}\rightarrow \mathbb{R}^{2}$ given as follows:
\begin{equation}
f_{i}(x,y)=\left\{
\begin{array}{ll}
\hbox{$(\frac{x}{2},\frac{y}{4})+(0,\frac{i-1}{4})$}\ \ldots & \hbox{$i=1,2,3,4$} \\
\hbox{$(\frac{x}{2},\frac{y}{4})+(\frac{1}{2},\frac{i-5}{4})$}\ \ldots & \hbox{$i=5,6,7,8$}
\end{array}
\right.
\end{equation}
Let $\ef$ be the natural fractal structure on $K$ as a self-similar set.\newline
First of all, it is clear that $K$ is a self-similar set which is not a strict one. Further, we have that the contractive mappings $f_{i}$ have the same contraction factors, namely $c_{i}=\frac{1}{2}$ for all $i\in I$. It is also immediate the fact that $\dim_{B}(K)=2$. On the one hand, note that there are $8^{n}$ rectangles on each level $\Gamma_{n}$ of the fractal structure $\ef$, whose dimensions are $\frac{1}{2^{n}}\times \frac{1}{2^{2n}}$. Thus, it is verified that $\delta(A)=\delta(K,\Gamma_{n})=\sqrt{\frac{1+2^{2n}}{2^{4n}}}$ for all $A\in \Gamma_{n}$. Hence, we can calculate the fractal dimension II of $K$ as follows:
%\begin{equation}
%\begin{split}
$\lim_{n\rightarrow \infty}\frac{\log N_{n}(K)}{-\log \delta(K,\Gamma_{n})}
=\lim_{n\rightarrow \infty}\frac{3n \log 2}{-\frac{1}{2}\log (\frac{1+2^{2n}}{2^{4n}})}
%&=\lim_{n\rightarrow \infty}\frac{3n \log 2}{-\frac{1}{2}\{\log (1+2^{2n})-\log 2^{4n}\}}\\
%&=\lim_{n\rightarrow \infty}\frac{3n \log 2}{-\frac{1}{2}\{2n \log 2-4n \log 2\}}\\
=\lim_{n\rightarrow \infty}\frac{3n \log 2}{n \log 2}
=3$.
%\end{split}
%\end{equation}
%On the other hand, we can find the relationship between $\delta(K,\Gamma_{n})$ and each side of the $\frac{1}{2^{n}}\times \frac{1}{2^{2n}}$-rectangles for all natural number $n$. Thus, we obtain that $\frac{\sqrt{\frac{1+2^{2n}}{2^{4n}}}}{\frac{1}{2^{2n}}}=\sqrt{1+2^{2n}}>2^{n}$, as well as $\frac{\sqrt{\frac{1+2^{2n}}{2^{4n}}}}{\frac{1}{2^{n}}}=\sqrt{1+\frac{1}{2^{2n}}}\geq \frac{1}{2^{n}}$ for all $n\in \mathbb{N}$. Therefore, each subset $A\subset K$ with diameter at most $\sqrt{\frac{2^{2n}+1}{2^{4n}}}$ is going to meet at most to $3\cdot 2^{n+1}$ elements $A\in \Gamma_{n}$ for all $n\in \mathbb{N}$. Accordingly, since this quantity depends on each natural number $n$, our main hypothesis on theorem \ref{teo:9} is not verified in this counterexample.
\end{proof}

%So far, we have seen that fractal dimensions I and II agree on the GF-space $\mathbb{R}^{d}$ equipped with the euclidean distance and its natural fractal structure $\ef$ (see Corollary \ref{prop:2}). In this way, it is an interesting question to determine some suitable conditions over the elements of a fractal structure $\ef$ in order to obtain the equality between fractal dimension methods I and II. In order to show a result of this kind, we recall when two positive-real-valued functions of natural variable have the same order:

%\begin{defn}
Let $f,g:\mathbb{N}\rightarrow \mathbb{R}^{+}$ be two sequences of positive real numbers. We say that $\mathcal{O}(f)=\mathcal{O}(g)$ iff they verify that
%\begin{equation}
$0<\lim_{n\rightarrow \infty}\frac{f(n)}{g(n)}<\infty$.
%\end{equation}
%\end{defn}
%The following lemma will be useful for our purposes:
%\begin{lema}\label{prop:4}
%Let $f,g:\mathbb{N}\rightarrow \mathbb{R}^{+}$ be two sequences of positive real numbers such that $\mathcal{O}(f)=\mathcal{O}(g)$, and suppose that there exists $\lim_{n\rightarrow \infty}\frac{h(n)}{f(n)}\in (0,\infty)$ with $h:\mathbb{N}\rightarrow \mathbb{R}^{+}$. Then there exists a constant $k\in (0,\infty)$ such that
%\begin{equation}
%\lim_{n\rightarrow \infty}\frac{h(n)}{f(n)}=k\cdot \lim_{n\rightarrow \infty}\frac{h(n)}{g(n)}
%\end{equation}
%\end{lema}
%
%\begin{proof}
%Indeed, the following is clear:
%\begin{equation}
%\begin{split}\label{eq:65}
%\lim_{n\rightarrow \infty}\frac{h(n)}{f(n)}\cdot \lim_{n\rightarrow \infty}\frac{f(n)}{g(n)}
%&=\lim_{n\rightarrow \infty} \Bigg(\frac{h(n)}{f(n)}\cdot \frac{f(n)}{g(n)}\Bigg)\\
%&=\lim_{n\rightarrow \infty}\frac{h(n)}{g(n)}
%\end{split}
%\end{equation}
%Note that $\lim_{n\rightarrow \infty}\frac{h(n)}{g(n)}\in (0,\infty)$, since $\mathcal{O}(f)=\mathcal{O}(g)$. Thus,
%\begin{equation}
%\lim_{n\rightarrow \infty}\frac{h(n)}{f(n)}=\lim_{n\rightarrow \infty}\frac{g(n)}{f(n)}\cdot \lim_{n\rightarrow \infty}\frac{h(n)}{g(n)}
%\end{equation}
%so it suffices with taking $k:=\lim_{n\rightarrow \infty}\frac{g(n)}{f(n)}\in (0,\infty)$.
%\end{proof}
In order to show that fractal dimensions I and II agree on a GF-space, we will need that all the elements on $\Gamma_{n}$ have a diameter of order $\frac{1}{2^{n}}$, as the next theorem establishes. %Indeed, if we think about the way on which we have built the fractal dimension I formula, it results natural to restrict our attention to those fractal structures $\ef$ whose elements have a diameter of this kind.
Its proof is left to the reader.
\begin{teo}\label{teo:8}
Let $\ef$ be a fractal structure on a metric space $(X,\rho)$. Let $F$ be a subset of $X$, and suppose that $\delta(A)=\delta(F,\Gamma_{n})$ for all $A\in \mathcal{A}_{n}(F)$ and $\mathcal{O}(\delta(F,\Gamma_{n}))=\mathcal{O}(\frac{1}{2^{n}})$ for all $n\in \mathbb{N}$. Then $\dim_{\ef}^{1}(F)=\dim_{\ef}^{2}(F)$ (the same is true for upper and lower dimensions).
%\begin{enumerate}
%\item $\overline{\dim}_{\ef}^{1}(F)=\overline{\dim}_{\ef}^{2}(F)$. \label{eq:51}
%\item $\underline{\dim}_{\ef}^{1}(F)=\underline{\dim}_{\ef}^{2}(F)$. \label{eq:52}
%\item If there exists any of the previous fractal dimensions, then \newline $\dim_{\ef}^{1}(F)=\dim_{\ef}^{2}(F)$. \label{eq:53}
%\end{enumerate}
\end{teo}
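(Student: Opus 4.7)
The plan is to reduce the equality of the two dimensions to a statement about their denominators. Both definitions share the numerator $\log N_{n}(F)$ and differ only in what appears below: $n\log 2$ for dimension I and $-\log\delta(F,\Gamma_{n})$ for dimension II. Thus it suffices to show that these two denominators are asymptotically equivalent, that is, $\frac{-\log\delta(F,\Gamma_{n})}{n\log 2}\to 1$ as $n\to\infty$.

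First I would unpack the hypothesis $\mathcal{O}(\delta(F,\Gamma_{n}))=\mathcal{O}(2^{-n})$ via the definition introduced immediately before the theorem: it gives a positive real number $L$ with $2^{n}\,\delta(F,\Gamma_{n})\to L$, and therefore constants $0<c_{1}\leq c_{2}$ and an index $n_{0}\in\mathbb{N}$ so that $c_{1}/2^{n}\leq \delta(F,\Gamma_{n})\leq c_{2}/2^{n}$ for every $n\geq n_{0}$. Taking logarithms and rearranging yields $n\log 2-\log c_{2}\leq -\log\delta(F,\Gamma_{n})\leq n\log 2-\log c_{1}$; dividing by $n\log 2$ and letting $n\to\infty$ produces the desired limit $1$.

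The conclusion then follows from the factorization $\frac{\log N_{n}(F)}{-\log\delta(F,\Gamma_{n})}=\frac{\log N_{n}(F)}{n\log 2}\cdot\frac{n\log 2}{-\log\delta(F,\Gamma_{n})}$. Since the second factor converges to a positive finite limit, the standard fact that multiplying a sequence by one converging to a positive limit preserves both upper and lower limits gives $\overline{\dim}_{\ef}^{1}(F)=\overline{\dim}_{\ef}^{2}(F)$ and $\underline{\dim}_{\ef}^{1}(F)=\underline{\dim}_{\ef}^{2}(F)$ simultaneously; the equality of the full limits (whenever either exists) is then immediate.

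The additional assumption that $\delta(A)=\delta(F,\Gamma_{n})$ for every $A\in\mathcal{A}_{n}(F)$ is essentially redundant for this argument, since $\delta(F,\Gamma_{n})$ is by definition the supremum of such diameters; the real content lies in the $\mathcal{O}$-condition on the overall scale. Accordingly, there is no substantive obstacle, which is consistent with the authors leaving the proof to the reader: the theorem is a routine manipulation of asymptotic ratios, with the only subtlety being the careful invocation of the definition of $\mathcal{O}(f)=\mathcal{O}(g)$ given just above the statement.
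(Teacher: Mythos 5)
Your argument is correct: the paper leaves this proof to the reader, and the routine computation you give --- using the definition of $\mathcal{O}(f)=\mathcal{O}(g)$ to get $\frac{-\log\delta(F,\Gamma_{n})}{n\log 2}\to 1$ and then factoring the quotient, noting $\log N_{n}(F)\geq 0$ so upper and lower limits are preserved --- is exactly the intended one. Your observation that the hypothesis $\delta(A)=\delta(F,\Gamma_{n})$ for all $A\in\mathcal{A}_{n}(F)$ is not actually used (only the supremum $\delta(F,\Gamma_{n})$ enters Definition \ref{dimf:2}) is also accurate.
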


Recall that we have been able to calculate an upper bound for the box-counting dimension of every self-similar set equipped with its natural fractal structure in terms of its fractal dimension II.
However, it is possible to reach the equality between these two quantities under certain conditions on the self-similar set structure. In particular, we will get the result if the elements on each level of $\ef$ do not overlap too much, and because of the shape that the elements of the natural fractal structure on a self-similar set have, this restriction is going to be associated with the contractions $f_{i}$ of the corresponding IFS. Indeed, this property is the so called open set condition (see \cite{FAL90} and \cite{SCH94}):

\begin{defn}
Let $I=\{1,\ldots, m\}$ be a finite index set with $(X,\{f_{i}:i\in I\})$ being an IFS whose associated self-similar set is $K$. The contractions $f_{i}$ satisfy the open set condition iff there exists a non-empty bounded open set $V$ of $X$ such that $\bigcup_{i\in I}f_{i}(V)\subset V$, with $f_{i}(V)\cap f_{j}(V)=\emptyset$ for all $i\neq j$. Furthermore, if $V\cap K\neq \emptyset$, then the contractions $f_{i}$ are said to verify the strong open set condition.
\end{defn}

\begin{obs}\label{obs:4}
Open set condition and strong open set condition are equivalent in any euclidean space (see \cite{SCH94}).
\end{obs}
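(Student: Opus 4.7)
The plan is to adapt Schief's classical argument (1994), which establishes the chain OSC $\Leftrightarrow$ $\mathcal{H}^s(K)>0$ $\Leftrightarrow$ SOSC in $\mathbb{R}^d$, with the Euclidean hypothesis essential in the middle equivalence. The direction SOSC $\Rightarrow$ OSC is immediate from the definitions, so only OSC $\Rightarrow$ SOSC requires work.

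\textbf{Step 1: the Moran lemma in $\mathbb{R}^d$.} Let $c_i$ be the contraction ratio of $f_i$ and let $s$ be the similarity dimension, i.e.\ the unique solution of $\sum_{i\in I} c_i^s=1$. Fix a bounded open set $V$ witnessing OSC: $\bigcup_i f_i(V)\subset V$ and $f_i(V)\cap f_j(V)=\emptyset$ for $i\neq j$. Iterating yields pairwise disjoint families $\{f_\omega(V):|\omega|=n\}$ for every $n$, and a standard contraction argument shows $K\subset \overline{V}$. Push the Bernoulli $(c_i^s)_{i\in I}$ probability measure on $I^{\mathbb{N}}$ down to a measure $\mu$ on $K$ via the canonical projection. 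Using that $V$ contains some Euclidean ball $B(y_0,\eta)$ and that the iterates $f_\omega(V)$ are pairwise disjoint, a Lebesgue volume comparison bounds the number of iterates of scale $c_\omega\asymp r$ that can meet a given Euclidean ball $B(x,r)$. This yields an Ahlfors upper regularity bound $\mu(B(x,r))\leq C r^s$, and hence $\mathcal{H}^s(K)\geq \mu(K)/C>0$.

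\textbf{Step 2: from $\mathcal{H}^s(K)>0$ to SOSC.} Suppose for contradiction $V\cap K=\emptyset$. Since $K\subset \overline{V}$, this forces $K\subset \partial V$, and by iteration $f_\omega(V)\cap K=\emptyset$ for every finite word $\omega$. Pick $x\in K$ at which the lower $s$-density of $\mathcal{H}^s\lfloor K$ is positive, which exists by the Moran lemma and standard density theory. For each $n$, select $\omega^{(n)}$ with $|\omega^{(n)}|=n$ and $x\in f_{\omega^{(n)}}(K)$; then $f_{\omega^{(n)}}(V)$ is an open set of diameter $\asymp c_{\omega^{(n)}}$ sitting near $x$ yet disjoint from $K$. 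Applying $f_{\omega^{(n)}}^{-1}$ (a similarity) rescales this configuration to one where a fixed open set of positive $\mu$-mass inside the attractor is forced to be disjoint from $K$, which contradicts the self-similar distribution $\mu(f_\omega(K))=c_\omega^s$ together with $\mu(K)=1$. Thus $V\cap K\neq\emptyset$, i.e.\ SOSC holds.

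The main obstacle is Step 1: the Moran-type lower bound $\mathcal{H}^s(K)>0$ is where the Euclidean hypothesis genuinely does the work — the analogous statement fails in general metric spaces, which is exactly why the remark restricts to Euclidean spaces. Once Ahlfors regularity of the Bernoulli measure $\mu$ is secured, the production of a contradiction in Step 2 is measure-theoretic bookkeeping built on density-point arguments and the similarity invariance of the problem.
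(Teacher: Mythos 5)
The paper does not prove this remark at all: it is stated as a known fact with a citation to Schief's 1994 paper, so your attempt has to be measured against Schief's argument itself. Your overall strategy --- SOSC $\Rightarrow$ OSC trivially, OSC $\Rightarrow \mathcal{H}^{s}(K)>0$ by the Moran volume-comparison argument, and $\mathcal{H}^{s}(K)>0 \Rightarrow$ SOSC --- is the correct chain, and your Step 1 is sound and correctly locates where the Euclidean hypothesis does the work.

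The gap is in Step 2, and it is both logical and substantive. Logically, SOSC is an existential statement (``\emph{some} feasible open set meets $K$''), and it is not true that \emph{every} open set witnessing OSC must meet $K$: a feasible $V$ with $V\cap K=\emptyset$ can coexist with SOSC (already in the degenerate case of a single map $f(x)=x/2$, with $K=\{0\}$, the set $V=(0,1)$ is feasible and misses $K$ while $U=(-1,1)$ witnesses SOSC). Hence assuming ``$V\cap K=\emptyset$ for the given $V$'' and hunting for a contradiction cannot succeed; the proof must \emph{construct} a new feasible open set meeting $K$. Substantively, the contradiction you describe evaporates on inspection: pulling $f_{\omega^{(n)}}(V)$ back by the similarity $f_{\omega^{(n)}}^{-1}$ returns exactly $V$, and $\mu(V)=0$ (since $\mu$ is supported on $K$ and $V\cap K=\emptyset$) is perfectly consistent with $\mu(f_{\omega}(K))=c_{\omega}^{s}$ and $\mu(K)=1$, so no open set ``of positive $\mu$-mass disjoint from $K$'' ever materializes. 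The implication $\mathcal{H}^{s}(K)>0\Rightarrow$ SOSC is precisely the nontrivial content of Schief's theorem; his proof manufactures the required open set from the measure via an analysis of the neighbour maps $f_{\omega}^{-1}\circ f_{u}$ for incomparable words $\omega,u$ together with a Besicovitch-type covering and density argument, none of which appears in your sketch. As written, your argument establishes OSC $\Rightarrow \mathcal{H}^{s}(K)>0$ but not the equivalence claimed in the remark.
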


%\begin{obs}
%We have that $\overline{A}=\bigcap_{\varepsilon> 0}B(A,\varepsilon)$, for all $A\subset X$, with $(X,d)$ being a generic metric space. In fact, $x\in \bigcap_{\varepsilon> 0}B(A,\varepsilon)$ if and only if $d(x,A)< \varepsilon$ for all $\varepsilon>0$, which is equivalent to the condition $d(x,A)=0$, namely, $x\in \overline{A}$.\hfill $\cqd$
%\end{obs}

%The following result is a technical lemma which is necessary in order to reach the equality between fractal dimension II and box-counting dimension for a self-similar set (on an euclidean space) provided with its natural fractal structure:
\begin{lema}\label{lema1}
Let $I=\{1,\ldots, m\}$ be a finite index set and let $(\mathbb{R}^{d},\{f_{i}:i\in I\})$ be an IFS whose associated self-similar set is $K$. Suppose that $f_{i}$ are injective contractions verifying the open set condition for all $i\in I$. Then, there exists $\varepsilon>0$ and $x\in K$ such that for all natural number $n\in \mathbb{N}$ and all $\omega,u\in I^{n}$ with $\omega\neq u$, we have that $f_{\omega}(B(x,\varepsilon))\cap f_{u}(B(x,\varepsilon))=\emptyset$, where $B(x,\varepsilon)$ denotes the ball centered at $x\in X$ with radius $\varepsilon>0$, which is embedded in $\mathbb{R}^{d}$.
\end{lema}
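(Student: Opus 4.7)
The plan is to exploit the open set condition together with injectivity of the maps to localize a ball that stays disjoint from its translates under all distinct words of a given length. The route goes: (i) upgrade the open set condition to the strong open set condition so we can place our ball inside both the OSC witness $V$ and the attractor $K$, and (ii) verify disjointness of the cylinder images $f_\omega(V)$ by induction on the word length.

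First I would invoke remark \ref{obs:4}: since we are working in $\mathbb{R}^{d}$, the open set condition is equivalent to the strong open set condition, so we obtain a non-empty bounded open set $V\subset\mathbb{R}^{d}$ with $f_i(V)\subset V$ for every $i\in I$, $f_i(V)\cap f_j(V)=\emptyset$ for $i\neq j$, and moreover $V\cap K\neq\emptyset$. Pick any $x\in V\cap K$; since $V$ is open, choose $\varepsilon>0$ small enough so that $B(x,\varepsilon)\subset V$. This fixes the candidate $x$ and $\varepsilon$ claimed by the lemma.

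Next I would prove by induction on $n\in\mathbb{N}$ that $f_\omega(V)\cap f_u(V)=\emptyset$ for every pair $\omega,u\in I^{n}$ with $\omega\neq u$, where $f_\omega=f_{\omega_1}\circ\cdots\circ f_{\omega_n}$. The base case $n=1$ is precisely the OSC. For the inductive step, write $\omega=\omega_1\,\omega'$ and $u=u_1\,u'$ with $\omega',u'\in I^{n}$. If $\omega_1\neq u_1$, then since $f_{\omega'}(V)\subset V$ and $f_{u'}(V)\subset V$ by the iterated inclusion $f_i(V)\subset V$, we have $f_\omega(V)\subset f_{\omega_1}(V)$ and $f_u(V)\subset f_{u_1}(V)$, and these are disjoint by the OSC. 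If $\omega_1=u_1$, then necessarily $\omega'\neq u'$; here injectivity of $f_{\omega_1}$ lets me pull the intersection inside,
\begin{equation*}
f_\omega(V)\cap f_u(V)=f_{\omega_1}\bigl(f_{\omega'}(V)\bigr)\cap f_{\omega_1}\bigl(f_{u'}(V)\bigr)=f_{\omega_1}\bigl(f_{\omega'}(V)\cap f_{u'}(V)\bigr),
\end{equation*}
which is empty by the inductive hypothesis.

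To conclude, the inclusion $B(x,\varepsilon)\subset V$ gives $f_\omega(B(x,\varepsilon))\subset f_\omega(V)$ for every finite word $\omega$, so the disjointness of the $f_\omega(V)$'s transfers to the balls and yields the claim. The main obstacle here is the inductive step when the two words share their first symbol; this is precisely where the injectivity assumption is used in an essential way, because without injectivity one cannot conclude $f_{\omega_1}(A)\cap f_{\omega_1}(B)=f_{\omega_1}(A\cap B)$, and the case split would collapse. The passage from OSC to SOSC via remark \ref{obs:4} is also crucial, since otherwise the attractor $K$ might miss $V$ entirely and no $x\in K\cap V$ would be available to anchor the ball.
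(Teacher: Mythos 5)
Your proof is correct and follows essentially the same route as the paper's: upgrade the open set condition to the strong open set condition via Remark \ref{obs:4}, anchor a ball $B(x,\varepsilon)\subset V$ at a point $x\in V\cap K$, and induct on word length with the same two-case split (shared versus distinct leading symbol), using injectivity exactly where the paper does. The only cosmetic difference is that you run the induction on the sets $f_\omega(V)$ and transfer disjointness to the balls at the end, whereas the paper inducts directly on $f_\omega(B)$ and needs a small auxiliary induction to show $f_\omega(B)\subset V$; your variant is marginally cleaner but not a different argument.
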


\begin{proof}
Since $K\subset \mathbb{R}^{d}$ is a self-similar set, then we have that open set condition is equivalent to strong open set condition by remark \ref{obs:4}. Thus, there exists a non-empty bounded open set $O\subset \mathbb{R}^{d}$ such that \newline $\cup_{i\in I}f_{i}(O)\subset O$, with $f_{i}(O)\cap f_{j}(O)=\emptyset$ for all $i,j\in I$ such that $i\neq j$. Moreover, $O\cap K\neq \emptyset$, so we can take $x\in O\cap K\subset O$. Since $O$ is an open set, there exists $\varepsilon> 0$ such that $B=B(x,\varepsilon)\subset O$. We also know that $f_{i}(B)\subset O$ for all $i\in I$. The result will be shown by induction over the lenght of the words on $I^{n}$. In fact, let $\omega_{n}=i_{n}i_{n-1}\ldots i_{1}\in I^{n}$ and denote $f_{\omega_{n}}(F)=f_{i_{n}}\circ f_{i_{n-1}}\circ \ldots \circ f_{i_{1}}(F)$ for all $F\subset \mathbb{R}^{d}$. First of all, take $n=1$. Then we can select $\omega_{1}=i_{1}, u_{1}=j_{1}\in I$. Accordingly, we have that $f_{\omega_{1}}(B)\cap f_{u_{1}}(B)\subset f_{\omega_{1}}(O)\cap f_{u_{1}}(O)=\emptyset$, since the $f_{i}$ contractions verify the open set condition for all $i\in I$.
Suppose now that $f_{\omega_{n}}(B)\cap f_{u_{n}}(B)=\emptyset$, with $\omega_{n},u_{n}\in I^{n}$ such that $\omega_{n}\neq u_{n}$. Then, we distinguish the two following cases:
\begin{enumerate}
\item Suppose that $i_{n+1}=j_{n+1}$, so that $\omega_{n+1}=i_{n+1}\ i_{n}\ \ldots \ i_{1}$ and $u_{n+1}=i_{n+1}\ j_{n}\ \ldots \ j_{1}$, with $\omega_{n+1},u_{n+1}\in I^{n+1}$. Thus, using the injectivity of $f_{i_{n+1}}$ and the induction hypothesis, we conclude that $f_{\omega_{n+1}}(B)\cap f_{u_{n+1}}(B)=f_{i_{n+1}}(f_{\omega_{n}}(B))\cap f_{i_{n+1}}(f_{u_{n}}(B))=\emptyset$.

\item Suppose that $\omega_{n+1}\neq u_{n+1}$, so that $\omega_{n+1}=i_{n+1}\ i_{n}\ \ldots \ i_{1}$ and $u_{n+1}=j_{n+1}\ j_{n}\ \ldots \ j_{1}$, with $i_{n+1}\neq j_{n+1}$. Then, it is clear the following chain of inclusions:
%\begin{equation}\label{eq:65}
%\begin{split}
$f_{w_{n+1}}(B)\cap f_{u_{n+1}}(B)=f_{i_{n+1}}(f_{\omega_{n}}(B))\cap f_{j_{n+1}}(f_{u_{n}}(B))\subset f_{i_{n+1}}(O)\cap f_{j_{n+1}}(O)=\emptyset$.
%\end{split}
%\end{equation}
Note that $f_{\omega_{n}}(B)\subset O$ for all $\omega_{n}\in I^{n}$. Indeed, it is clear for words whose length is $1$, since it is obvious that $f_{i_{1}}(B)\subset O$. Suppose now that $f_{\omega_{n}}(B)\subset O$ for all $\omega_{n}\in I^{n}$. Then, it is verified that $f_{\omega_{n+1}}(B)=f_{i_{n+1}}(f_{\omega_{n}}(B))\subset f_{i_{n+1}}(O)\subset O$.
\end{enumerate}
\end{proof}

On \cite[Theorem 9.3]{FAL90} we find an interesting result which allows to calculate the box-counting dimension of a certain class of self-similar sets on the euclidean space $\mathbb{R}^{d}$ as the solution of a simple equation which only involves the contraction factors associated with each mapping $f_{i}$ of the corresponding IFS. Indeed, under the open set condition hypothesis, box-counting dimension agrees with Hausdorff dimension of such self-similar sets, and moreover, this value can be easily calculated from the mentioned expression. In this way, it would be an interesting result to reach the equality between box-counting dimension and fractal dimension II of a self-similar set whose contractions $f_{i}$ verify the open set condition. Moreover, the calculus of such quantity would become almost immediate from the number of contractive mappings of the IFS and its contraction factors.

Indeed, by means of the previous lemma, we present now the expected theorem.

\begin{teo}\label{bcfr}
Let $I=\{1,\ldots, m\}$ be a finite index set and let $(\mathbb{R}^{d},\{f_{i}:i\in I\})$ be an IFS whose associated strict self-similar set is $K$. Suppose that the similarities $f_{i}$ verify the open set condition and have equal similarity factors $c\in (0,1)$. Let $\ef$ be the natural fractal structure on $K$ as a self-similar set. Then,
\begin{equation}\label{eq:73}
\dim_{B}(K)=\dim_{\ef}^{2}(K)=\frac{-\log m}{\log c}
\end{equation}
\end{teo}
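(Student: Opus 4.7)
The plan is to compute $\dim^{2}_{\ef}(K)$ directly from the definition, and then sandwich $\dim_{B}(K)$ between two bounds that both equal $\tfrac{-\log m}{\log c}$, one coming from the author's Corollary \ref{cor:8}(3) and the other from Lemma \ref{lema1}.

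First, I would unfold the natural fractal structure on $K$: the level $\Gamma_{n}$ consists of the $m^{n}$ pieces $f_{\omega}^{n}(K)$ with $\omega\in I^{n}$, each of which is contained in $K$, so $\mathcal{A}_{n}(K)=\Gamma_{n}$ and $N_{n}(K)=m^{n}$ (counted with multiplicity, as allowed by the definition of a fractal structure). Because every $f_{i}$ is a similarity with ratio $c$, each composition $f_{\omega}^{n}$ is a similarity with ratio $c^{n}$, so $\delta(f_{\omega}^{n}(K))=c^{n}\,\delta(K)$ for every $\omega\in I^{n}$. Consequently $\delta(K,\Gamma_{n})=c^{n}\delta(K)$, and plugging into Definition \ref{dimf:2} gives
\begin{equation*}
\dim^{2}_{\ef}(K)=\lim_{n\to\infty}\frac{\log m^{n}}{-\log(c^{n}\delta(K))}=\frac{-\log m}{\log c},
\end{equation*}
where $\delta(K)>0$ (the case $m=1$ is trivial and excluded by the non-degenerate setting).

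Next, since similarities are bilipschitz and the IFS is strict, Corollary \ref{cor:8}(3) applies to $F=K$, yielding $\overline{\dim}_{B}(K)\le\overline{\dim}^{2}_{\ef}(K)=\tfrac{-\log m}{\log c}$, so one direction of the box-counting equality is free.

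For the reverse inequality, which is the main obstacle, I would use Lemma \ref{lema1}: the OSC produces $\varepsilon>0$ and $x\in K$ such that the balls $f_{\omega}^{n}(B(x,\varepsilon))$, $\omega\in I^{n}$, are pairwise disjoint. Since each $f_{\omega}^{n}$ is a similarity of ratio $c^{n}$, these are $m^{n}$ pairwise disjoint open balls of radius $c^{n}\varepsilon$, whose centers $f_{\omega}^{n}(x)$ all lie in $K$ (because $x\in K$ and $K$ is invariant). Using the equivalent box-counting definition \ref{d:6} — the largest number of disjoint balls of radius $\delta$ centered on $F$ — this forces $N_{c^{n}\varepsilon}(K)\ge m^{n}$, and by Remark \ref{rem:F} applied to the geometric sequence $\delta_{n}=c^{n}\varepsilon$,
\begin{equation*}
\underline{\dim}_{B}(K)\ge\liminf_{n\to\infty}\frac{\log m^{n}}{-\log(c^{n}\varepsilon)}=\frac{-\log m}{\log c}.
\end{equation*}
Combining this with the upper bound from Corollary \ref{cor:8}(3), $\dim_{B}(K)$ exists and equals $\tfrac{-\log m}{\log c}$, which matches $\dim^{2}_{\ef}(K)$ and establishes \eqref{eq:73}.
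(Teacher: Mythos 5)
Your proposal is correct, and while it uses the same two ingredients as the paper (Corollary \ref{cor:8}(3) for $\overline{\dim}_{B}(K)\leq \overline{\dim}_{\ef}^{2}(K)$, and the disjoint balls of Lemma \ref{lema1}), it assembles them differently and ends up more self-contained. The paper uses Lemma \ref{lema1} to prove $\overline{\dim}_{\ef}^{2}(K)\leq \overline{\dim}_{B}(K)$ via $N_{n}(K)\leq m^{n}\leq N_{\varepsilon_{n}}(K)$ and $\delta(K,\Gamma_{n})=k\,\varepsilon_{n}$, obtaining the sandwich $\underline{\dim}_{B}(K)\leq \underline{\dim}_{\ef}^{2}(K)\leq \overline{\dim}_{\ef}^{2}(K)\leq \overline{\dim}_{B}(K)$; it then has to invoke \cite[Theorem 9.3]{FAL90} twice, first to know that $\dim_{B}(K)$ exists (which collapses the chain) and again to identify the common value as $-\log m/\log c$. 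You instead compute $\dim_{\ef}^{2}(K)$ directly from Definition \ref{dimf:2} (using $N_{n}(K)=m^{n}$, which is legitimate under the paper's multiset convention for levels, and $\delta(K,\Gamma_{n})=c^{n}\delta(K)$), and use Lemma \ref{lema1} in the opposite direction, as a packing lower bound $N_{c^{n}\varepsilon}(K)\geq m^{n}$ for $\underline{\dim}_{B}(K)$ via definition (\ref{d:6}) and Remark \ref{rem:F}. This yields both the existence of $\dim_{B}(K)$ and the explicit value without any appeal to Falconer's theorem, which is a genuine simplification; the only points to make airtight are the justification that $\delta(K)>0$ (immediate for $m\geq 2$, since Lemma \ref{lema1} already gives $m^{n}$ distinct points of $K$) and the convention that $N_{n}(K)$ counts the $m^{n}$ indexed pieces $f_{\omega}^{n}(K)$, which is exactly how the paper counts in its own examples.
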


\begin{proof}
In order to calculate the box-counting dimension of $K$, let $N_{\delta}(K)$ be as ($6$) on the equivalent box-counting definition seen at preliminary section $2$, and let $\delta_{n}=\delta(K,\Gamma_{n})$ for all $n\in \mathbb{N}$.
First of all, corollary \ref{cor:8}.\ref{cor:83} implies that $\overline{\dim}_{B}(K)\leq \overline{\dim}_{\ef}^{2}(K)$. Next, we are going to show the opposite inequality. Note that $\delta_{n}=c^{n}\ \delta(K)$ for all natural number $n$, since $K$ is a strict self-similar set.
Applying lemma \ref{lema1}, there are so many disjoint balls with radius $\varepsilon_{n}=c^{n}\ \varepsilon$ with $\varepsilon> 0$, and centered in $K$, as the number of elements of $I^{n}$. Now, since $N_{\varepsilon_{n}}(K)$ is the largest number of such balls, it is obvious that the number of elements of $I^{n}$ is at most $N_{\varepsilon_{n}}(K)$, namely,
%\begin{equation}
$N_{n}(K)\leq N_{\varepsilon_{n}}(K) \label{f:1}$.
%\end{equation}
On the other hand, there exists $k>0$ such that
%\begin{equation}
$\delta(K,\Gamma_{n})= k\ \varepsilon_{n} \label{f:2}$.
%\end{equation}
Indeed, it suffices with taking $k=\frac{\delta(K)}{\varepsilon}$.
Therefore, it results clear that
%\begin{equation}
%\frac{\log N_{n}(K)}{-\log \delta(K, \Gamma_{n})}\leq \frac{\log N_{\varepsilon_{n}}(K)}{-\log k\ \varepsilon_{n}} \label{f:3}
%\end{equation}
%so taking upper limits when $k\rightarrow \infty$ at expression \ref{f:3}, we have that
%\begin{equation}
%\begin{split}
$\varlimsup_{n\rightarrow \infty}\frac{\log N_{n}(K)}{-\log \delta(K, \Gamma_{n})}
\leq \varlimsup_{n\rightarrow \infty}\frac{\log N_{\varepsilon_{n}}(K)}{-\log k\ \varepsilon_{n}}
=\varlimsup_{n\rightarrow \infty}\frac{\log N_{\varepsilon_{n}}(K)}{-\log \varepsilon_{n}}
\leq \overline{\dim}_{B}(K)$.
%\end{split}
%\end{equation}
%namely, $\overline{\dim}_{\ef}^{2}(K)\leq \overline{\dim}_{B}(K)$, as desired.
Now, we get the next chain of inequalities:
%\begin{equation}
%\begin{split}
$\underline{\dim}_{B}(K)\leq \underline{\dim}_{\ef}^{2}(K)\leq \overline{\dim}_{\ef}^{2}(K) \leq \overline{\dim}_{B}(K)$,
%\end{split}
%\end{equation}
where the first inequality is by corollary \ref{cor:8}.
Now, the existence of the box-counting dimension of $K$ implies the existence of the fractal dimension II of $K$ and the expected equality $\dim_{B}(K)=\dim_{\ef}^{2}(K)$.
Furthermore, apply \cite[Theorem 9.3]{FAL90} in order to get the last equality on \ref{eq:73}.
%In order to obtain the last equality, namely, $\dim_{\ef}^{2}(K)=\frac{-\log m}{\log c}$, it suffices with applying \cite[Theorem 9.3]{FAL90}.
\end{proof}

The hypothesis based on the equality of the contraction factors on theorem \ref{bcfr} is necessary.
By the example in remark \ref{obs:6}, the contractions have to be similarities, while by the next example, all contraction factors must be the same.

\begin{obs}\label{eq:71}
There exists a strict self-similar set $K$, whose similarities $f_{i}$ verify the open set condition and have different contraction factors $c_{i}$, such that $\dim_{B}(K)< \dim_{\ef}^{2}(K)$.
\end{obs}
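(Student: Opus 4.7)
The plan is to exhibit an explicit IFS in $\mathbb{R}$ with two similarities of different contraction factors, satisfying the open set condition, whose attractor $K$ realizes the strict inequality. The guiding observation is that the natural fractal structure on $K$ has exactly $m^n$ pieces at level $n$ (one for each word in $I^n$, counted with multiplicity, since levels are multisets), while the largest diameter on level $n$ is $(c_{\max})^n\,\delta(K)$, where $c_{\max}=\max_i c_i$. Consequently
\[
\dim_{\ef}^{2}(K)=\lim_{n\to\infty}\frac{\log m^{n}}{-\log\bigl((c_{\max})^{n}\delta(K)\bigr)}=\frac{-\log m}{\log c_{\max}},
\]
which uses \emph{only} the largest factor, whereas the box-counting dimension is the Moran exponent $s$ solving $\sum_{i=1}^{m}c_i^{s}=1$. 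For non-equal $c_i$, strict convexity of the Moran equation makes $s<-\log m/\log c_{\max}$.

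Concretely, I would take $I=\{1,2\}$ and the similarities $f_{1}(x)=x/2$, $f_{2}(x)=x/4+3/4$ on $\mathbb{R}$, with contraction factors $c_{1}=1/2$, $c_{2}=1/4$. To verify OSC I would use $V=(0,1)$ and check directly that $f_{1}(V)=(0,1/2)$ and $f_{2}(V)=(3/4,1)$ are disjoint open subsets of $V$, so OSC holds and $K\subseteq[0,1]$ is a strict self-similar set with $\delta(K)=1$.

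Next I would compute $\dim_{\ef}^{2}(K)$: since $\ef$ is the natural fractal structure, $N_{n}(K)=2^{n}$ and $\delta(K,\Gamma_{n})=(1/2)^{n}$ (the maximum being attained by the word $\omega=(1,1,\ldots,1)$, since $f_{\omega}$ is a similarity of ratio $2^{-n}$ applied to $K$). This immediately gives $\dim_{\ef}^{2}(K)=1$. Then I would compute $\dim_{B}(K)$ via Moran's formula $(1/2)^{s}+(1/4)^{s}=1$; substituting $y=2^{-s}$ yields $y+y^{2}=1$, hence $y=(\sqrt{5}-1)/2$ and $s=\log_{2}\!\bigl((\sqrt{5}+1)/2\bigr)\approx 0.694$. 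Comparison gives $\dim_{B}(K)<1=\dim_{\ef}^{2}(K)$, as required.

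There is no real obstacle here; the only subtle point is justifying $\delta(K,\Gamma_{n})=(c_{\max})^{n}\delta(K)$ (rather than some smaller number), which follows because $f_{(1,\ldots,1)}(K)=(1/2)^{n}K$ is a rescaled copy of $K$ of diameter exactly $(1/2)^{n}\delta(K)$, and $(1/2)^n$ is the largest among the products $c_{\omega_1}\cdots c_{\omega_n}$. Everything else is a direct calculation using Moran's equation (\cite[Theorem 9.3]{FAL90}) together with the closed-form expression for $\dim_{\ef}^{2}$ derived above.
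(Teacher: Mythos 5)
Your proposal is correct and coincides with the paper's own argument: the paper uses exactly the same IFS ($f_{1}(x)=x/2$, $f_{2}(x)=(x+3)/4$, i.e.\ $x/4+3/4$), the same open set $V=(0,1)$ for the open set condition, the same count $N_{n}(K)=2^{n}$ with $\delta(K,\Gamma_{n})=2^{-n}$ giving $\dim_{\ef}^{2}(K)=1$, and the same Moran equation $2^{-s}+4^{-s}=1$ giving $\dim_{B}(K)=\log_{2}\bigl((1+\sqrt{5})/2\bigr)<1$. Your added justification that the maximal diameter is attained on the word $(1,\ldots,1)$ is a reasonable explicit supplement to what the paper leaves implicit.
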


\begin{proof}
Indeed, we can take the IFS $(\mathbb{R},\{f_{i}:i\in \{1,2\}\})$, whose associated self-similar set is $K$, where the similarities $f_{1}, f_{2}:\mathbb{R}\rightarrow \mathbb{R}$ are given by $f_{1}:x\mapsto \frac{x}{2}$, and $f_{2}:x\mapsto \frac{x+3}{4}$. It is clear that their associated contraction factors are $c_{1}=\frac{1}{2}$ and $c_{2}=\frac{1}{4}$ respectively, and it is also obvious the fact that $K$ is a strict self-similar set. We can also justify that the similarities $f_{i}$ satisfy the open set condition on an easy way: it suffices with taking $V=(0,1)$ as a suitable open set. Now, let $\ef$ be the natural fractal structure on $K$ as a self-similar set. Then, by \cite[Theorem 9.3]{FAL90}, we can calculate the box-counting dimension of $K$ as the solution of the equation $\frac{1}{2^{s}}+\frac{1}{4^{s}}=1$. Thus, we have that $\dim_{B}(K)=\frac{\log (\frac{1+\sqrt{5}}{2})}{\log 2}$.
On the other hand, it is clear that there are $2^{n}$ subintervals of $[0,1]$ on each level $\Gamma_{n}$ of the fractal structure $\ef$, where the largest of them has diameter equal to $\frac{1}{2^{n}}$ for all natural number $n\in \mathbb{N}$, which implies that $\dim_{\ef}^{2}(K)=1>\dim_{B}(K)$.
\end{proof}
% The underlying result on remark \ref{eq:71} implies that theorem \ref{bcfr} is not improvable from a theoretical point of view, in the sense that our fractal dimension II definition does not take into account the different size of the sets which appears on each level $\Gamma_{n}$ of the fractal structure $\ef$ under consideration.

\section{An application to the domain of words}\label{sec:6}
The goal of this section consists of providing a variety of applications of the fractal dimension introduced in this paper, where the box-counting dimension cannot be computed. Indeed, we show that the use of fractal structures in the context of a domain of words could contribute some information about the complexity of a language. In this way, we study and show how to calculate the fractal dimension of a language generated by means of a regular expression, and then we introduce an empirical application consisting of computing the fractal dimension of a natural language. In both cases, we show and explain the obtained results. Finally, we present how fractal structures and dimensions can help us in order to determine the efficiency of a system of information encoding like the BCD.
We start with some preliminary topics.
\subsection{Fractal structures and domains of words}
The domain of words, which we introduce next, appears when modeling the streams
of information in Kahn's model of parallel computation (see \cite{KAH74},
\cite{MAT94}).\newline
Indeed, let $\Sigma$ be a finite non-empty alphabet (set) and let $\Sigma^{\infty}$ be
the collection of finite ($\cup_{n \in \mathbb{N}} \Sigma^n$) and infinite
($\Sigma^{\mathbb{N}}$) sequences (called words) over $\Sigma$. Let us denote by
$\varepsilon$ to the empty word.\newline
The prefix order $\sqsubseteq$ is defined on $\Sigma^{\infty}$, as usual, by $x
\sqsubseteq y$ iff $x$ is a prefix of $y$. Thus, for each $x \in \Sigma^{\infty}$,
let $l(x)$ be the length of $x$, where $l(\varepsilon)=0$, and for $x,y \in
\Sigma^{\infty}$, we denote by $x \sqcap y$ to the common prefix of $x$ and $y$. Hence,
a (non-archimedean) quasi-metric $d$ can be defined on $\Sigma^{\infty}$ by
$d(x,y)=0$ if $x \sqsubseteq y$, and $d(x,y)=2^{-l(x \sqcap y)}$, in other case
(see \cite{SMI88}).\newline
Furthermore, the non-archimedean quasi-metric $d$ induces a fractal structure on
$\Sigma^{\infty}$ (see \cite{SG99}) which can be described as follows:
$\ef=\{\Gamma_n:n
\in \mathbb{N} \}$, where its levels are given by
\begin{equation}\label{eq:82}
\Gamma_n=\{w^{\#}:w \in \Sigma^n\} \cup \{w^{\sqsubseteq}:w \in \Sigma^k: k<n\}
\end{equation}
Note that for $w \in
\Sigma^n$, $w^{\sqsubseteq}=\{u \in \Sigma^k:k \leq n$ and $u \sqsubseteq w\}$ is the
collection of prefixes of $w$, and $w^{\#}=\{wu:u \in \Sigma^{\infty} \} \cup
w^{\sqsubseteq}$ is the collection of words (finite or infinite) which start
with $w$ or are a prefix of $w$.
Moreover, for each $w \in \Sigma^n$, we have that $w^{\#}=B_{d^{-1}}(w,2^{-n})$, and for each
$w \in \Sigma^k$ with $k<n$, $w^{\sqsubseteq}=B_{d^{-1}}(w,2^{-n})$.\newline
Then, a language $L$ is defined as a subspace of $\Sigma$, and usually it is defined by means of a formal grammar.
In particular, we can consider the languages generated by regular expressions.
%A simple example could be the language generates by a regular expression.\newline
Now, since we have a fractal structure and a quasi-metric, then we can calculate the
fractal dimension of any language.

%We show it in the following example.
\subsection{The fractal dimension of a language generated by a regular expression}
%\begin{ejem}\label{ejem:2}
Consider the regular expression $(00+1)^{+}$, that
is constructed by concatenating consecutively (at least one
time) $00$ and $1$. Our main purpose consists of computing the fractal dimension of the
previous language $L\subset \Sigma^{\infty}$ generated by means of the previous regular expression. In this way, we are going to apply the fractal dimension I model. First of all, note that $L$ can be described as the
following set:
$$L=\{1,00,11,100,001,111,0000,0011,1001,1100,1111,\ldots\}$$
Let also $\ef$ be the fractal structure induced by the non-archimedean quasi-metric
$d$ given at \ref{eq:82}. Then, we have that $\ef=\{\Gamma_{n}:n\in \mathbb{N}\}$,
where, for instance, the first levels are given as follows:
$\Gamma_{1}=\{1^{\#},0^{\#}\}$,
$\Gamma_{2}=\{10^{\#},11^{\#},00^{\#},1^{\sqsubseteq}\}$,
$\Gamma_{3}=\{000^{\#},001^{\#},100^{\#},111^{\#},110^{\#},00^{\sqsubseteq},\newline
11^{\sqsubseteq},1^{\sqsubseteq}\}$,
and so on.
In order to calculate the fractal dimension of $L$, note that $N_{1}(L)=2$,
$N_{2}(L)=3+1$, $N_{3}(L)=5+3$, $N_{4}(L)=8+6$, $N_{5}(L)=13+11$, and so on,
where the first term on each sum refers to the number of elements of the appropiate level of the
form $w^{\#}$, and the second one to the number of elements of each level of the
form $w^{\sqsubseteq}$. Hence, if we consider $\{f_{n}\}_{n\in
\mathbb{N}}$ as the Fibonacci's sequence, where $f_{1}=f_{2}=1$, with
$f_{n}=f_{n-1}+f_{n-2}$ for all $n\geq 3$, then it can be checked what follows:
\begin{equation}
N_{n}(L)=f_{n+2}+\sum_{i=2}^{n}f_{i}
\end{equation}
for all $n\in \mathbb{N}$. In this way, since the Fibonacci's sequence verifies the next property:
\begin{equation}
\sum_{i=1}^{n}f_{i}=f_{n+2}-1 \ \text{for all}\ n\geq 2
\end{equation}
then $N_{n}(L)=2\
(f_{n+2}-1)$ for all natural number $n$. Furthermore, it is known that
$f_{n}=\frac{\varphi^{n}-\beta^{n}}{\varphi-\beta}$ for all $n\in \mathbb{N}$,
where $\varphi=\frac{1+\sqrt{5}}{2}$ (the \emph{golden ratio}) and $\beta=\frac{1-\sqrt{5}}{2}$.
Accordingly, we obtain the fractal dimension I of $L$ as follows:
\begin{equation}\label{eq:74}
\begin{split}
\lim_{n\rightarrow \infty}\frac{\log N_{n}(L)}{n\log 2}
&=\lim_{n\rightarrow \infty}\frac{\log f_{n+2}}{n\log 2}
=\lim_{n\rightarrow \infty}\frac{\log (\varphi^{n+2}-\beta^{n+2})}{n\log 2}=\\
&=\lim_{n\rightarrow \infty}\frac{\log \Big(1-\Big(\frac{\beta}{\varphi}\Big)^{n+2}\Big)+(n+2)
\log \varphi}{n\log 2}=\log_{2}\varphi
\end{split}
\end{equation}
Hence, we have that the fractal dimension of $L$ is related with
the golden ratio. Note that the result we have just got in expression
\ref{eq:74} implies that $N_{n}(L)\simeq \varphi^{n}$ for all natural number
$n$, which is equivalent to $N_{n+1}(L)\simeq \varphi\cdot N_{n}(L)$. Roughly
speaking, we have that for any word of length $n$, there are about $\varphi$
words of length $n+1$. On the other hand, since
$\dim_{\ef}^{1}(\Sigma^{\infty})=1$, it is clear that
$N_{n+1}(\Sigma^{\infty})\simeq 2\cdot N_{n}(\Sigma^{\infty})$ for all $n\in
\mathbb{N}$. Thus, fractal dimension I constitutes a register about the
evolution and complexity of the language $L$ with respect to the domain of words
$\Sigma^{\infty}$ where it has been defined.

Moreover, since $\delta(L,\Gamma_{n})=\frac{1}{2}$ for all $n\in
\mathbb{N}$, then we have that $\dim_{\ef}^{2}(L)=\infty$, so that the fractal
dimension II method does not provide information about the evolution of the
language $L$ with respect to the domain of words.

Fractal dimension II will be useful in order to describe the complexity of a
language when the related fractal structure $\ef$ is starbase. In this way, note
that the fractal structure induced by the (non-archimedean) quasi-metric $d$ is
not starbase. Indeed, suppose the opposite. Then, $\Sigma^{\infty}$ would be a
metrizable space (see \cite{SG02B}), which implies that $\Sigma^{\infty}$ is
$T_{1}$, that constitutes a contradiction. Now, by means of proposition \ref{prop:7},
if the distance function is compatible with the fractal structure, we have that
$\delta(\Gamma_{n})\nrightarrow 0$ as $n\rightarrow \infty$. Further, if
$N_{n}(L)\rightarrow \infty$ as $n\rightarrow \infty$ (as it occurs in this
example), we get that $\dim_{\ef}^{2}(L)=\infty$.
However, this disadvantage can be improved by means of the next remark.

\begin{obs}
Let $\ef^{'}=\{\Gamma_{n}^{'}:n\in \mathbb{N}\}$ be the fractal structure whose levels are given
by $\Gamma_{n}^{'}=\{w^{\#}:w\in \Sigma^{n}\}$ for all natural number $n$. Note that this fractal structure constitutes a
simplification of the previous $\ef$. Furthermore, we have that $N_{n}^{'}(L)=f_{n+2}$, as
well as $\delta(L,\Gamma_{n})=\frac{1}{2^{n}}$ for all $n\in \mathbb{N}$, which
implies that $\dim_{\ef^{'}}^{2}(L)=\log_{2}\varphi=\dim_{\ef}^{1}(L)$, so that
fractal dimension II provides the same information about $L$ than fractal dimension I in this case.
\end{obs}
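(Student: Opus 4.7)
The plan is to verify in turn the three numerical assertions $N_n'(L)=f_{n+2}$, $\delta(L,\Gamma_n')=1/2^{n}$, and $\dim_{\ef'}^{2}(L)=\log_2\varphi$, and then match the resulting value against $\dim_{\ef}^{1}(L)$ computed in \ref{eq:74}.

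First I would identify the elements of $\Gamma_n'$ that meet $L$. An element $w^{\#}\in\Gamma_n'$ intersects $L$ precisely when $w\in\Sigma^{n}$ is a prefix of some word produced by $(00+1)^{+}$, the case $w\in L$ being a subcase. A length-$n$ admissible prefix either begins with $1$ followed by a length-$(n-1)$ admissible prefix or with $00$ followed by a length-$(n-2)$ admissible prefix, so $a_n := N_n'(L)$ satisfies the Fibonacci recurrence $a_{n+2}=a_{n+1}+a_n$ with initial data $a_1=2$ and $a_2=3$; hence $a_n=f_{n+2}$. Equivalently, $N_n'(L)$ is exactly the first summand $f_{n+2}$ appearing in the decomposition $N_n(L)=f_{n+2}+\sum_{i=2}^{n}f_i$ computed earlier for $\ef$, since that summand was defined as the count of $w^{\#}$-type elements meeting $L$.

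Second, I would check the diameter. For $w\in\Sigma^{n}$, the identification $w^{\#}=B_{d^{-1}}(w,2^{-n})$, together with the non-archimedean property of $d$, forces every element of $\Gamma_n'$ to have diameter equal to $2^{-n}$, the value being attained by pairs of words whose maximal common prefix is exactly $w$. Taking the supremum over $A\in\mathcal{A}_n'(L)$ thus yields $\delta(L,\Gamma_n')=2^{-n}$.

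Substituting these into Definition \ref{dimf:2} gives
\[
\dim_{\ef'}^{2}(L)=\lim_{n\to\infty}\frac{\log f_{n+2}}{-\log 2^{-n}}=\lim_{n\to\infty}\frac{\log f_{n+2}}{n\log 2},
\]
which is exactly the limit evaluated in \ref{eq:74} via Binet's formula $f_n=(\varphi^{n}-\beta^{n})/\sqrt{5}$; its value is $\log_2\varphi$, and this coincides with $\dim_{\ef}^{1}(L)$ as asserted. The main point worth highlighting is the contrast with $\ef$: removing the $w^{\sqsubseteq}$-pieces from each level is precisely what causes $\delta(L,\Gamma_n')$ to contract geometrically to $0$ instead of stagnating at $1/2$, and this is the source of the disparity between $\dim_{\ef}^{2}(L)=\infty$ and the finite value $\log_2\varphi$ recovered for the simplified structure; accordingly the bookkeeping that separates the $w^{\#}$-contributions from the $w^{\sqsubseteq}$-contributions in the original count is the only non-trivial step, after which the limit computation reduces to one already performed.
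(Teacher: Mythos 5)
Your proposal is correct and follows essentially the same route as the paper: the remark is asserted there without a separate argument, resting on the earlier decomposition $N_{n}(L)=f_{n+2}+\sum_{i=2}^{n}f_{i}$ (whose first summand is by construction exactly $N_{n}^{'}(L)$), on the observation that the surviving elements $w^{\#}$ with $l(w)=n$ have diameter $2^{-n}$, and on the limit already evaluated in expression \ref{eq:74}. Your independent derivation of the Fibonacci recurrence for admissible prefixes, and your closing explanation that discarding the $w^{\sqsubseteq}$-pieces is precisely what makes $\delta(L,\Gamma_{n}^{'})\rightarrow 0$ and hence turns the infinite value of $\dim_{\ef}^{2}(L)$ into the finite value $\log_{2}\varphi$, are faithful expansions of what the paper leaves implicit.
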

%Another interesting application of our fractal dimension models is described as follows:

\subsection{An empirical application to natural languages}
Although fractal dimensions can be computed for languages described by means
of regular expressions, where words of infinite length could exist, it is also
possible to determine the fractal dimension of other languages, like
the natural ones. In this way, since fractal dimensions are not
only theoretical but also empirical quantities, we are going to calculate them
with respect to the fractal structure $\ef$ given at \ref{eq:82}, though it is also possible to work with the fractal structure $\ef^{'}$.
Nevertheless, since always exists a maximum value for the length of any
word in a natural language, it is clear that
$\dim_{\ef}^{1}(N)=\dim_{\ef^{'}}^{1}(N)=0$ where
$N$ denotes to any natural language.

Taking it into account, for a practical application, we are going to calculate the fractal dimension of a given natural language $N$ by means of
the slope of the regression line obtained by comparing $\log N_{n}(N)$ versus $\log 2^{n}$. In order to do this, we are going to count the number of words with a given length which appears on the relative dictionary to the natural language $N$. Note that in this case we are computing $N_{n}(N)$ for $n\in \{1,2,3,4\}$, which leads to a suitable approximation of the fractal dimension of $N$. This quantity provides information about
the complexity and evolution of that natural language, by
means of the length and the variety its words present (in this case, specially
the words with small length).
%In order to calculate the fractal dimension in this case, we calculate the
%linear regression of $\log N_n(L)$ against $n \log 2$.
Indeed, we have calculated
the fractal dimension of a wide list of natural languages and got the results which appear in the table \ref{table:1}.
%(the value from fractal dimension is the number which appears in brackets):
%Polish, German (3.1), U.K. English (3.0), U.S. English, Italian, Hungarian,
%Latin (2.9), Dutch, Danish, Africans, Czech (2.8), Swedish, Spanish, Ukrainian
%(2.7), Russian, Croatian, Slovenian, Portuguese, Mongolian, Romanian (2.6),
%Serbian, French, Slovak (2.5) and Bulgarian (2.4).

\begin{table}[h]
\begin{tabular}{||l | c||}
\hline
\hline
\emph{Natural Language} & \textbf{Fractal dimension} \\
\hline
German, Polish & 3.1\\
\hline
U.K. English & 3.0\\
\hline
U.S. English, Hungarian, Italian, Latin & 2.9\\
\hline
Africans, Czech, Danish, Dutch & 2.8\\
\hline
Spanish, Swedish, Ukrainian & 2.7\\
\hline
Croatian, Mongolian, Portuguese, & \\ Romanian, Russian, Slovenian & 2.6\\
\hline
French, Serbian, Slovak & 2.5\\
\hline
Bulgarian & 2.4\\
\hline
\hline
\end{tabular}
\caption{Fractal dimensions for natural languages, calculated respect to the fractal structure $\ef$ induced by the quasi-metric $d$ (see \ref{eq:82}).}\label{table:1}
\end{table}

The obtained results allows to compare the complexity between two given natural languages. For instance,
in figure \ref{fig:1} we can regard the difference of the fractal complexity
between English and Spanish. In this way, note that English has a larger variety of
small prefixes than Spanish, although the number of prefixes of a given
length increases faster in Spanish than in English for medium length words.

\begin{center}
\begin{figure*}[here]
\begin{tabular}{c}
\includegraphics[width=115mm, height=80mm]{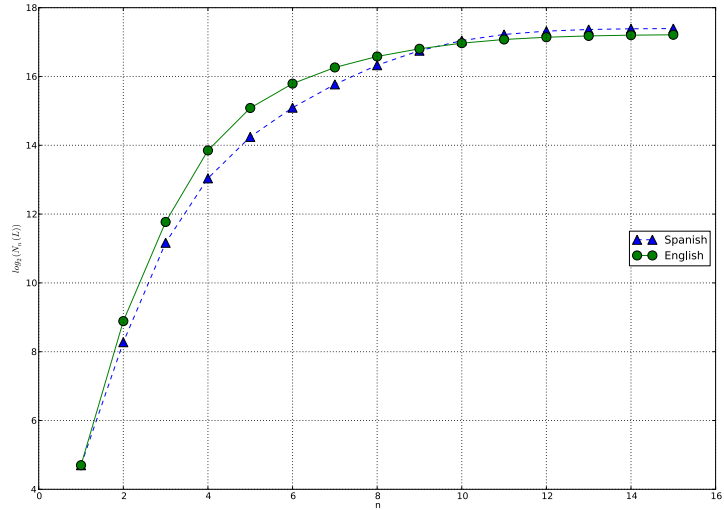}
%\\ \includegraphics[scale=0.4]{figures/graficasROC3} &
%\includegraphics[scale=0.4]{figures/graficasROC4}
\end{tabular}
\caption{Graph of the fractal complexity comparison between the languages English and Spanish.}\label{fig:1}
\end{figure*}
\end{center}

The fractal dimension models developed in this paper
allows also to compare the complexity of any text written in a specific language.
Indeed, since the fractal dimension of a language has been calculated by means
of its corresponding dictionary, an effective calculation of the fractal
dimension of a book written on that language gives us some interesting information
about the variety of the words it uses. Moreover, it is also possible to perform a fractal complexity comparison between a given translation of a book and its original version.

With this in mind, next we show some examples where the fractal dimension of some books has been calculated. In order to do this, we have considered the set which contains all the words used in the corresponding book and chosen the fractal structure $\ef$ defined in \ref{eq:82}. The effective calculation of the fractal dimension has been done by taking into account the four first levels of the fractal structure.
Thus, we show the fractal dimension of the selected books as well as the fractal dimension of that book's language (the latter is the number which appears into brackets), which gives information about the complexity of the given translations.

For instance, the fractal dimension of the English version of the
book \emph{On the Origin of Species} (Charles Darwin, $1859$) is $2.2\ (3.0)$,
while the fractal dimension of its French version is about $2.3\ (2.6)$. On the
other hand, the fractal dimension of the English version of the text
\emph{Alice's Adventures in Wonderland} (Lewis Carroll, $1865$) is $1.8\ (3.0)$,
while its Spanish version has the value $2.0\ (2.7)$. Note that a classical
book like \emph{Don Quijote de la Mancha} (Miguel de Cervantes, $1605$) has a
fractal dimension of $2.5\ (2.7)$ in its original Spanish version, while its
English translation is about $2.3\ (3.0)$, and for instance, the dimension of
its Dutch version is $2.2\ (2.9)$.

\subsection{The fractal dimension as a tool to study the efficiency of an encoding system}
So far, we have shown two applications of our models for determining the fractal
dimension where box-counting dimension has no sense. The first one results
interesting since it consists on a language generated by means of a
regular expression where infinite length words could exist. The
second one consists of an empirical application where we use the fractal dimension as a
tool in order to study the complexity and evolution of a natural language. Now, we are
going to show a computational application of our models to an interesting method
for coding and decoding information from other languages.

The binary-coded decimal (which we denote by BCD for short) is a method for encoding
decimal numbers which represents each decimal digit by means of its binary
sequence. Although nowadays is used with less frequency in some applications, it
results useful in computer and electronic systems (which specially consists only
of digital logic without microprocessors) in order to display or print decimal
numbers. One of its advantages is that BCD allows faster decimal calculations.
Nevertheless, it is not an efficient encoding method, since it uses more
space than a simple binary representation.

The BCD system stores each decimal digit from $0$ to $9$ by means of $4$ bits
which contains its binary sequence. In this way, the set
\begin{equation}
S=\{0000, 0001, 0010,
0011,0100, 0101, 0110, 0111, 1000, 1001\}
\end{equation}
generates the language $B$ of
\emph{binary-coded decimal numbers}. For instance, note that $1100\notin B$, since it
is not a codification of any decimal digit.
For example, the number $215$ (in $10$-base), is converted to an expression like
$0010\ 0001\ 0101$. Indeed, note that its binary representation is $11010111$,
which needs less space to be stored than by means of the BCD codification.

Fractal dimension constitutes a measure of the computational efficiency of a language which consists of the encoding of another language, like the BCD system. In this way, next we extract some conclusions related to the encoding goodness of the latter method. Let $\ef$ be the fractal structure defined at \ref{eq:82}, though it is also possible to consider its simplified version $\ef^{'}$. First of all, note that
\begin{equation}
N_{n}(L)=\left\{
\begin{array}{ll}
\hbox{$2\cdot 10^{i}$}\ \ldots & \hbox{$n=4i$} \\
\hbox{$3\cdot 10^{i}$}\ \ldots & \hbox{$n=4i+1$}\\
\hbox{$4\cdot 10^{i}$}\ \ldots & \hbox{$n=4i+2$}\\
\hbox{$6\cdot 10^{i}$}\ \ldots & \hbox{$n=4i+3$}\\
\end{array}
\right.
\end{equation}
which leads to $\dim_{\ef}^{1}(L)=\dim_{\ef^{'}}^{1}(L)=\frac{\log 10}{\log
16}=\log_{2}\sqrt[4]{10}$. Therefore, we have that $N_{n}(L)\simeq
10^{\frac{n}{4}}$ for all $n\in \mathbb{N}$, which implies that
$N_{n+1}(L)\simeq \sqrt[4]{10}\cdot N_{n}(L)$ for all natural number $n$. Thus,
given a real number with $n$ decimal digits, we need $4n$ binary digits in order
to encode it, which allows to represent $10^n$ different numbers on the
BCD system, while in the binary representation, $4n$ binary digits allows to
represent $2^{4n}$ numbers. Therefore, we can calculate the efficiency of BCD
when encoding decimal numbers, by means of the next ratio:
\begin{equation}
\frac{N_{n}(L)}{N_{n}(\Sigma^{\infty})}=\frac{10^{\frac{n}{4}}}{2^{n}}=\Bigg(\frac{
10}{16}\Bigg)^{\frac{n}{4}}
\end{equation}
For instance, take a decimal number with $n=10$
digits, which needs $4n=40$ bits to be encoded. Accordingly,
\begin{equation}
\frac{N_{40}(L)}{N_{40}(\Sigma^{\infty})}=\Bigg(\frac{10}{16}\Bigg)^{10}\simeq
\frac{10^{10}}{10^{12}}=0.01
\end{equation}
so that, BCD encodes the $1 \%$ of the possible
numbers encoded directly in binary. Indeed, there is a lack of the order of $62.5 \%$
(that is, $(\frac{10}{16})^{\frac{1}{4}}$) for each encoded digit.

% ----------------------------------------------------------------

\end{document}